\newcommand{\R}{\mathbb{R}}
\newcommand{\Q}{\mathbb{Q}}
\newcommand{\N}{\mathbb{N}}
\newtheorem{theorem}{Theorem}
\newtheorem{lemma}{Lemma}
\newtheorem{claim}{Claim}
\newtheorem{conjecture}{Conjecture}
\begin{document}

\title{The Complexity of Counting Eulerian Tours in 4-Regular Graphs\thanks{Research supported, in part, by NSF grant CCF-0910584.
This paper is an extension of the previous work presented at the $9$th Latin American Theoretical Informatics Symposium (LATIN 2010)~\cite{DBLP:conf/latin/GeS10}.}
}

\author{Qi Ge\thanks{ Department of Computer Science, University
of Rochester, Rochester, NY 14627.  Email: \{qge,stefanko\}@cs.rochester.edu.} \and Daniel
\v{S}tefankovi\v{c}$^{\dag}$}

\maketitle

\begin{abstract}
We investigate the complexity of counting Eulerian tours ({\sc
\#ET}) and its variations from two perspectives---the complexity
of exact counting and the complexity w.r.t.
approximation-preserving reductions
(AP-reductions~\cite{MR2044886}). We prove that {\sc \#ET} is
\#P-complete even for planar $4$-regular graphs.

A closely related
problem is that of counting A-trails ({\sc \#A-trails}) in graphs
with rotational embedding schemes (so called maps).
Kotzig~\cite{MR0248043} showed that {\sc \#A-trails} can be
computed in polynomial time for $4$-regular plane graphs
(embedding in the plane is equivalent to giving a rotational
embedding scheme). We show that for $4$-regular maps the problem
is \#P-hard. Moreover, we show that from the approximation
viewpoint {\sc \#A-trails} in $4$-regular maps captures the
essence of {\sc \#ET}, that is, we give an AP-reduction from {\sc
\#ET} in general graphs to {\sc \#A-trails} in $4$-regular maps.
The reduction uses a fast mixing result for a card shuffling
problem~\cite{MR2023023}.

In order to understand whether \#{\sc A-trails} in $4$-regular
maps can AP-reduce to \#{\sc ET} in $4$-regular graphs, we
investigate a problem in which transitions in vertices
are weighted (this generalizes both \#{\sc A-trails} and
\#{\sc ET}). In the $4$-regular case we show that {\sc A-trails}
can be used to simulate any vertex weights and provide evidence
that {\sc ET} can simulate only a limited set of vertex weights.
\end{abstract}

\section{Introduction}

An Eulerian tour in a graph is a tour which travels each edge
exactly once. The problem of counting Eulerian tours ({\sc \#ET}) of
a graph is one of a few recognized counting problems (see, e.\,g.,
\cite{MR1851303}, p. 339). The exact counting is
\#P-complete in general graphs~\cite{DBLP:conf/alenex/BrightwellW05} and in planar graphs~\cite{creedthesis}, and thus there
is no polynomial-time algorithm for it unless P$=$NP. For the
approximate counting one wants to have a fully polynomial randomized
approximation scheme (FPRAS), that is, an algorithm which on every
instance $x$ of the problem and error parameter $\varepsilon
> 0$, will output a value
within a factor $\exp(\pm\varepsilon)$ of $f(x)$ with probability at
least $2/3$ and in time polynomial in the length of the encoding of
$x$ and $1/\varepsilon$, where $f(x)$ is the value we want to compute.
The existence of an FPRAS for {\sc \#ET} is an open
problem~\cite{MR1822924,Jerrum,MR1851303}.

A closely related problem to {\sc \#ET} is the problem of counting
A-trails ({\sc \#A-trails}) in graphs with rotational embedding
schemes (called maps, see Section~\ref{sec:def} for a definition).
A-trails were studied in the context of decision problems (for
example, it is NP-complete to decide whether a given plane graph
has an A-trail~\cite{MR905181,MR1331474}; on the other hand for
4-regular maps the problem is in P~\cite{dvorak-eulerian}), as
well as counting problems (for example, Kotzig~\cite{MR0248043}
showed that {\sc \#A-trails} can be computed in polynomial time
for $4$-regular plane graphs, reducing the problem to
counting of spanning trees).

In this paper, we investigate the complexity of {\sc \#ET} in 4-regular graphs
and its variations from two perspectives. First, the complexity of exact counting is considered.
We prove that {\sc \#ET} in 4-regular graphs (even in 4-regular planar
graphs) is \#P-complete. We also prove that {\sc \#A-trails} in 4-regular
maps is \#P-complete (recall that
the problem can be solved in polynomial time for 4-regular {\em plane} graphs).

The second perspective is the complexity w.r.t. the AP-reductions
proposed by Dyer, Goldberg, Greenhill and Jerrum~\cite{MR2044886}.
We give an AP-reduction from {\sc \#ET} in general graphs
to {\sc \#A-trails} in 4-regular maps. Thus we show that if there is an FPRAS for {\sc \#A-trails}
in 4-regular maps, then there is also an FPRAS for {\sc \#ET} in
general graphs. The existence of AP-reduction from {\sc \#ET} in general graphs to {\sc
\#ET} in 4-regular graphs is left open.

In order to understand whether \#{\sc A-trails} in $4$-regular maps can AP-reduce to
\#{\sc ET} in $4$-regular graphs, we investigate the so-called signatures (these count
connection patterns of trails in graphs with half-edges, see Section~\ref{sec:signature}
for the formal definition) of $4$-regular map gadgets
and $4$-regular graph gadgets. It seems that the signatures represented by $4$-regular map gadgets form a
proper superset of the set of signatures represented by $4$-regular graph gadgets. Moreover, it seems that
the signature of a single vertex in $4$-regular maps cannot be simulated approximately by $4$-regular graph gadgets.

\section{Definitions and Terminology}\label{sec:def}

For the definitions of cyclic orderings, A-trails, and mixed
graphs, we follow~\cite{MR1055084}. Let $G=(V,E)$ be a graph. For
a vertex $v \in V$ of degree $d > 0$, let $K(v)=\{e_1,\ldots,e_d\}$
be the set of edges adjacent to $v$ in $G$. The \emph{cyclic ordering}
$O^+(v)$ of the edges adjacent to $v$ is a $d$-tuple
$(e_{\sigma(1)},\ldots,e_{\sigma(d)})$, where $\sigma$ is a
permutation in $S_d$. We say $e_{\sigma(i)}$ and $e_{\sigma(i+1)}$
are cyclicly-adjacent in $O^+(v)$, for $1 \leq i \leq d$, where we
set $\sigma(d+1):=\sigma(1)$. The set $O^+(G)=\{O^+(v)|v \in
V\}$ is called a \emph{rotational embedding scheme} of~$G$. For a
plane graph $G=(V,E)$, if $O^+(v)$ is not specified, we usually
set $O^+(v)$ to be the clockwise order of the half-edges adjacent
to $v$ for each $v \in V$.

Let $G=(V,E)$ be a graph with a rotational embedding scheme
$O^+(G)$. An Eulerian tour $v_0,e_1,v_1,e_2,\ldots,e_\ell,v_\ell=v_0$
is called an \emph{A-trail} if $e_i$ and $e_{i+1}$ are cyclicly-adjacent
in $O^+(v_i)$, for each $1 \leq i \leq \ell$, where we set
$e_{\ell+1}:=e_1$.

Let $G=(V,E,E')$ be a mixed graph, that is, $E$ is the set of edges
and $E'$ is the set of half-edges (which are incident with only one
vertex in $V$). Let $|E'|=2d$ where $d$ is a positive integer and
assume that the half-edges in $E'$ are labelled by numbers from $1$
to $2d$. A \emph{route} $r(a,b)$ is a trail (no repeated edges,
repeated vertices allowed) in $G$ that starts with half-edge $a$ and
ends with half-edge $b$. A collection of $d$ routes is called {\em
valid} if every edge and every half-edge is travelled exactly once.

We say that a valid set of routes
is of the {\em type} $\{\{a_1,b_1\},\ldots,\{a_{d},b_{d}\}\}$
if it contains routes connecting $a_i$ to $b_i$ for $i\in [d]$.
We use ${\rm VR}(\{a_1,b_1\},\ldots,\{a_{d},b_{d}\})$ to denote the set of valid
sets of routes of type $\{\{a_1,b_1\},\ldots,\{a_{d},b_{d}\}\}$ in
$G$.

We will use the following concepts from Markov chains to construct
the gadget in Section~\ref{sec:app} (see, e.\,g., ~\cite{MR1960003}
for more detail).
Given two probability distributions $\pi$ and $\pi'$ on finite
set $\Omega$, the \emph{total variation distance} between $\pi$ and
$\pi'$ is defined as
$$\|\pi-\pi'\|_{TV}=\frac{1}{2}\sum_{\omega \in \Omega}|\pi(\omega)-\pi'(\omega)|=\max_{A \subseteq \Omega}|\pi(A)-\pi'(A)|.$$

Given a finite ergodic Markov chain with transition matrix $P$ and
stationary distribution $\pi$, the \emph{mixing time} from initial
state $x$, denoted as $\tau_x(\varepsilon)$, is defined as
$$\tau_x(\varepsilon)=\min\{t:\|P^t(x,\cdot)-\pi\|_{TV} \leq \varepsilon\},$$
and the mixing time of the chain $\tau(\varepsilon)$ is defined as
$$\tau(\varepsilon)=\max_{x \in \Omega}\{\tau_x(\varepsilon)\}.$$

\section{The complexity of exact counting}\label{sec:exact}

\subsection{Basic gadgets}\label{sec:gadget}

We describe two basic gadgets and their properties which will be
used as a basis for larger gadgets in the subsequent sections.

The first gadget, which is called the $(X,Y,Y)$ node, is shown in
Figure~\ref{fig:node_xyy}, and it is represented by the symbol shown
in Figure~\ref{fig:represent_xyy}. There are $k$ internal vertices
in the gadget, and the labels 0, 1, 2 and 3 are four half-edges of
the $(X,Y,Y)$ node which are the only connections from the outside.

\begin{figure}[htb]
\begin{center}
 \subfigure[]{\label{fig:node_xyy}\includegraphics[scale=0.7]{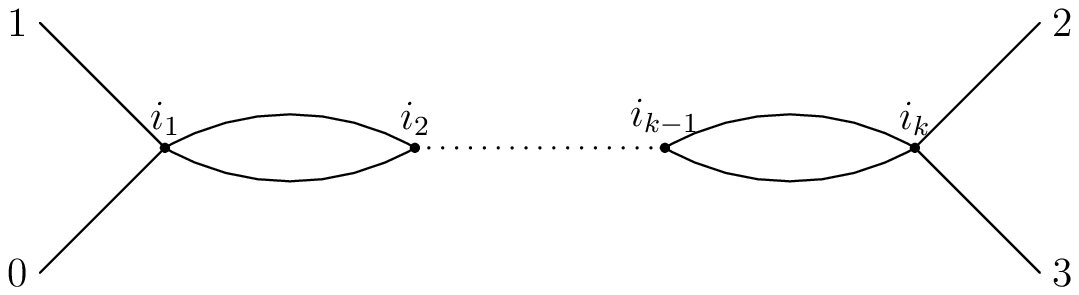}}\hspace{1cm}
 \subfigure[]{\label{fig:represent_xyy}\includegraphics[scale=0.4]{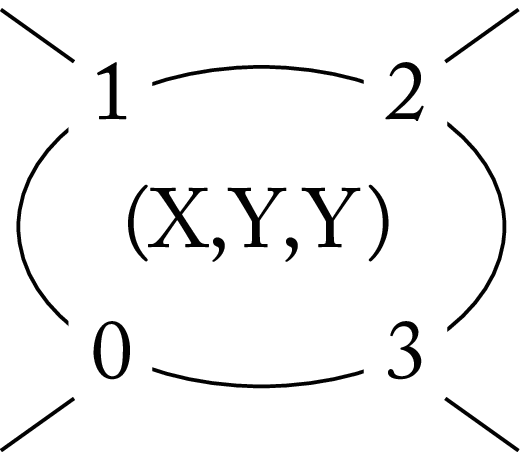}}
\end{center}
\caption{An $(X,Y,Y)$ node and its symbol.
\subref{fig:node_xyy}: an $(X,Y,Y)$ node consisting of $k$ internal
vertices; \subref{fig:represent_xyy}: symbol representing the
$(X,Y,Y)$ node}
\end{figure}

By elementary counting we obtain the following fact.
\begin{lemma}\label{prop:node_xyy}
The $(X,Y,Y)$ node with parameter $k$ has three different types of valid sets of routes
and these satisfy
\begin{eqnarray*}
|{\rm VR}(\{0,1\},\{2,3\})|&=&k 2^{k-1},\\
|{\rm VR}(\{0,2\},\{1,3\})|=|{\rm VR}(\{0,3\},\{1,2\})|&=& 2^{k-1}.
\end{eqnarray*}
The gadget has $k$ vertices.
\end{lemma}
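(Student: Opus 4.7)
The plan is to prove the three counts by a direct combinatorial enumeration that exploits the chain-like structure of the $(X,Y,Y)$ node. At each of the $k$ internal $4$-regular vertices there are exactly three local transition systems (pairings of the four incident edges into two pairs that specify how a trail ``goes through'' the vertex), so my first step is to list, at each internal vertex, which transition systems are compatible with each prescribed external route type and how a local choice propagates to the neighboring internal vertices.

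For the two symmetric ``through'' types $\{\{0,2\},\{1,3\}\}$ and $\{\{0,3\},\{1,2\}\}$, each of the two routes must enter one end of the gadget and exit the other, so it traverses every internal vertex. At each internal vertex only two of the three local transition systems are consistent with this crossing pattern, giving an independent binary choice per vertex; a single global consistency condition at the two ends then pins down one parity, leaving $2^{k-1}$ valid configurations and hence the claimed count $2^{k-1}$ for each of the two symmetric types.

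For the type $\{\{0,1\},\{2,3\}\}$, the two routes are confined to opposite ends of the chain and must meet and reverse at some single internal vertex, which I would call the \emph{turning vertex}. I would show that the turning vertex may be chosen to be any of the $k$ internal vertices, contributing the factor of $k$, and that at each of the remaining $k-1$ internal vertices the trail still has an independent binary local choice, contributing the factor $2^{k-1}$; multiplying gives $k\cdot 2^{k-1}$.

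The only real obstacle is bookkeeping: confirming the local transition analysis and checking that the turning vertex is neither overcounted nor undercounted at the endpoints of the chain. A clean way to make this fully rigorous is induction on $k$. The base case $k=1$ is a single $4$-regular vertex where each external pairing is realized by exactly one transition system, giving counts $1,1,1$ matching $k\cdot 2^{k-1}=1$ and $2^{k-1}=1$. For the inductive step, attaching one further internal vertex to the chain should yield a linear recurrence of the form $a_{k+1}=2a_k+2b_k$ and $b_{k+1}=2b_k$, where $a_k=|{\rm VR}(\{0,1\},\{2,3\})|$ and $b_k$ is the common value of the other two counts; its solution is exactly $a_k=k\,2^{k-1}$ and $b_k=2^{k-1}$, completing the proof.
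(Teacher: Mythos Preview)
Your proposal is correct. The paper itself offers no proof of this lemma beyond the sentence ``By elementary counting we obtain the following fact,'' so your write-up in fact supplies strictly more detail than the paper does. Both of your arguments are sound: the direct ``turning vertex'' picture (exactly one internal vertex carries the non-through pairing, the remaining $k-1$ vertices each have two through pairings) gives $k\cdot 2^{k-1}$ for the $\{\{0,1\},\{2,3\}\}$ type, and the $2^k$ all-through configurations split evenly between the two symmetric types, giving $2^{k-1}$ each. Your inductive recurrence $a_{k+1}=2a_k+2b_k$, $b_{k+1}=2b_k$ is also correct and is the cleanest way to make the bookkeeping airtight; it is exactly what one gets by $2$-gluing a single vertex onto the length-$k$ chain and examining the three possible transitions at the new vertex.
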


The second gadget, which is called the $(0,X,Y)$ node, is shown in
Figure~\ref{fig:node_0xy}, and it is represented by the symbol
shown in Figure~\ref{fig:represent_0xy}. Let $p$ be any odd prime.
In the construction of the $(0,X,Y)$ node we use $p$ copies of
$(X,Y,Y)$ nodes as basic components, and each $(X,Y,Y)$ node has
the same parameter $k$. As illustrated, half-edges are connected
between two consecutive $(X,Y,Y)$ nodes. The four labels 0, 1, 2
and 3 at four corners in Figure~\ref{fig:node_0xy} are the four
half-edges of the $(0,X,Y)$ node, and they are the only
connections from the outside.

\begin{figure}[htb]
\begin{center}
 \subfigure[]{\label{fig:node_0xy}\includegraphics[scale=0.33]{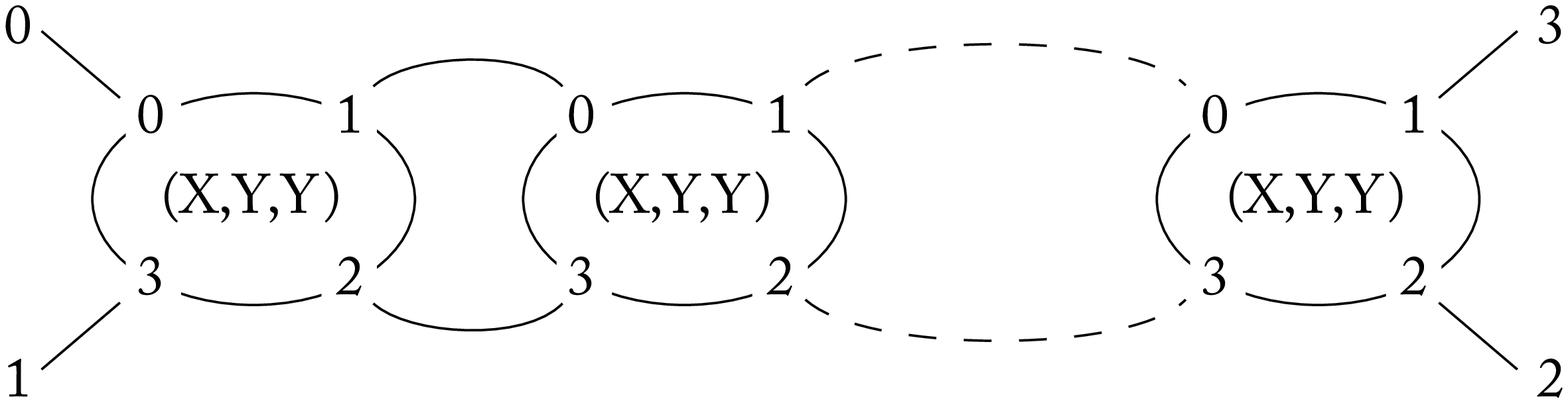}}\hspace{1cm}
 \subfigure[]{\label{fig:represent_0xy}\includegraphics[scale=0.4]{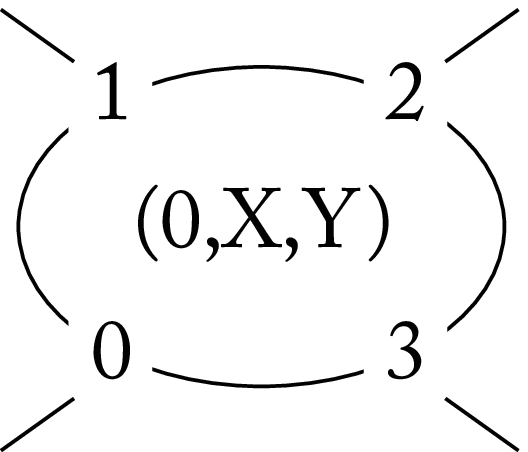}}
\end{center}
\caption{A $(0,X,Y)$ node and its symbol.
\subref{fig:node_0xy}: a $(0,X,Y)$ node consisting of $p$ copies of
$(X,Y,Y)$ nodes; \subref{fig:represent_0xy}: symbol representing the
$(0,X,Y)$ node}
\end{figure}

By elementary counting, binomial expansion, Fermat's little theorem, and the fact
that $2$ has a multiplicative inverse mod $p$, we obtain the following:
\begin{lemma}\label{prop:node_0xy}
Let $p$ be an odd prime and let $k$ be an integer. The $(0,X,Y)$ node with
parameters $p$ and $k$ has three different types of valid sets of routes
and these satisfy
\begin{eqnarray}
|{\rm VR}(\{0,1\},\{2,3\})|&=& pA(A+B)^{p-1} \equiv 0 \mod p , \label{ee1}\\
|{\rm VR}(\{0,2\},\{1,3\})|&=& \frac{(A+B)^p-(B-A)^p}{2} \equiv A \mod p,\label{ee2}\\
|{\rm VR}(\{0,3\},\{1,2\})|&=& \frac{(A+B)^p+(B-A)^p}{2} \equiv B \mod p\label{ee3},
\end{eqnarray}
where $A=2^{k-1}$ and $B=k 2^{k-1}$. The gadget has $kp$ vertices.
\end{lemma}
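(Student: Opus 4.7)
The plan is to enumerate valid route sets in the $(0,X,Y)$ node by specifying, for each of the $p$ copies of the $(X,Y,Y)$ gadget, which of the three internal route types from Lemma~\ref{prop:node_xyy} it realises. Call these local choices \emph{parallel} (weight $B = k 2^{k-1}$), \emph{cross} (weight $A = 2^{k-1}$), and \emph{loop} (weight $A$). Relative to the two half-edges each copy shares with its left and right neighbour, a parallel copy sends top-in to top-out and bottom-in to bottom-out, a cross copy swaps the two rows, and a loop copy pairs its two left half-edges to each other and its two right half-edges to each other, allowing no route to pass horizontally through it. The counts I want are then sums of products of these per-copy weights, restricted to copy-type combinations that compose into a globally valid route set.

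First I would show that at most one loop copy can appear in a valid configuration. Once any copy is a loop, the sub-chain through that point pairs its two far-left external half-edges via one trail and pairs its two right-side cut half-edges via a second trail; appending a further loop copy would combine this second trail with the next loop copy's left-to-left pairing and the two connecting cut edges into a closed cycle incident to no external half-edge, which violates the definition of a valid set of routes. Hence configurations realising the external pairing in which the two left external half-edges are paired and the two right ones are paired contain exactly one loop copy. Choosing its position ($p$ ways, weight $A$) and letting each of the other $p-1$ copies be independently parallel or cross (weight $A+B$ apiece) yields $|{\rm VR}(\{0,1\},\{2,3\})| = pA(A+B)^{p-1}$, which is~(\ref{ee1}).

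For configurations with no loop copies every copy is parallel or cross. On the two horizontal flow lines parallel acts as the identity and cross acts as the transposition, so the overall top/bottom pairing of the four external half-edges is determined solely by the parity of the number of cross copies. If $j$ of the $p$ copies are cross, the contribution is $\binom{p}{j}A^j B^{p-j}$, and separating the binomial sum by parity gives
\begin{equation*}
\sum_{j\text{ even}}\binom{p}{j}A^j B^{p-j} = \frac{(A+B)^p+(B-A)^p}{2},\qquad \sum_{j\text{ odd}}\binom{p}{j}A^j B^{p-j} = \frac{(A+B)^p-(B-A)^p}{2}.
\end{equation*}
Matching even parity to the pairing $\{0,3\},\{1,2\}$ and odd parity to $\{0,2\},\{1,3\}$, as dictated by the placement of the corner labels in Figure~\ref{fig:node_0xy}, gives~(\ref{ee2}) and~(\ref{ee3}).

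Finally the mod $p$ congruences follow from Fermat's little theorem: $(A+B)^p\equiv A+B$ and $(B-A)^p\equiv B-A\pmod{p}$, so the half-sum and half-difference reduce to $B$ and $A$ modulo $p$ (using that $2$ is invertible modulo the odd prime $p$), while $pA(A+B)^{p-1}\equiv 0\pmod{p}$ trivially. The vertex count $kp$ is immediate since the gadget uses $p$ copies of an $(X,Y,Y)$ node on $k$ vertices. I expect the main obstacle to be the rigorous verification that two loop copies always force an internal cycle: this is the only step where the combinatorics of the gadget genuinely enters (rather than purely algebraic manipulation), and it is what justifies the clean split into the binomial case (no loop copies) and the single-position case (exactly one loop copy).
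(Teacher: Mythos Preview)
Your proposal is correct and follows precisely the approach the paper sketches: the paper does not give a detailed proof but only states that the lemma follows ``by elementary counting, binomial expansion, Fermat's little theorem, and the fact that $2$ has a multiplicative inverse mod $p$'', and your argument supplies exactly those details (classifying each $(X,Y,Y)$ copy as parallel/cross/loop, ruling out two loops via the forced internal cycle, then summing by parity and reducing mod $p$). The one place to be careful in a final write-up is the assignment of the weights $A$ and $B$ to the three local behaviours, which depends on how the half-edge labels of the $(X,Y,Y)$ symbol sit relative to the chain direction in Figure~\ref{fig:node_0xy}; you already flag this, and the stated formula $pA(A+B)^{p-1}$ forces the loop type to have weight $A$, so your assignment is the correct one.
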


\subsection{{\sc \#ET} in 4-regular graphs is \#P-complete}

Next, we will give a reduction from {\sc \#ET} in general Eulerian graphs to {\sc \#ET} in 4-regular graphs.
\begin{theorem}\label{ttt}
{\sc \#ET} in general Eulerian graphs is polynomial time Turing reducible
to {\sc \#ET} in 4-regular graphs.
\end{theorem}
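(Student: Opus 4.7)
The plan is to reduce \#{\sc ET} on general Eulerian graphs (known to be \#P-hard) to \#{\sc ET} on 4-regular graphs. Given an Eulerian input graph $G$, for each vertex $v$ of degree $2d>4$ I would construct a 4-regular gadget $H_v$ with $2d$ external half-edges, built from copies of the $(0,X,Y)$ node of Lemma~\ref{prop:node_0xy}, and replace $v$ with $H_v$ to form a 4-regular instance $G'$. By querying the 4-regular \#{\sc ET} oracle on several such modified graphs (with different gadget parameters), I would recover $\#\mathrm{ET}(G)$ via linear algebra modulo primes together with the Chinese Remainder Theorem.

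The crucial property of the $(0,X,Y)$ node is that, modulo an odd prime $p$, its signature on the three pairings of its four external half-edges is $(0,A,B)$, with $A=2^{k-1}$ and $B=k\,2^{k-1}$ tunable via $k$. I would chain $d-1$ copies of the $(0,X,Y)$ node to form $H_v$, so that its signature on pairings $\pi$ of the $2d$ external half-edges is a known polynomial $s_{k,p}(\pi)$ in $A$ and $B$ modulo $p$. Then
\begin{equation*}
\#\mathrm{ET}(G')\equiv\sum_{\pi} s_{k,p}(\pi)\,N_\pi(G)\pmod{p},
\end{equation*}
where $N_\pi(G)$ counts Eulerian tours of $G$ that use transition $\pi$ at $v$. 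Varying $k$ over polynomially many values yields a Vandermonde-style linear system whose solution gives $\sum_\pi N_\pi(G)=\#\mathrm{ET}(G)\bmod p$, and combining residues over polynomially many primes via CRT reconstructs the exact value, whose bit length is polynomial in the input.

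The principal obstacle is the gadget design together with the size of this linear system. A priori there are $(2d-1)!!$ unknowns $N_\pi$, super-polynomial in $d$, so the chaining topology must be chosen to expose an automorphism group of $H_v$ under which $s_{k,p}(\pi)$ depends only on a polynomial-sized set of orbit-types of pairings. Because a single $(0,X,Y)$ node forbids one pairing modulo $p$, the support of the composite signature is highly structured, and a careful topology---possibly reusing the $(X,Y,Y)$ building blocks of Lemma~\ref{prop:node_xyy} at the interior---can collapse the system to polynomial size. A non-vanishing (Vandermonde) argument, in the spirit of the Fermat's-little-theorem calculation in Lemma~\ref{prop:node_0xy}, then ensures the resulting matrix is invertible modulo $p$. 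Once this step is carried out for one high-degree vertex, iterating the construction over all such vertices produces the desired polynomial-time Turing reduction.
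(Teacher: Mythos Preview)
Your high-level framework (gadget replacement, arithmetic modulo primes, Chinese Remainder Theorem) matches the paper, but the core of your argument is left as a wish rather than a construction. You correctly identify the obstacle: a vertex of degree $2d$ has $(2d-1)!!$ pairings, and a linear system in that many unknowns is not polynomial. Your remedy---``a careful topology \dots can collapse the system to polynomial size'' via an automorphism group and a Vandermonde argument over~$k$---is not carried out, and there is no evident reason it should work: chaining $(0,X,Y)$ nodes in a line gives a signature that genuinely distinguishes exponentially many pairing types (it encodes a random-walk-like structure), and symmetry does not obviously reduce this to polynomially many orbits.

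The paper avoids this difficulty entirely by a stronger gadget design. Its recursively defined gadget $Q$ (built from a triangular array of $(0,X,Y)$ nodes with increasing parameter $k=1,\dots,d-1$) has the property that, modulo $p$, \emph{every} permutation $\sigma\in S_d$ of input-to-output half-edges gets exactly the same count $R_d\equiv\prod_{i=1}^{d-1}(2^{i(i-1)/2}i!)$, while every type that pairs two inputs (or two outputs) gets count $0$. This is Lemma~\ref{le3}, proved by a nested induction on the partial gadgets $Q_j$. Consequently, after replacing each degree-$d$ vertex by $Q$ and wiring the outputs in pairs, one obtains
\[
T_i \equiv T\cdot\prod_{d}\Big((d/2)!\,2^{d/2}R_d\Big)^{n_d}\pmod{p_i},
\]
a single congruence per prime with a known, invertible multiplier (since $p_i>n$). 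No linear system, no Vandermonde, no orbit collapse is needed. The step you are missing is precisely this equalisation of all permutation weights modulo $p$; once you have it, one oracle call per prime suffices.
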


The proof of Theorem~\ref{ttt} is postponed to the end of this section.

We use the gadget, which we will call $Q$, illustrated in
Figure~\ref{fig:gadget_gto4_p} to prove the Theorem. The gadget is
constructed in a recursive way. The $d$ labels $1,\ldots,d$ on the
left are called input half-edges of the gadget, and the $d$ labels
on the right are called output half-edges. Given a prime $p$ and a
positive integer $d$, the gadget consists of $d-1$ copies of
$(0,X,Y)$ nodes with different parameters and one recursive part
represented by a rectangle with $d-1$ input half-edges and $d-1$
output ones. For $1 \leq i \leq d-1$, the $i$-th $(0,X,Y)$ node
from left has parameters $p$ and $i$. Half-edge 0 of the $i$-th
$(0,X,Y)$ node is connected to half-edge 3 of the $(i-1)$-st
$(0,X,Y)$ node except that for the 1st $(0,X,Y)$ node half-edge 0
is the $d$-th input half-edge of the gadget. Half-edge 1 of the
$i$-th $(0,X,Y)$ node is the $(d-i)$-th input half-edge of the
gadget. Half-edge 2 of the $i$-th $(0,X,Y)$ node is connected to
the $(d-i)$-th input half-edge of the rectangle. Half-edge 3 of
the $(d-1)$-st $(0,X,Y)$ node is the $d$-th output half-edge of
the gadget. For $1 \leq j \leq d-1$, the $j$-th output half-edge
of the rectangle is the $j$-th output half-edge of the gadget.
From the constructions of $(X,Y,Y)$ nodes and $(0,X,Y)$ nodes, the
total size of the $d-1$ copies of $(0,X,Y)$ nodes is $O(pd^2)$.
Thus, the size of the gadget is $O(pd^3)$.

\begin{figure}[htb]
\centering
\includegraphics[scale=0.33]{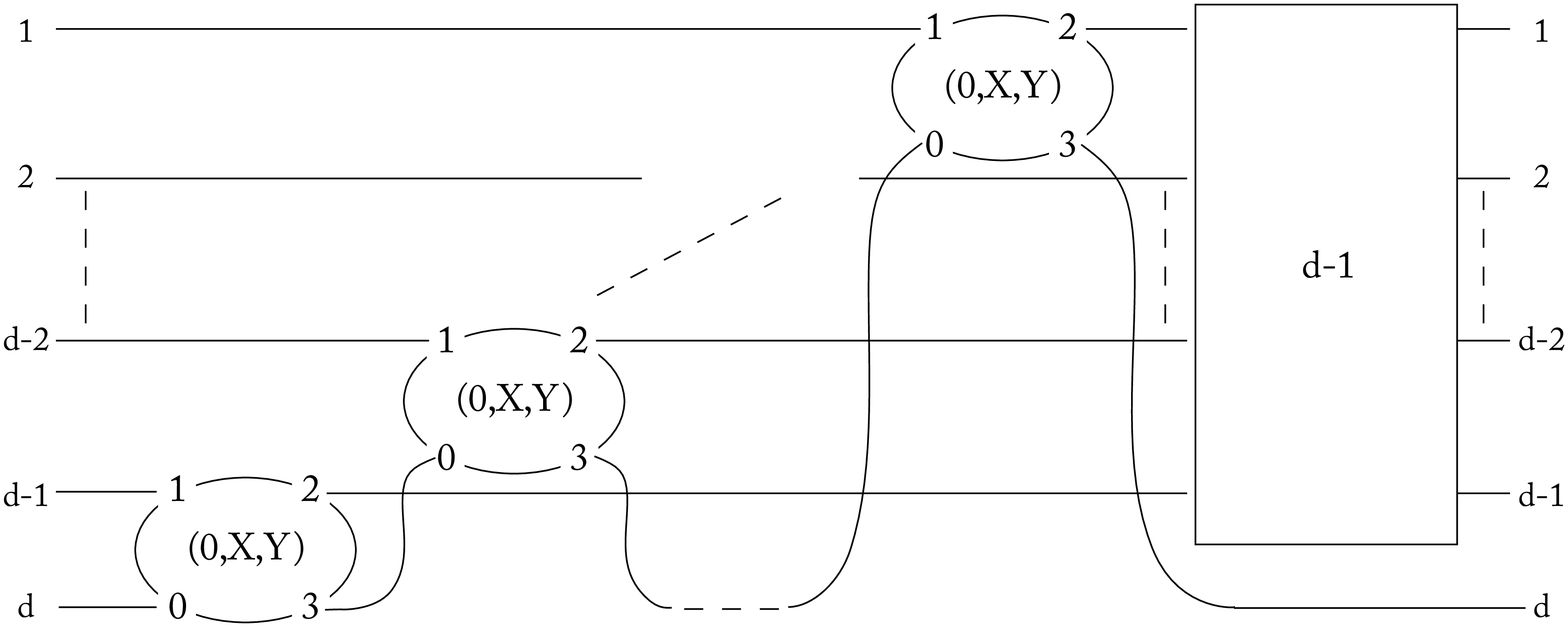}
\caption{Gadget $Q$ with $d$ input half-edges and $d$ output
half-edges}\label{fig:gadget_gto4_p}
\end{figure}

\begin{lemma}\label{le3}
Consider the gadget $Q$ with parameters $d$ and $p$. Let $\sigma$ be a permutation
in $S_d$. Then
\begin{equation}\label{aaa}
|{\rm VR}(\sigma)|:=|{\rm VR}(\{IN_1,OUT_{\sigma(1)}\},\ldots,\{IN_d,OUT_{\sigma(d)}\})|\equiv R_d \mod p,\
\end{equation}
where $R_d \equiv \prod_{i=1}^{d-1}(2^{i(i-1)/2}i!)$.

Moreover, any type $\tau$ which connects two IN (or two OUT) half-edges satisfies
\begin{equation}\label{bbb}
|{\rm VR}(\tau)|\equiv 0\mod p.
\end{equation}
\end{lemma}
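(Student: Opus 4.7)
I would proceed by induction on $d$, with the trivial base case $d=1$: the ``rectangle'' is empty, the gadget reduces to a direct edge from $IN_1$ to $OUT_1$, and $R_1$ is the empty product, equal to $1$.

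For the inductive step, a valid set of routes in $Q$ is determined by (i) the ``type'' (pairing of the four half-edges into two pairs) at each $(0,X,Y)$ node $N_i$, and (ii) a valid set of routes inside the recursive rectangle. By Lemma~\ref{prop:node_0xy}, the type $\{0,1\},\{2,3\}$ at any $N_i$ contributes $0 \bmod p$, so modulo $p$ I need only sum over type assignments $(t_1,\ldots,t_{d-1}) \in \{A,B\}^{d-1}$, where $A = \{0,2\},\{1,3\}$ contributes $A_i = 2^{i-1}$ and $B = \{0,3\},\{1,2\}$ contributes $B_i = i\cdot 2^{i-1}$. Tracing the flow down the chain $N_1, \ldots, N_{d-1}$, I would verify that each such assignment determines a permutation $r \in S_d$ with $IN_{r(m)}$ wired to rectangle-input $m$ for $m \in [d-1]$ and $IN_{r(d)}$ wired to $OUT_d$; explicitly, $r(d) = d-j$ where $j$ is the largest index with $t_j = A$, and $r(d)=d$ if no such index exists.

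Applying the inductive hypothesis to the rectangle with induced permutation $\sigma'(m) = \sigma(r(m))$, which is well-defined precisely when $r(d) = \sigma^{-1}(d)$, gives
\[
 |{\rm VR}(\sigma)| \equiv R_{d-1} \sum_{\substack{(t_1,\ldots,t_{d-1}) \in \{A,B\}^{d-1}\\ r(d) = \sigma^{-1}(d)}} \prod_{i=1}^{d-1} c_{t_i} \pmod{p},
\]
where $c_{t_i} \in \{A_i, B_i\}$. The main obstacle is to prove that the inner sum equals $2^{(d-1)(d-2)/2}(d-1)!$ independently of $s := \sigma^{-1}(d)$. Using $A_i + B_i = (i+1) \cdot 2^{i-1}$ and noting that the constraint $r(d)=s$ fixes $t_{d-s}=A$ and $t_i = B$ for $i > d-s$ (with earlier $t_i$'s free, and all $t_i = B$ in the special case $s = d$), the sum factors into a product of terms $(i+1)\cdot 2^{i-1}$, one factor $2^{d-s-1}$, and a tail of terms $i\cdot 2^{i-1}$. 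The integer parts yield $\bigl(\prod_{i<d-s}(i+1)\bigr)\bigl(\prod_{i>d-s} i\bigr) = (d-1)!$ by cancellation of $j!$ against $(d-1)!/j!$, while the exponents of $2$ total $(d-1)(d-2)/2$ after a short cancellation of the missing and inserted $(d-s-1)$-terms. Multiplying by $R_{d-1}$ yields $R_d$, proving \eqref{aaa}.

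For the vanishing statement \eqref{bbb}, I would observe that under any $\{A,B\}$-assignment the pairing induced on the outer half-edges has $OUT_d$ always paired with an outer $IN$, two outer $IN$'s paired iff the rectangle pairs two of its $IN$'s, and two outer $OUT_j, OUT_{j'}$ with $j,j' \leq d-1$ paired iff the rectangle pairs two of its $OUT$'s. Hence any type $\tau$ that pairs two $IN$'s or two $OUT$'s of $Q$ either forces at least one $N_i$ to have type $\{0,1\},\{2,3\}$ (contribution $0 \bmod p$ by Lemma~\ref{prop:node_0xy}), or inherits from the rectangle a type that pairs two of its $IN$'s or two of its $OUT$'s (contribution $0 \bmod p$ by induction), completing the proof.
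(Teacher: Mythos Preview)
Your proposal is correct and follows essentially the same inductive strategy as the paper: decompose into the chain of $(0,X,Y)$ nodes and the recursive rectangle, discard the $\{0,1\},\{2,3\}$ type modulo $p$, apply the inductive hypothesis to the rectangle, and show that the remaining ``chain sum'' equals $(d-1)!\,2^{(d-1)(d-2)/2}$ regardless of $s=\sigma^{-1}(d)$.

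The only organisational difference is in how that chain sum is evaluated. The paper packages this into an auxiliary Claim about partial gadgets $Q_j$ (the first $j$ nodes of the chain) and proves by a secondary induction on $j$ that $\sum_{\sigma\in A_s}|{\rm VR}_{Q_j}(\sigma)|\equiv j!\,2^{j(j-1)/2}$ for every admissible $s$. You instead compute the sum directly: the constraint $r(d)=s$ forces $t_{d-s}=A$ and $t_i=B$ for $i>d-s$ while leaving earlier $t_i$ free, and the product $\prod_{i<d-s}(i+1)\cdot 1\cdot\prod_{i>d-s}i=(d-1)!$ together with $\sum_{i=1}^{d-1}(i-1)=(d-1)(d-2)/2$ in the exponent of $2$ gives the same value for every $s$ (including $s=d$, all $t_i=B$). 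Your direct factorisation is slightly more elementary and avoids the second induction; the paper's Claim is marginally more general in that it describes every $Q_j$, though only $j=d-1$ is used. For \eqref{bbb} your argument is the same as the paper's, just stated more explicitly as an induction.
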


\begin{proof}
The proof is by induction on $d$, the base case $d=1$ is trivial.
Suppose the statement is true for gadget $Q$ with $(d-1)$ input
half-edges, that is, $|{\rm VR}(\varrho)| \equiv R_{d-1} \mod p$ for
every $\varrho \in S_{d-1}$.

Now, consider gadget $Q$ with $d$ input half-edges. For $1 \leq j
\leq d-1$, we cut the gadget by a vertical line just after the $j$-th
$(0,X,Y)$ node and only consider the part of the gadget to the left
of the line, we will call this partial gadget $Q_j$.

\begin{claim}
Let $A_s$ be the set of permutations in $S_d$ which map
$s$ to $d$. In the partial gadget $Q_j$ we have that for
$s\in\{d-j,\dots,d\}$ have
$$
\sum_{\sigma\in A_s} |{\rm VR}_{Q_j}(\sigma)|\equiv j!2^{j(j-1)/2} \mod p,
$$
where the subscript $Q_j$ is used to indicate that we count routes in gadget $Q_j$.
\end{claim}

\begin{proof}[Proof of Claim]
We prove the claim by induction on $j$, the base case $j=1$ is trivial.

Now assume that the claim is true for $j-1$, that is, for all $s\in\{d-j+1,\dots,d\}$
in gadget $Q_{j-1}$ we have
$$
\sum_{\sigma\in A_s} |{\rm VR}_{Q_{j-1}}(\sigma)|\equiv (j-1)!2^{(j-1)(j-2)/2} \mod p.
$$
The $j$-th $(0,X,Y)$ node takes $(d-j)$-th input
half-edge of the gadget and the half-edge 3 of the $(j-1)$-st $(0,X,Y)$
node, and has parameters $p$ and $j$.

The type of the $j$-th $(0,X,Y)$ node is $\{\{0,2\},\{1,3\}\}$ if and
only if the resulting permutation in $Q_j$ is in $A_{d-j}$.  Thus we have
$$
\sum_{\sigma\in A_{d-j}} |{\rm VR}_{Q_j}(\sigma)| \equiv 2^{j-1} \prod_{k=1}^{j-1} (2^{k-1}(k+1)) \equiv j! 2^{j(j-1)/2} \mod p,
$$
where the first term is the number of choices (modulo $p$) in the $j$-th $(0,X,Y)$ node to make it
$\{\{0,2\},\{1,3\}\}$ and the $k$-th term in the product is the number of choices (modulo $p$)
in the $k$-th  $(0,X,Y)$ node to make it either $\{\{0,2\},\{1,3\}\}$ or
$\{\{0,3\},\{1,2\}\}$.

If the type inside the $j$-th $(0,X,Y)$ node is $\{\{0,3\},\{1,2\}\}$ then the resulting permutation
is in $A_s$ for $s\in\{d-j+1,\dots,d\}$. Thus
\begin{align*}
\sum_{\sigma\in A_{s}} |{\rm VR}_{Q_j}(\sigma)| \equiv j 2^{j-1} \sum_{\sigma\in A_{s}} |{\rm VR}_{Q_{j-1}}(\sigma)|
&\equiv j 2^{j-1} (j-1)!2^{(j-1)(j-2)/2}\\
&\equiv j! 2^{j(j-1)/2} \mod p,
\end{align*}
where $j2^{j-1}$ is the number of choices (modulo $p$) in the $j$-th $(0,X,Y)$ node to make
it $\{\{0,3\},\{1,2\}\}$.
\end{proof}

Now we continue with the proof of the Lemma~\ref{le3}.

Let $\sigma$ be a permutation in $S_d$. Let $l=\sigma^{-1}(d)$. In order for $\sigma$
to be realized by gadget $Q$ we have to have $l$ mapped to $d$ by $Q_{d-1}$ and
the permutation realized by the recursive gadget of size $d-1$ must ``cancel'' the
permutation of $Q_{d-1}$. By the claim there are $(d-1)! 2^{(d-1)(d-2)/2}$ (modulo $p$)
choices in $Q_{d-1}$ which map $l$ to $d$ and by the inductive hypothesis there
are $R_{d-1}$ (modulo $p$) choices in the recursive gadget of size $d-1$ that give the
unique permutation that ``cancels'' the
permutation of $Q_{d-1}$. Thus
$$|{\rm VR}(\sigma)|\equiv R_d \equiv (d-1)! 2^{(d-1)(d-2)/2} R_{d-1}\mod{p},$$
finishing the proof of~\eqref{aaa}.

To see~\eqref{bbb} note that the number of valid sets of routes which
contain route starting and ending at both input half-edges or both
output half-edges is 0 modulo $p$. This is because the number of
valid set of routes of type $\{\{0,1\},\{2,3\}\}$ inside the $(0,X,Y)$ node
is 0 modulo $p$. 
\end{proof}

\begin{proof}[Proof of Theorem~\ref{ttt}]
The reduction is now a standard application of the Chinese remainder
theorem. Given an Eulerian graph $G=(V,E)$, we can, w.l.o.g., assume
that the degree of vertices of $G$ is at least $4$ (vertices of
degree $2$ can be removed by contracting edges). The number of
Eulerian tours of a graph on $n$ vertices is bounded by $n^{n^2}$
(the number of pairings in a vertex of degree $d$ is
$d!/(2^{d/2}(d/2)!)\leq n^n$).

We choose $n^2$ primes $p_1,\ldots,p_{n^2}>n$ such that
$\prod_{i=1}^{n^2} p_i > n^{n^2}$ and each $p_i$ is bounded by
$O(n^3)$ (see, e.\,g., ~\cite{MR1406794}, p.296). For each $p_i$, we
construct graph $G_i$ by replacing each vertex $v$ of degree $d > 4$
with $Q$ gadget with $d$ input and $d$ output half-edges where the
$(2j-1)$-st and $2j$-th output half-edge are connected (for
$j=1,\dots,d/2$), and the input half-edges are used to replace
half-edges emanating from $v$ (that is, they are connected to the
input half-edges of other gadgets according to the edge incidence at
$v$). Note that $G_i$ is a 4-regular graph. Since $p_i=O(n^3)$, the
construction of $G_i$ can be done in time polynomial in $n$. Having
$G_i$, we make a query to the oracle and obtain the number $T_i$ of
Eulerian tours in $G_i$. Let $T$ be the number of Eulerian tours in
$G$. Then
\begin{equation}\label{ccc}
T_i \equiv T\prod_{d =
6}^{n}\left(\left(\frac{d}{2}\right)!2^{d/2}R_d
\right)^{n_d} \mod p_i,
\end{equation}
where $n_d$ is the number of vertices of degree $d$ in $G$.

Since $T_i$ is of length polynomial in $n$, we can compute $T_i \mod
p_i$ for each $i$ and thus $T \mod p_i$ (since on the right hand
side of~\eqref{ccc} $T$ is multiplied by a term that has an inverse
modulo $p_i$). By the Chinese remainder theorem, we can compute $T$
in time polynomial in $n$ (see, e.\,g., ~\cite{MR1406794}, p.106).

\end{proof}

\subsection{{\sc \#ET} in 4-regular planar graphs is \#P-complete}

First, it's easy to see that {\sc \#ET} in 4-regular planar graphs is in
\#P. We will give a reduction from {\sc \#ET} in 4-regular graphs to {\sc \#ET}
in 4-regular planar graphs.

\begin{theorem}
{\sc \#ET} in 4-regular graphs is polynomial time Turing reducible
to {\sc \#ET} in 4-regular planar graphs.
\end{theorem}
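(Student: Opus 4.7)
The plan is to mimic the proof of Theorem~\ref{ttt}. Given a 4-regular graph $G$, fix a drawing of $G$ in the plane with a polynomial number of transversal crossings; at each crossing the four arcs meet in cyclic order $(1,2,3,4)$, and among the three pairings of the four half-edges, only the \emph{crossing} pairing $\{\{1,3\},\{2,4\}\}$ reproduces $G$, while $\{\{1,2\},\{3,4\}\}$ and $\{\{1,4\},\{2,3\}\}$ are topologically different resolutions. I would replace each crossing by a planar 4-regular crossover gadget $X_p$ (one per odd prime $p$), obtaining a planar 4-regular graph $G_p$, and recover $|\mathrm{ET}(G)|$ from the oracle values $|\mathrm{ET}(G_p)|$ by modular arithmetic plus the Chinese remainder theorem.

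The key property required of $X_p$ is that its three external signatures
\[
|{\rm VR}(\{1,2\},\{3,4\})|,\quad |{\rm VR}(\{1,3\},\{2,4\})|,\quad |{\rm VR}(\{1,4\},\{2,3\})|
\]
are congruent to $(0,\alpha,0)\pmod p$ with $\alpha\not\equiv 0$. Conditioning on the macroscopic pairing $\mathbf{s}$ chosen by an Eulerian tour at each of the $c$ crossings, every Eulerian tour of $G_p$ decomposes as a choice of internal route sets inside the $X_p$'s (contributing $\prod_i\sigma_i(s_i)$) times an Eulerian tour of the 4-regular multigraph $G(\mathbf{s})$ obtained by resolving each crossing according to $\mathbf{s}$, so
\[
|\mathrm{ET}(G_p)| \;=\; \sum_{\mathbf{s}}\Bigl(\prod_{i=1}^{c}\sigma_i(s_i)\Bigr)\,|\mathrm{ET}(G(\mathbf{s}))|\;\equiv\; \alpha^{c}\,|\mathrm{ET}(G)|\pmod{p},
\]
since each non-crossing pairing contributes a factor $\equiv 0\pmod p$. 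Choosing $O(n^2)$ primes $p_i=O(n^3)$ whose product exceeds the standard bound on $|\mathrm{ET}(G)|$ and applying the Chinese remainder theorem then recovers $|\mathrm{ET}(G)|$ in polynomial time, exactly as in the proof of Theorem~\ref{ttt}.

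The construction of $X_p$ would build on the $(0,X,Y)$ node of Lemma~\ref{prop:node_0xy}, whose signature is already $(0,A,B)\pmod p$: a second $(0,X,Y)$ node attached in a cyclically rotated orientation has signature $(B,A,0)\pmod p$, and an appropriate series composition of the two should annihilate the remaining bad pairing modulo $p$ while preserving $\alpha\not\equiv 0$. Because each $(0,X,Y)$ node is a one-dimensional chain of $(X,Y,Y)$ nodes, the combined gadget should admit a planar embedding with all four external half-edges on the outer face in the required cyclic order. The main obstacle will be precisely this planarity verification -- making sure that the routing needed to kill both bad pairings does not re-introduce crossings. If a clean $(0,\alpha,0)$ planar realization turns out to be elusive, a robust fallback is two-variable polynomial interpolation: use the $(0,X,Y)$ node directly (which is far more plausibly planar), observe that $|\mathrm{ET}(G_p)|$ is a polynomial of total degree $c$ in $A$ and $B$ modulo $p$, query $O(c)$ different values of the $(X,Y,Y)$-parameter $k$ to obtain enough evaluations, invert the resulting Vandermonde system to extract the coefficient of $A^c$, and then apply the Chinese remainder theorem as above.
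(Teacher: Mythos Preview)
Your overall strategy---draw $G$ in the plane, replace each crossing by a planar $4$-regular gadget whose signature is $(0,\alpha,0)\pmod p$, and recover $|\mathrm{ET}(G)|$ via CRT---is exactly the paper's approach. Where you diverge is in the construction of the crossover gadget, and here you overlooked a one-line observation that the paper exploits: a \emph{single} $(0,X,Y)$ node already suffices. Take $k=p$ in Lemma~\ref{prop:node_0xy}. Then $A=2^{p-1}\equiv 1\pmod p$ by Fermat and $B=p\cdot 2^{p-1}\equiv 0\pmod p$, so the signature is precisely $(0,1,0)\pmod p$. No second node, no series composition, no planarity worries beyond those of the $(0,X,Y)$ node itself (which is a chain of $(X,Y,Y)$ nodes and hence planar with the four external half-edges on the outer face in the correct cyclic order). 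The paper therefore gets $T_i\equiv T\pmod{p_i}$ directly, and uses only $n$ primes of size $O(n^2)$ since $|\mathrm{ET}(G)|\le 3^n$ for $4$-regular $G$.

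Your fallback interpolation would also work: with the $(0,X,Y)$ node at parameter $k$, one has $|\mathrm{ET}(G_p)|\equiv 2^{c(k-1)}P(k)\pmod p$ for a degree-$c$ polynomial $P$ with $P(0)=|\mathrm{ET}(G)|$, so $c+1$ queries per prime recover $P(0)$. That is a correct but strictly heavier route. Your primary plan of gluing two $(0,X,Y)$ nodes to annihilate the second bad pairing is the part that is genuinely uncertain as written---you yourself flag the planarity check, and you would also need to verify that the composed signature has a nonzero middle entry---but it is simply unnecessary once you spot the $k=p$ trick.
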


\begin{proof}
Given a 4-regular graph $G=(V,E)$, we first draw $G$ in the plane.
We allow the edges to cross other edges, but i) edges do not cross
vertices, ii) each crossing involves $2$ edges. The
embedding can be found in polynomial time.

Let $p$ be an odd prime, we will construct a graph $G_p$ from the
embedded graph as follows. Let $e,e'$ be two edges in $G$ which
cross in the plane as shown in Figure~\ref{fig:cross}, we split $e$
(and $e'$) into two half-edges $e_1,e_2$ ($e'_1,e'_2$,
respectively). As illustrated in Figure~\ref{fig:gadget_cross}, a
$(0,X,Y)$ node with parameters $p$ and $k=p$ is added, and
$e_1,e'_1,e_2,e'_2$ are connected to the half-edges 0,1,2,3 of the
$(0,X,Y)$ node, respectively.

Let $G_p$ be the graph after replacing all
crossings by $(0,X,Y)$ nodes. We have that $G_p$ is planar
since $(X,Y,Y)$ nodes and $(0,X,Y)$ nodes are all planar. The
construction can be done in time polynomial in $p$ and the size of
$G$ (since the number of crossover points is at most $O(|E|^2)$ and
the size of each $(0,X,Y)$ node is $O(p^2)$).

\begin{figure}[htb]
\begin{center}
 \subfigure[Two crossing
 edges]{\label{fig:cross}\includegraphics[scale=0.5]{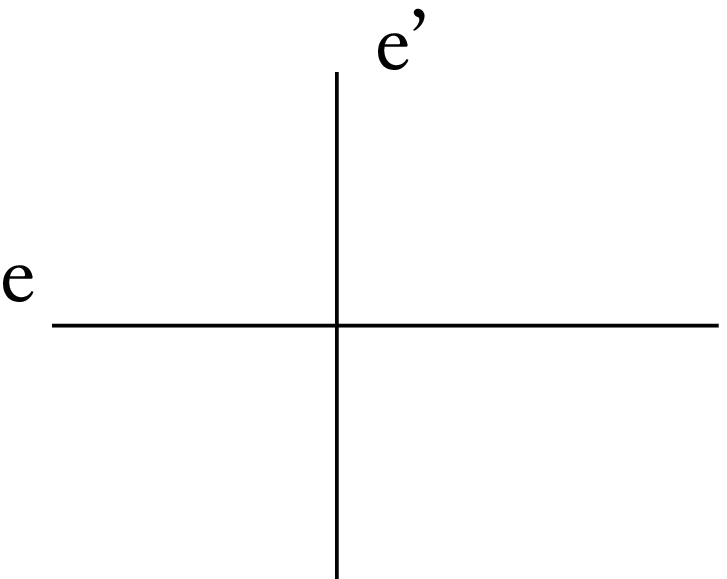}}\hspace{1cm}\quad
 \subfigure[After the replacement]{\label{fig:gadget_cross}\includegraphics[scale=0.4]{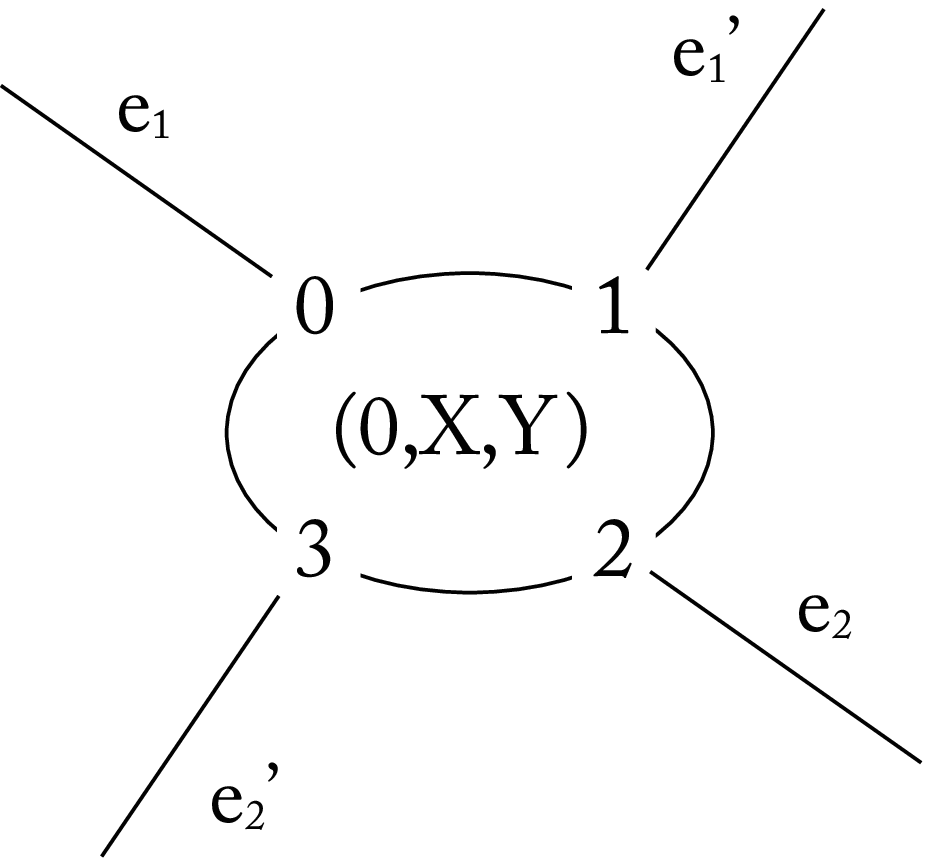}}
\end{center}
\caption{To replace a crossover point by a $(0,X,Y)$ node with
parameters $p$ and $k=p$}\label{fig:remove_cross}
\end{figure}

In the reduction, we choose $n = |V|$ primes $p_1,p_2,\ldots,p_n$
such that $p_i = O(n^2)$ for $i \in [n]$ and $\prod_{i=1}^n p_i \geq
3^n$, where $3^n$ is an upper bound for the number of Eulerian tours
in $G$ (the number of pairings in each vertex is $3$). For each
$p_i$, we construct a graph $G_{p_i}$ from the embedded graph as
described above with $p=p_i$. Let $T$ be the number of Eulerian
tours in $G$ and $T_i$ be the number of Eulerian tours in $G_{p_i}$,
we have
\begin{equation}\label{ddd}
T \equiv T_i \mod p_i.
\end{equation}
Equation~\eqref{ddd} follows from the fact that the number of
Eulerian tours in which the set of routes within any $(0,X,Y)$ node
is not of type $\{\{0,2\},\{1,3\}\}$ is zero (modulo $p_i$) (since
in~\eqref{ee2} we have $A\equiv 1\mod{p_i}$ and in~\eqref{ee3} we
have $B\equiv 0\mod{p_i}$). We can make a query to the oracle to
obtain the number $T_i$. By the Chinese remainder theorem, we can
compute $T$ in time polynomial in $n$. 
\end{proof}

\subsection{{\sc \#A-trails} in 4-regular graphs with rotational embedding schemes is \#P-complete}

In this section, we consider {\sc \#A-trails} in graphs with
rotational embedding schemes (maps). We prove that {\sc \#A-trails} in
$4$-regular maps is \#P-complete
by a simple reduction from {\sc \#ET} in 4-regular graphs.

First, it's not hard to verify that {\sc \#A-trails} in 4-regular
maps is in \#P.

\begin{theorem}
{\sc \#ET} in 4-regular graphs is polynomial time Turing reducible
to {\sc \#A-trails} in 4-regular maps.
\end{theorem}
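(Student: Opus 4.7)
The plan is to reduce $\#\mathrm{ET}$ in $4$-regular graphs to $\#\mathrm{A\text{-}trails}$ in $4$-regular maps by a vertex-replacement construction in the spirit of Theorem~\ref{ttt}. At every degree-$4$ vertex with cyclic order $(e_0,e_1,e_2,e_3)$, an Eulerian tour may use any of the three pairings of the incident edges, but an A-trail is restricted to the two cyclic-adjacent pairings $\{\{e_0,e_1\},\{e_2,e_3\}\}$ and $\{\{e_0,e_3\},\{e_1,e_2\}\}$; the ``crossing'' pairing $\{\{e_0,e_2\},\{e_1,e_3\}\}$ is forbidden. So merely endowing $G$ with an arbitrary rotational embedding loses exactly the crossing transition at every vertex, and the reduction must put those back.

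To do this, I would design a small $4$-regular map gadget $H$ with four external half-edges $0,1,2,3$ such that the A-trail counts $c_1,c_2,c_3$ corresponding to the three external pairings $\{\{0,1\},\{2,3\}\}$, $\{\{0,2\},\{1,3\}\}$, $\{\{0,3\},\{1,2\}\}$ are all known and all nonzero modulo a chosen odd prime $p$. Replacing every vertex of $G$ by a copy of $H$ (with its four external half-edges glued to the four edges formerly at the vertex) yields a $4$-regular map $M$, and any A-trail of $M$ decomposes as a choice of pairing at each original vertex together with a local A-trail of $H$ realizing that pairing, so
\[
\#\mathrm{A\text{-}trails}(M) \;=\; \sum_{T}\;\prod_{v\in V(G)} c_{T(v)},
\]
where $T$ ranges over Eulerian transition systems of $G$ (up to the usual counting convention). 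In the symmetric case $c_1=c_2=c_3=c$ this collapses to $c^{|V(G)|}\cdot\#\mathrm{ET}(G)$ and a single oracle call recovers $\#\mathrm{ET}(G)$; otherwise a parametric family of gadgets together with a Chinese Remainder argument, exactly as in the proof of Theorem~\ref{ttt}, extracts $\#\mathrm{ET}(G)$ from A-trail counts on several maps.

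The main obstacle is the gadget $H$ itself and the verification of its three A-trail counts. The $(X,Y,Y)$ and $(0,X,Y)$ nodes of Lemmas~\ref{prop:node_xyy} and~\ref{prop:node_0xy} are natural building blocks, but those lemmas track valid sets of routes with no embedding constraint, whereas here the relevant quantities are A-trail counts, which depend on the rotation chosen at every internal vertex. The delicate step is to pick a rotational embedding on the internal vertices of the building blocks so that each cyclic-adjacency constraint either contributes a tractable multiplicative factor or leaves the three external pairing counts in a form that can be evaluated modulo $p$; once this combinatorial bookkeeping is done the CRT-based recovery of $\#\mathrm{ET}(G)$ is routine.
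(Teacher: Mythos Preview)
Your high-level strategy is exactly the paper's: replace each vertex of $G$ by a $4$-regular map gadget with four external half-edges whose three A-trail route-type counts $c_1,c_2,c_3$ are all nonzero, and recover $\#\mathrm{ET}(G)$ from the A-trail count of the resulting map $M$. You also correctly observe that a symmetric gadget ($c_1=c_2=c_3=c$) collapses the sum to $c^{|V(G)|}\cdot\#\mathrm{ET}(G)$, so a single oracle call would suffice.

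The gap is that you never actually produce the gadget, and the route you sketch for building one --- adapting the $(X,Y,Y)$ and $(0,X,Y)$ nodes, tracking A-trail counts modulo primes, and finishing with CRT --- is far heavier than what is needed and is left entirely unexecuted. The paper bypasses all of this: it simply exhibits a concrete three-vertex map gadget (Figure~\ref{fig:gadget_4to4s}) in which, with the clockwise rotation at each internal vertex, an elementary enumeration gives
\[
|{\rm VR}(\{0,1\},\{2,3\})|=|{\rm VR}(\{0,2\},\{1,3\})|=|{\rm VR}(\{0,3\},\{1,2\})|=2.
\]
Hence $\#\text{A-trails}(M)=2^{|V(G)|}\cdot\#\mathrm{ET}(G)$, and one oracle call finishes the reduction. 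No parametric family, no modular bookkeeping, no CRT. The ``main obstacle'' you identify is in fact the easy part; the whole proof is four lines once the symmetric gadget is on the table.
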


\begin{figure}[htb]
\center
\includegraphics[scale=0.4]{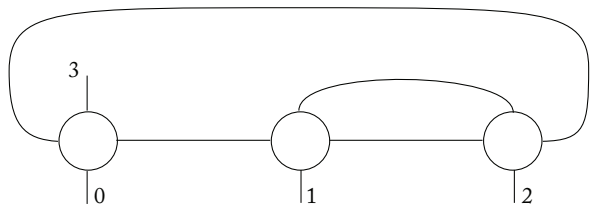}
\caption{Gadget simulating vertex of degree
4}\label{fig:gadget_4to4s}
\end{figure}

\begin{proof}
Given a 4-regular graph $G=(V,E)$, for each vertex $v$ of $G$, we
use the gadget shown in Figure~\ref{fig:gadget_4to4s} to replace
$v$.

The gadget consists of three vertices which are represented by
circles in Figure~\ref{fig:gadget_4to4s}. The labels 0, 1, 2 and 3
are the four half-edges which are used to replace half-edges
emanating from $v$. The cyclic ordering of the $4$ (half-)edges
incident to each circle is given by the clockwise order, as shown in
Figure~\ref{fig:gadget_4to4s}. There are three types of valid sets
of routes inside the gadget, ${\rm VR}(\{0, 1\},\{2, 3\})$, ${\rm VR}(\{0,
2\},\{1, 3\})$ and ${\rm VR}(\{0, 3\},\{1, 2\})$. By enumeration,
we have the size of each of the three sets is $2$.

Let $G'$ be the 4-regular map obtained by replacing each vertex $v$ by the gadget. Let $T$ be the
number of Eulerian tours in $G$, we have the number of A-trails in
$G'$ is $2^{|V|}T$. 
\end{proof}

Note that Kotzig~\cite{MR0248043} gave a one-to-one correspondence
between the A-trails in any 4-regular plane graph $G$ (the embedding
in the plane gives the rotational embedding scheme) and the spanning
trees in a plane graph $G'$, where $G$ is the medial graph of $G'$. By the Kirchhoff's theorem
(c.f.~\cite{MR1960003}), the number of spanning trees of any graph
can be computed in polynomial time. Thus {\sc \#A-trails} in
4-regular plane graphs can be computed in polynomial time.

\section{The complexity of approximate counting}\label{sec:app}

In this section, we show that {\sc \#ET} in general graphs is
AP-reducible to {\sc \#A-trails} in $4$-regular maps. AP-reductions were introduced by Dyer,
Goldberg, Greenhill and Jerrum~\cite{MR2044886} for the purpose of
comparing the complexity of two counting problems in terms of
approximation (given two counting problems $f,g$, if $f$ is
AP-reducible to $g$ and there is an FPRAS for $g$, then there is
also an FPRAS for $f$).

In the AP-reduction from {\sc \#ET} to {\sc \#A-trails} in
$4$-regular maps, we use the
idea of simulating the pairings in a vertex by a gadget as what we
did in the construction of the $Q$ gadget. The difference is that
the new gadget works in an approximate way, that is, instead of
having the number of valid sets of routes to be the same for each of
the types, the numbers can be different but within a small
multiplicative factor. The analysis of the gadget uses a fast mixing
result for a card shuffling problem.

We use the gadget illustrated
in Figure~\ref{fig:gadget_gto4s_det}. The circles represent the vertices in the map. Let $d$ be an even number.
The gadget has $d$ input half-edges on left and $d$ output
half-edges (Figure~\ref{fig:gadget_gto4s_det} demonstrates the case of $d=6$). There are $T$ layers in the
gadget which are numbered from 1 to $T$ from left to right. In an
odd layer $t$, the $(2i-1)$-st and the $2i$-th output half-edges
of layer $t-1$ are connected to a vertex of degree $4$, for
$i \in [d/2]$. In an even layer $t$, the $2i$-th and the
$(2i+1)$-st output half-edges of layer $t-1$ are connected to a
vertex of degree $4$, for $i \in [d/2-1]$. In Figure~\ref{fig:gadget_gto4s_det}, we illustrate the first
two layers each of which is in two consecutive vertical dashed lines. The
cyclic ordering of each vertex is given by the clockwise ordering
(in the drawing in Figure~\ref{fig:gadget_gto4s_det}), and so we
have that the two half-edges in each vertex which are connected to
half-edges of the previous layer are not cyclicly-adjacent.

\begin{figure}[htb]
\center
\includegraphics[scale=0.5]{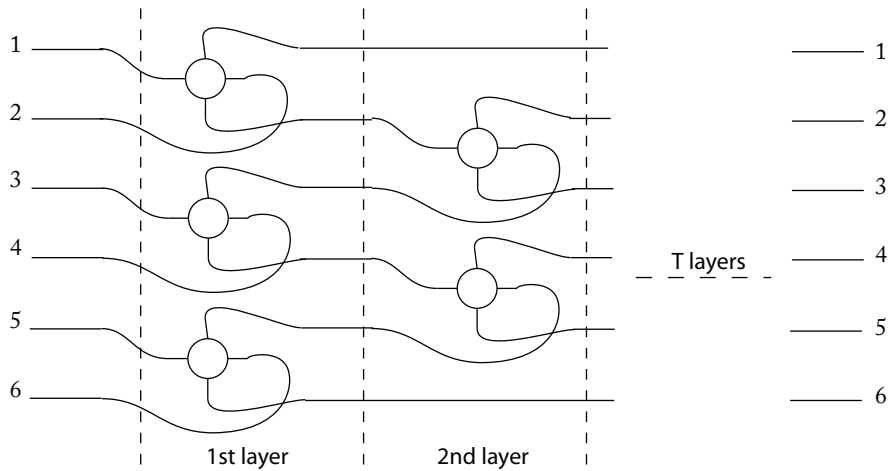}
\caption{Construction of the gadget for a vertex of degree
$6$}\label{fig:gadget_gto4s_det}
\end{figure}

Note that a valid route in the gadget always connects an input half-edge
to an output half-edge. Thus a valid set of routes always realizes
some permutation $\sigma$ connecting input half-edge $i$ to output
half-edge $\sigma(i)$.

In order to prove that $|{\rm VR}(\sigma)|$ is almost the same for each
permutation $\sigma \in S_d$, we show that for $T =
\Theta(d^2\log{d}\log(d!/\varepsilon))$ we have $$|{\rm VR}(\sigma)|/\sum_{\varrho
\in S_d}|{\rm VR}(\varrho)| \in [(1-\varepsilon)/d!,(1+\varepsilon)/d!]$$
for each permutation $\sigma \in S_d$. The gadget can be interpreted
as a process of a Markov chain for shuffling $d$ cards. The simplest
such chain proceeds by applying adjacent transpositions. The states
of the chain are all the permutations in $S_d$. In each time step,
let $\sigma \in S_d$ be the current state, we choose $i \in
\{1,\ldots,d-1\}$ uniformly at random, and then switch $\sigma(i)$
and $\sigma(i+1)$ with probability $1/2$ and stay the same with
probability $1/2$. For our gadget, it can be viewed as an even/odd
sweeping Markov chain on $d$ cards~\cite{MR2023023}. The ratio
$|{\rm VR}(\sigma)|/\sum_{\varrho \in S_d}|{\rm VR}(\varrho)|$ is exactly the
probability of being $\sigma$ at time $T$ when the initial state of
the even/odd sweeping Markov chain is the identity permutation. By
the analysis in~\cite{MR2023023}, we can relate $T$ with the ratio
as follows.

\begin{lemma}[\cite{MR2023023}]\label{ggg}
Let $T$ be the number of layers of the gadget with $d$ input
half-edges and $d$ output half-edges as shown in
Figure~\ref{fig:gadget_gto4s_det}, and let $\mu,\lambda$ be two
distributions on $S_d$ such that
$\mu(\sigma)=|{\rm VR}(\sigma)|/\sum_{\varrho \in S_d}|{\rm VR}(\varrho)|$ and
$\lambda(\sigma)=1/d!$ ($\lambda$ is the uniform distribution on
$S_d$). For
$$T=O(d^2\log {d}\log(d!/\varepsilon)),$$ then $\|\mu-\lambda\|_{TV}
\leq \varepsilon/d!$, and thus $(1-\varepsilon)/d! \leq \mu(\sigma)
\leq (1+\varepsilon)/d!$.
\end{lemma}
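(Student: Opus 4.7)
The plan is to identify each layer of the gadget with one step of the even/odd sweeping Markov chain on $S_d$, observe that $\mu(\sigma)$ equals a transition probability of this chain started from the identity, and then quote the mixing-time bound from~\cite{MR2023023} as a black box.

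First I would analyze a single vertex. Because the cyclic ordering is chosen so that the two half-edges coming from the previous layer are not cyclicly-adjacent, the four half-edges appear in clockwise order as $(\mathrm{in}, \mathrm{out}, \mathrm{in}', \mathrm{out}')$. Of the three possible pairings of the four half-edges into two routes, only the two pairings $\{\{\mathrm{in},\mathrm{out}\},\{\mathrm{in}',\mathrm{out}'\}\}$ and $\{\{\mathrm{in},\mathrm{out}'\},\{\mathrm{in}',\mathrm{out}\}\}$ use cyclicly-adjacent pairs, so only these two contribute to an A-trail. Thus each gadget vertex corresponds to an independent binary choice: either leave the two incoming strands in place, or swap them.

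Next I would observe that a valid set of routes in the whole gadget is determined by independent binary choices at each vertex, and conversely each assignment of choices yields a unique permutation $\sigma$ obtained by composing the swaps layer by layer. Hence if $N$ denotes the total number of vertices in the gadget then $\sum_{\varrho\in S_d}|\mathrm{VR}(\varrho)|=2^N$ and
\[
\mu(\sigma)\;=\;\frac{|\mathrm{VR}(\sigma)|}{2^N}\;=\;P^T(\mathrm{id},\sigma),
\]
where $P$ is the transition kernel that, on odd steps, independently with probability $1/2$ applies each transposition $(2i{-}1,\,2i)$ for $i\in[d/2]$, and on even steps independently with probability $1/2$ applies each transposition $(2i,\,2i{+}1)$ for $i\in[d/2-1]$. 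This is exactly the even/odd sweep chain of~\cite{MR2023023}, whose stationary distribution is the uniform distribution $\lambda$ on $S_d$.

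Finally I would invoke the mixing-time bound from~\cite{MR2023023}, namely that the mixing time of the even/odd sweep chain is $\tau(\delta)=O(d^2\log d\cdot\log(1/\delta))$. Setting $\delta=\varepsilon/d!$ and $T=O(d^2\log d\cdot\log(d!/\varepsilon))$ gives $\|\mu-\lambda\|_{TV}\leq\varepsilon/d!$. Using the characterization $\|\mu-\lambda\|_{TV}=\max_{A\subseteq S_d}|\mu(A)-\lambda(A)|$ with singleton $A=\{\sigma\}$ immediately yields $|\mu(\sigma)-1/d!|\leq\varepsilon/d!$, which is the stated pointwise bound.

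The main obstacle is the single-vertex combinatorial step: one must check carefully that the prescribed cyclic ordering rules out exactly the pairing $\{\{\mathrm{in},\mathrm{in}'\},\{\mathrm{out},\mathrm{out}'\}\}$ and leaves the other two with equal weight $1$, so that the gadget genuinely simulates $T$ independent steps of the sweep chain with the correct $1/2$--$1/2$ transition probabilities. Once this identification is in place, everything else is bookkeeping plus an application of the cited mixing-time theorem.
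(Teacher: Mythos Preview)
Your proposal is correct and follows exactly the approach the paper takes: the paper does not give a formal proof of this lemma but rather, in the paragraph immediately preceding it, identifies each vertex of the gadget with a swap/no-swap choice, notes that $\mu(\sigma)$ is the time-$T$ distribution of the even/odd sweeping chain started at the identity, and then cites the mixing-time bound of~\cite{MR2023023} as a black box. Your write-up simply fills in the details (the single-vertex A-trail analysis and the passage from total-variation to pointwise bounds) that the paper leaves implicit.
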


\begin{theorem}
If there is an FPRAS for {\sc \#A-trails} in 4-regular maps,
then we have an FPRAS for {\sc \#ET}
in general graphs.
\end{theorem}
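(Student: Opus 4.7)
The plan is to AP-reduce \#ET in general Eulerian graphs to \#A-trails in 4-regular maps by replacing each vertex of $G$ with a copy of the gadget of Figure~\ref{fig:gadget_gto4s_det}. Given $(G,\varepsilon)$ with $n=|V|$, I would set a per-gadget tolerance $\varepsilon_v := \varepsilon/(4n)$ at each vertex $v$ of degree $d_v$, build the gadget $G_v$ with $T_v = O(d_v^2 \log d_v \log(d_v!/\varepsilon_v))$ layers, close off its $d_v$ output half-edges in consecutive pairs $(2i-1,2i)$, and wire the $d_v$ input half-edges to the neighbours dictated by the incidences at $v$. Since $T_v$ depends only logarithmically on $1/\varepsilon_v$, the resulting 4-regular map $G'$ has size polynomial in $|G|$ and $\log(1/\varepsilon)$.

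The key combinatorial step is the identity
\begin{equation*}
|A\text{-trails}(G')| = c_{G,G'} \sum_{\tau \in ET(G)} \prod_{v \in V} w_v(\pi_\tau(v)),
\end{equation*}
where $\pi_\tau(v)$ is the pairing of half-edges at $v$ induced by $\tau$, $w_v(\pi)$ counts valid sets of routes in $G_v$ whose induced input-pairing (obtained by composing $\sigma$ with the output wiring) equals $\pi$, and $c_{G,G'}$ is an explicit polynomial-time-computable factor relating the counting conventions for A-trails of $G'$ and Eulerian tours of $G$. This holds because any A-trail of $G'$ restricts to a valid set of routes inside each $G_v$, the induced input-pairings form a transition system at the original vertices, and the single-trail condition in $G'$ is equivalent to that system being a tour-producing pairing in $G$.

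Next I would apply Lemma~\ref{ggg}. Letting $N_v := \sum_\pi w_v(\pi)$, an inspection of the gadget shows $N_v = 2^{\ell_v}$, where $\ell_v$ is the number of internal degree-$4$ vertices of $G_v$: the non-cyclic-adjacency of the two incoming half-edges at each such vertex leaves exactly two A-trail-admissible configurations, and the feed-forward layout guarantees every global choice gives a valid set of routes; in particular $N_v$ is known exactly. The lemma then gives $|{\rm VR}_{G_v}(\sigma)|/N_v \in [(1-\varepsilon_v)/d_v!, (1+\varepsilon_v)/d_v!]$ uniformly in $\sigma \in S_{d_v}$, and summing over the $(d_v/2)!\,2^{d_v/2}$ permutations inducing any fixed pairing yields $w_v(\pi) \in [1-\varepsilon_v, 1+\varepsilon_v] \cdot N_v/(d_v-1)!!$. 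Combining with the identity and using $\prod_v(1\pm\varepsilon_v) \subseteq [1-\varepsilon/2,1+\varepsilon/2]$, I obtain
\begin{equation*}
|A\text{-trails}(G')| = (1 \pm \varepsilon/2) \cdot C \cdot |ET(G)|,\qquad C := c_{G,G'}\prod_v \frac{N_v}{(d_v-1)!!},
\end{equation*}
with $C$ computable in polynomial time. The FPRAS for \#ET then queries the assumed FPRAS for \#A-trails on $G'$ with accuracy $\varepsilon/3$ and returns the estimate divided by $C$; a standard error composition turns this into an $\exp(\pm\varepsilon)$-approximation for $|ET(G)|$.

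The principal obstacle is the multiplicative error control across the $n$ gadgets: each contributes a factor of $(1\pm\varepsilon_v)$, forcing $\varepsilon_v = \Theta(\varepsilon/n)$, after which one must verify that the resulting $T_v$ remains polynomial — this works only because the mixing bound of Lemma~\ref{ggg} is logarithmic in $1/\varepsilon_v$. A secondary subtlety is the explicit computation of the constants $c_{G,G'}$ and $N_v$, both of which are routine.
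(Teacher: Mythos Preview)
Your proposal is correct and follows essentially the same approach as the paper's proof: replace each vertex by the layered shuffling gadget with per-gadget tolerance $\varepsilon/(4n)$, invoke Lemma~\ref{ggg} to get near-uniform route counts $|{\rm VR}(\sigma)|\approx 2^{D_d}/d!$, aggregate the $(d/2)!\,2^{d/2}$ permutations inducing each pairing, and divide the FPRAS output on $G'$ by the resulting explicit constant. Your auxiliary factor $c_{G,G'}$ is harmless but unnecessary under the paper's counting conventions (it equals $1$ there), and your observation that $T_v$ grows only logarithmically in $1/\varepsilon_v$ is exactly the point the paper relies on for polynomial size.
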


\begin{proof}
Given an Eulerian graph $G=(V,E)$ and an error parameter $\varepsilon >
0$, we can, w.l.o.g., assume that the degree of vertices of $G$ is at
least $4$ (vertices of degree $2$ can be removed by contracting
edges). We construct graph $G'$ by replacing each vertex $v$ of
degree $d > 2$ with a gadget with $d$ input half-edges, $d$ output
half-edges and $T_d = \Theta(d^2\log{d}\log(4d!n/\varepsilon))$ layers where
the $(2i-1)$-st and $2i$-th output half-edge are connected (for $1
\leq i \leq d/2$), and the input half-edges are used to replace
half-edges emanating from $v$ (that is, they are connected to the
input half-edges of other gadgets according to the edge incidence at
$v$). We have that $G'$ has $O(n^2T_n)=O(n^4\log n(n\log
n+\log(1/\varepsilon)))$ vertices and can be constructed in time
polynomial in $n$ and $1/\varepsilon$.

Let $\mathcal{A}$ be an FPRAS for {\sc \#A-trails} in 4-regular
maps by the assumption of the
theorem, we run $\mathcal{A}$ on $G'$ with error parameter
$\varepsilon/2$. Let $\mathcal{A}(G',\varepsilon/2)$ be the output of
$\mathcal{A}$ and $N_A$ be the number of A-trails in $G'$, we have
$\mathcal{A}(G',\varepsilon/2) \in [e^{-\varepsilon/2}N_A,e^{\varepsilon/2}N_A]$
with probability at least 2/3. This process can be done in time
polynomial in the size of $G'$ and $1/\varepsilon$, which is polynomial
in $n$ and $1/\varepsilon$.

Let $D_d$ be the number of vertices in the gadget of $d$ input
half-edges and $d$ output half-edges, and let $R_d =
2^{D_d}2^{d/2}(d/2)!/d!$ and $R = \prod_{d=4}^n R_d^{n_d}$ where
$n_d$ is the number of vertices of degree $d$ in $G$. Our algorithm
$\mathcal{B}$ will output
\begin{equation}\label{fpras_out}
\mathcal{B}(G,\varepsilon)= \mathcal{A}(G',\varepsilon/2)/R.
\end{equation}

We next prove that $\mathcal{B}$ is an FPRAS for {\sc \#ET} in
general graphs. For every Eulerian tour in $G$, the type of the
pairing in each vertex in $G$ is fixed. Note that each pairing
corresponds to $(d/2)!2^{d/2}$ permutations in a gadget with $d$
input half-edges and $d$ output half-edges. By Lemma~\ref{ggg}, we
have $$(1-\varepsilon/(4n))2^{D_d}/d! \leq |{\rm VR}(\sigma)| \leq
(1+\varepsilon/(4n))2^{D_d}/d!$$ for each $\sigma \in S_d$ where
${\rm VR}(\sigma)$ is counted in a gadget with $d$ input half-edges and
$d$ output half-edges. Thus, the number of A-trails in $G'$ which
correspond to the same Eulerian tour in $G$ is in
$[(1-\varepsilon/(4n))^n R,(1+\varepsilon/(4n))^n R]$. Let $N_E$ be the number
of Eulerian tours in $G$, we have $$N_A \in [(1-\varepsilon/(4n))^n R
N_E,(1+\varepsilon/(4n))^n R N_E],$$ and thus for $\varepsilon \leq 2n$, $N_A/R
\in [e^{-\varepsilon/2} N_E, e^{\varepsilon/4} N_E]$ (the case when $\varepsilon
> 2n$ is trivial, $\mathcal{B}$ can just output $3^n$). Since $\mathcal{A}(G',\varepsilon/2) \in
[e^{-\varepsilon/2}N_A,e^{\varepsilon/2}N_A]$ with probability at least $2/3$,
then by~\eqref{fpras_out}, we have $\mathcal{B}(G,\varepsilon) \in
[e^{-\varepsilon}N_E,e^{\varepsilon}N_E]$ with probability at least $2/3$.
This completes the proof. 
\end{proof}

\section{The power of $4$-regular gadgets}\label{sec:signature}

In this section, we consider $4$-regular gadgets which are $4$-regular graphs (or maps) with $4$ half-edges (which are labeled from $0$ to $3$ and are the only connection from outside).
There are three types of valid sets of routes inside the gadget, ${\rm VR}(\{0, 1\}, \{2, 3\})$,
${\rm VR}(\{0, 2\}, \{1, 3\})$ and ${\rm VR}(\{0, 3\}, \{1, 2\})$. Since we are interested in the relative size
of the above three sets, we define the {\em signature} of a gadget to be a triple $(\alpha,\beta,\gamma)$ such that
\begin{eqnarray*}
\alpha &=& |{\rm VR}(\{0, 1\}, \{2, 3\})|/N,\\
\beta &=& |{\rm VR}(\{0, 2\}, \{1, 3\})|/N,\\
\gamma &=& |{\rm VR}(\{0, 3\}, \{1, 2\})|/N,
\end{eqnarray*}
where $N=|{\rm VR}(\{0, 1\}, \{2, 3\})|+|{\rm VR}(\{0, 2\}, \{1, 3\})|+|{\rm VR}(\{0, 3\}, \{1, 2\})|$. Note that $\alpha,\beta,\gamma \geq 0$ and $\alpha+\beta+\gamma=1$.

We will investigate what values of $(\alpha,\beta,\gamma)$ $4$-regular gadgets can achieve.
The motivation mainly comes from the question of whether a vertex in $4$-regular
maps can be simulated (exactly or approximately) by a $4$-regular graph gadget.
We think results in this section will give some insights on designing approximation
algorithms for \#{\sc ET} in $4$-regular graphs and \#{\sc A-trails} in $4$-regular maps.

We will discuss the power of
$4$-regular maps and $4$-regular graphs separately in the following subsections. Before that, we first
note that by permuting the labels of half-edges of a gadget of signature $(\alpha,\beta,\gamma)$, we have all the
permutations of $(\alpha,\beta,\gamma)$ as signatures. We next introduce an operation $2$-glue on gadgets
which constructs a new gadget.

\begin{figure}[htb]
\begin{center}
\includegraphics[scale=0.7]{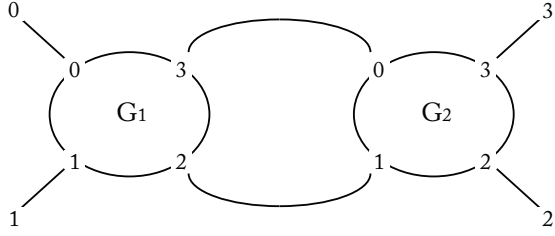}
\caption{$2$-glue of $G_1$ and $G_2$. The half-edges labeled $2$ and $3$ of $G_1$
are connected with half-edges labeled $1$ and $0$ of $G_2$, respectively. The
outmost straight lines are the half-edges of the $2$-glue of $G_1$ and $G_2$}\label{fig:2glue}
\end{center}
\end{figure}

Given two gadgets $G_1$ and $G_2$, the {\em $2$-glue} of $G_1$ and $G_2$ is a new gadget $G_3$ where half-edge $3$ and $2$ of $G_1$ are connected with half-edge $0$ and $1$ of $G_2$, respectively; and half-edge $0$ and $1$ of $G_1$ and half-edge $2$ and $3$ of $G_2$ are half-edges of $G_3$. The $2$-glue operation is illustrated in Figure~\ref{fig:2glue}.
Let $(\alpha_i,\beta_i,\gamma_i)$ be the signature of $G_i$, for $i=1,2,3$. By elementary counting, we have
\begin{eqnarray}
\alpha_3 &=& \frac{\alpha_1\beta_2+\alpha_1\gamma_2+\beta_1\alpha_2+\gamma_1\alpha_2}{1-\alpha_1\alpha_2},\label{eq:new1}\\
\beta_3 &=& \frac{\beta_1\gamma_2+\gamma_1\beta_2}{1-\alpha_1\alpha_2},\label{eq:new2}\\
\gamma_3 &=& \frac{\beta_1\beta_2+\gamma_1\gamma_2}{1-\alpha_1\alpha_2}.\label{eq:new3}
\end{eqnarray}

\subsection{The power of $4$-regular maps}

We will show in this section that $4$-regular map gadgets can achieve almost all rational
points $(\alpha,\beta,\gamma)$ on the plane $\alpha+\beta+\gamma=1$ and $\alpha,\beta,\gamma \geq 0$.

\begin{theorem}\label{thm:mg_pos}
For every $\alpha,\beta,\gamma \in \Q$ such that $0 \leq \alpha,\beta,\gamma < 1$ and $\alpha+\beta+\gamma=1$, there is a $4$-regular map gadget having signature $(\alpha,\beta,\gamma)$.
\end{theorem}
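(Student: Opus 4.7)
The plan is to realize any rational $(\alpha,\beta,\gamma)$ in the open simplex by 2-gluing a library of basic map gadgets and applying label permutations. The first step is an algebraic reduction. On the open simplex, introduce the variables $w = \alpha/(1-\alpha)$ and $t = (\gamma-\beta)/(\gamma+\beta)$; both are well-defined rational quantities and together give a bijection onto $\Q_{\geq 0}\times(\Q\cap[-1,1])$. A short computation from~\eqref{eq:new1}--\eqref{eq:new3} yields $1-\alpha_3=(1-\alpha_1)(1-\alpha_2)/(1-\alpha_1\alpha_2)$, hence $w_3=w_1+w_2$; and an analogous manipulation gives $\gamma_3-\beta_3=(\gamma_1-\beta_1)(\gamma_2-\beta_2)/(1-\alpha_1\alpha_2)$, with $\gamma_3+\beta_3=1-\alpha_3$, hence $t_3=t_1 t_2$. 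So under 2-glue, $w$ is additive and $t$ is multiplicative, while label permutation realizes the full $S_3$ action on $(\alpha,\beta,\gamma)$ and in particular produces the sign change $t\mapsto -t$. It therefore suffices to realize every rational pair $(w_0,t_0)\in\Q_{\geq 0}\times(\Q\cap[-1,1])$.

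Next, I would populate the library of basic map gadgets. A single degree-$4$ vertex with its cyclic ordering realizes, up to label permutations, $(w,t)\in\{(0,0),(1,\pm 1)\}$. A $2$-vertex map with two parallel edges, varying the cyclic orderings at each vertex, gives the orbits of $(2/3,1/3,0)$ and $(1/2,1/4,1/4)$, supplying also $(2,\pm 1),(1/2,\pm 1),(1,0),(1/3,0)$. A $3$-vertex map obtained by adjoining a third vertex to a double-edge pair by two single edges, with a cyclic ordering chosen to break the $\beta=\gamma$ symmetry, yields orbits of $(1/2,1/3,1/6)$ and $(4/7,2/7,1/7)$, providing further $(w,t)$-values such as $(1,\pm 1/3),(1/2,\pm 1/2),(1/5,\pm 1/5),(2/5,\pm 3/5),(4/3,\pm 1/3),(1/6,\pm 1/3)$. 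Each of these is checked by explicit A-trail enumeration at the three vertices.

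With the library in hand, combining basic gadgets by 2-glue and label permutation realizes the target $(w_0,t_0)$. The additive generation of $w_0$ is handled by ensuring the library contains basic gadgets of $w$-value $1/n$ for sufficiently many $n$; natural $n$-vertex generalizations of the $3$-vertex construction supply these, after which $w_0$ is expressible as a finite non-negative rational combination of basic $w$-values. The multiplicative generation of $t_0=p/q$ (with $\gcd(p,q)=1$, $|p|<q$) is the main obstacle: the library's basic $t$-values do not multiplicatively generate all of $\Q\cap(-1,1)$ from the small examples alone. To resolve this I would exhibit, for each such $p/q$, an explicit ladder of 2-glued double-edge blocks on $O(q)$ vertices whose cyclic orderings at each rung are tuned so that A-trail enumeration yields the three type-counts in the prescribed ratio $a:b:c$ with $(a-b)/(a+b)=p/q$ (after suitable renaming). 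Once a basic gadget with $t=t_0$ is secured, one 2-glues it with enough copies of the single-vertex gadget $(1/2,0,1/2)$, which contribute $(w,t)=(1,1)$ and so raise $w$ without disturbing $t$, padding the $w$-coordinate up to $w_0$.

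The hardest part is thus the last multiplicative step: while additive saturation of $w$ is routine once sufficiently small basic $w$-values are in the library, producing a basic gadget with an arbitrary prescribed $t=p/q$ requires a careful combinatorial argument on an $O(q)$-vertex ladder, in which one tracks how A-trails route through each rung under the chosen cyclic orderings and verifies that the three counts fall in the ratio dictated by $(p,q)$. I expect this ladder construction to form the technical core of the proof.
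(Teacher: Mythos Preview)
Your $(w,t)$ coordinate system and the observation that $w$ is additive and $t$ multiplicative under $2$-glue are correct and elegant. But the step you flag as the ``technical core'' --- building, for each rational $p/q$, a bespoke ladder gadget with $t=p/q$ --- is both unsubstantiated in your write-up and, more to the point, unnecessary. You only use the permutation $t\mapsto -t$, but the full $S_3$ action you yourself invoke does much more in your coordinates. On the boundary $t=-1$ (i.e.\ signatures $(\alpha,\beta,0)$), the transposition $(\alpha,\beta,0)\mapsto(\beta,\alpha,0)$ sends $w\mapsto 1/w$ while keeping $t=-1$; combined with additivity of $w$, this already generates every positive rational $w$ starting from the single-vertex gadget ($w=1$, $t=-1$). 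No library of small-$w$ gadgets is needed.

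Once you have every boundary signature $(\alpha,\beta,0)$, permuting gives every $(0,\beta',\gamma')$, which in your coordinates is a gadget with $w=0$ and \emph{arbitrary} rational $t=t_0$. Gluing this with a permuted boundary gadget of signature $(\alpha,0,1-\alpha)$, which has $w=w_0$ and $t=+1$, yields $(w_0\cdot 1 + 0,\ t_0\cdot 1)=(w_0,t_0)$. So arbitrary $t_0$ falls out for free from the boundary family plus one more $2$-glue --- there is no multiplicative-generation obstacle at all. This is exactly the paper's proof: Lemma~\ref{lem:m1} establishes the boundary family via the maps $q\mapsto 1/q$ and $(q_1,q_2)\mapsto q_1+q_2$, and the theorem is then one $2$-glue of two suitably permuted boundary gadgets. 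Your proposed ladder construction is a detour around an obstacle that does not exist.
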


The simplest $4$-regular map gadget $SMG$ is one vertex with $4$ half-edges. The signature of $SMG$ is $(1/2,1/2,0)$.
We will prove Theorem~\ref{thm:mg_pos} by showing that starting from $SMG$
and by applying the $2$-glue operation, we can achieve almost all rational signatures.

\begin{lemma}\label{lem:m1}
For every $q \in \Q^+$, there is a $4$-regular map gadget with signature $(\alpha,\beta,0)$ such that
$$\alpha/\beta = q.$$
\end{lemma}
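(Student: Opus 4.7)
The plan is to produce all ratios $q\in\Q^+$ by starting from $SMG$ (signature $(1/2,1/2,0)$, so $q=1$) and inductively generating the orbit of $1$ under two operations on the ratio, namely $q\mapsto 1/q$ and $q\mapsto q+1$. Since these two maps generate all of $\Q^+$ from $1$ via the standard reverse Euclidean / continued-fraction algorithm, any positive rational ratio will then be realizable by some map gadget whose signature has the form $(\alpha,\beta,0)$. Throughout, the key invariant to maintain is $\gamma=0$, so that the constructed gadgets stay in the family targeted by the lemma.

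The map $q\mapsto 1/q$ is realized by permuting the half-edge labels so that the counts of $\{0,1\}\{2,3\}$ and $\{0,2\}\{1,3\}$ are swapped; this turns signature $(\alpha,\beta,0)$ into $(\beta,\alpha,0)$. For $q\mapsto q+1$, take a gadget $G_1$ of signature $(\alpha,\beta,0)$, relabel its half-edges so that its signature becomes $(\alpha,0,\beta)$, and form the $2$-glue with $G_2=SMG$ in signature $(1/2,1/2,0)$. Substituting into \eqref{eq:new1}--\eqref{eq:new3} and using $\alpha+\beta=1$ gives
\[
(\alpha_3,\beta_3,\gamma_3)=\left(\frac{1}{2-\alpha},\ \frac{\beta}{2-\alpha},\ 0\right),
\]
whose ratio equals $1/\beta=(\alpha+\beta)/\beta=q+1$. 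The crucial observation is that $\gamma_3=0$ is preserved, because with this choice of relabeling $\beta_1\beta_2+\gamma_1\gamma_2=0\cdot(1/2)+\beta\cdot 0=0$, which is precisely what makes the construction iterable.

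Finally, I would invoke the standard fact that iterating $x\mapsto x+1$ and $x\mapsto 1/x$ reaches every positive rational starting from $1$. Given $q=p/r$ in lowest terms with height $p+r$, one reduces to $q-1$ when $q>1$ (strictly decreasing the height to $p$), or first inverts and then subtracts when $q<1$ (dropping the height in two steps from $p+r$ to $r$). The process terminates at height $2$, i.e.\ at $q=1$, and reversing the finite sequence of operations yields the desired construction of a gadget of signature $(\alpha,\beta,0)$ with $\alpha/\beta=q$. The main obstacle is not technically deep but conceptual: selecting the right label permutation $(\alpha,0,\beta)$ for $G_1$ so that both $\beta_1=0$ and $\gamma_2=0$ force $\gamma_3=0$; once this choice is spotted, the remaining verification is routine plugging into \eqref{eq:new1}--\eqref{eq:new3}, and the induction on continued-fraction depth completes the proof.
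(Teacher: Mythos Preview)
Your proof is correct and follows essentially the same approach as the paper: both realize $q\mapsto 1/q$ by relabeling half-edges and use a $2$-glue to add ratios while preserving the third coordinate equal to zero. The only cosmetic difference is that the paper glues gadgets of signatures $(\alpha_1,\beta_1,0)$ and $(\alpha_2,0,\gamma_2)$ to obtain the binary operation $(q_1,q_2)\mapsto q_1+q_2$ and then builds $s/t$ directly via $1\to t\to 1/t\to s/t$, whereas you restrict to the unary increment $q\mapsto q+1$ (gluing with $SMG$) and appeal to the Euclidean/continued-fraction generation of $\Q^+$.
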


\begin{proof}
Given a gadget with signature $(\alpha_1,\beta_1,0)$, we can permute the labels of the half-edges (without changing the size of the gadget) to achieve signature $(\beta_1,\alpha_1,0)$. Let $q_1=\alpha_1/\beta_1$. The above operation on gadgets defines a mapping
\begin{equation}\label{eq:map1}
q_1 \mapsto 1/q_1.
\end{equation}

Given two gadgets $G_1$ and $G_2$ with signatures $(\alpha_1,\beta_1,0)$ and $(\alpha_2,0,\gamma_2)$, respectively, let $G_3$ be the $2$-glue of $G_1$ and $G_2$. By~\eqref{eq:new1}--\eqref{eq:new3}, the signature of $G_3$ is
\begin{equation}\label{eq:ratiosum}
(\alpha_3,\beta_3,\gamma_3)=\left(\frac{\alpha_1\gamma_2+\beta_1\alpha_2}{1-\alpha_1\alpha_2},\frac{\beta_1\gamma_2}{1-\alpha_1\alpha_2},0\right).
\end{equation}
Let $q_1 = \alpha_1/\beta_1,q_2=\alpha_2/\gamma_2,q_3=\alpha_3/\beta_3$, by~\eqref{eq:ratiosum} we have $q_3=q_1+q_2$. The above operation defines a mapping \begin{equation}\label{eq:map2}
(q_1,q_2) \mapsto q_1+q_2.
\end{equation}

To prove the lemma, it is sufficient to prove that starting from $1$, there is a sequence of mappings using~\eqref{eq:map1} and~\eqref{eq:map2} which
will achieve any $q \in \Q^+$. Assume $q=s/t$ and $s,t \in \N$. Starting from $1$, we can achieve $t$ by using a sequence of~\eqref{eq:map2}. Then by using~\eqref{eq:map1} on $t$, we can achieve $1/t$. Finally, we can achieve $s/t$ by using a sequence of~\eqref{eq:map2} on $1/t$. This completes the proof.
\end{proof}

\begin{proof}[Proof of Theorem~\ref{thm:mg_pos}]
For every $0 \leq \alpha,\beta,\gamma <1$ and $\alpha+\beta+\gamma=1$, if one of $\alpha,\beta,\gamma$ is zero, then we are done by Lemma~\ref{lem:m1}. We next assume that $0 < \alpha,\beta,\gamma < 1$.
By Lemma~\ref{lem:m1}, there are gadgets $G_1$ and $G_2$ with signatures $(0,\beta/(1-\alpha),\gamma/(1-\alpha))$ and $(\alpha,0,1-\alpha)$, respectively. Let $G$ be the $2$-glue of $G_1$ and $G_2$, by~\eqref{eq:new1}--\eqref{eq:new3}, the signature of $G$ is $(\alpha,\beta,\gamma)$.
\end{proof}

\subsection{The power of $4$-regular graphs}

In this section, we investigate the power of $4$-regular graph gadgets. Let $S$ contains of vectors
$$
P (\alpha,\beta,\gamma)^{\mathrm T},
$$
where $P$ is a $3 \times 3$ permutation matrix, $\alpha \geq \beta \geq \gamma \geq 0$, $\alpha+\beta+\gamma=1$ and
$\gamma \geq f(\beta)$, where
\begin{equation}\label{eq:boundary}
f(x) = \frac{1}{2}(1-x)(1-\exp(2x/(x-1))).
\end{equation}
$S$ is illustrated in Figure~\ref{fig:boundary}. We first show that signatures in $S$ are achievable by $4$-regular graph gadgets:

\begin{figure}[htb]
\begin{center}
\includegraphics[scale=0.3]{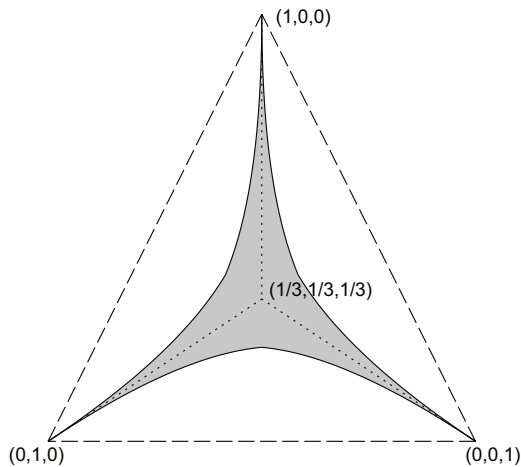}
\caption{The shaded region $S$ contains the signatures which can be achieved by $4$-regular graph gadgets as described in Theorem~\ref{thm:gg_pos}. The triangle region within dashed lines contains signatures $(\alpha,\beta,\gamma)$ s.t. $\alpha+\beta+\gamma=1$ and $\alpha,\beta,\gamma \geq 0$.}\label{fig:boundary}
\end{center}
\end{figure}

\begin{theorem}\label{thm:gg_pos}
For every $\vec{s} \in S$, and for every $\varepsilon > 0$, there is a $4$-regular graph gadget with signature $\vec{s}'$ such that
$$
||\vec{s}-\vec{s}'||_1 \leq \varepsilon.
$$
\end{theorem}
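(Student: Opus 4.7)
The plan is to prove the statement by exhibiting the achievable signatures as the closure of the $(X,Y,Y)_k$ and $(0,X,Y)_{p,k}$ families (from Section~\ref{sec:gadget}) under the 2-glue operation of \eqref{eq:new1}--\eqref{eq:new3} and under permutations of the four external labels, and then showing that this closure is $\varepsilon$-dense in $S$. Closure under 2-glue is immediate from Figure~\ref{fig:2glue} and the formulas; closure under label permutation is just relabeling; the real task is density.

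The easy part is the symmetric skeleton of $S$: the $(X,Y,Y)_k$ nodes of Lemma~\ref{prop:node_xyy} realize the diagonal signatures $(k/(k+2),1/(k+2),1/(k+2))$, and together with coordinate permutations they cover the three symmetry axes of the simplex $\alpha+\beta+\gamma=1$ on which two coordinates coincide. A key consequence of \eqref{eq:new2}--\eqref{eq:new3} is the identity
\[
\beta_3 - \gamma_3 \;=\; -\,\frac{(\beta_1-\gamma_1)(\beta_2-\gamma_2)}{1-\alpha_1\alpha_2},
\]
so a 2-glue in which one side is symmetric remains symmetric. Hence, to move off the diagonal one must start from an asymmetric basic gadget, and the $(0,X,Y)_{p,k}$ node from Lemma~\ref{prop:node_0xy} with $k\geq 2$ and odd $p\geq 3$ provides one: its second and third unnormalized signature components differ by $(B-A)^p>0$.

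The third and most technical step is to approach the boundary curve $\gamma=f(\beta)$. The appearance of $\exp(2\beta/(\beta-1))$ in $f$ strongly suggests a continuum limit in $p,k$: in the regime $k,p\to\infty$ with $u=p/k$ held fixed, $((k-1)/(k+1))^p \to e^{-2u}$, which matches the exponential form of $f$ under the reparametrization $u=\beta/(1-\beta)$. The plan is to take long 2-glue chains of $(0,X,Y)_{p,k}$-type gadgets (with label permutations inserted between successive copies to prevent the per-step asymmetries from cancelling), to expand \eqref{eq:new1}--\eqref{eq:new3} in this regime and show that the signature converges to a point on $\gamma=f(\beta)$, and finally to fill in the interior $\gamma>f(\beta)$ by 2-gluing a near-boundary signature against a suitably chosen element of the $(X,Y,Y)_k$ diagonal family. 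Continuity of the 2-glue formulas in their inputs then yields the claimed $\ell^1$-approximation with error $\varepsilon$.

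The main obstacle will be the continuum-limit analysis. The $1-\alpha_1\alpha_2$ denominator in \eqref{eq:new1}--\eqref{eq:new3} makes the iteration nonlinear, and since 2-glue is asymmetry-shrinking in magnitude the chain has to carefully balance asymmetry creation (from the asymmetric building block) against its natural decay, verifying that the limit trajectory is exactly $\gamma=f(\beta)$; this identification is where the exponent $2\beta/(\beta-1)$ and the prefactor $(1-\beta)/2$ need to emerge naturally from the calculation.
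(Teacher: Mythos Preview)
Your approach differs substantially from the paper's, and while the core observation is sound, there is a real gap in the final step.

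\textbf{What the paper does.} The paper never uses the $(X,Y,Y)_k$ or $(0,X,Y)_{p,k}$ families here; everything is built from the single-vertex gadget $SGG$ with signature $(1/3,1/3,1/3)$. First (Lemma~\ref{lem:gg_pos1}) it shows that iterated $2$-glues with $SGG$ populate the diagonal $\{(\alpha,\alpha,1-2\alpha)\}$ densely. Then, to reach a target $\vec s$, it introduces the one-parameter family of curves $f_c$ (with $f_0=f$ the boundary of $S$ and $f_{1/2}$ the diagonal), picks $c^*$ so that $\vec s$ lies on $f_{c^*}$, chooses a diagonal starting point near the $f_{c^*}$-diagonal intersection, and repeatedly $2$-glues with the near-identity gadget $(\delta/2,\,1-\delta,\,\delta/2)$ for small $\delta$. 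A first-order (Euler-method) expansion shows this iteration is a discretisation of the ODE~\eqref{eq:ggp8}, whose integral curves are exactly the $f_c$; tracking the $O(\delta^2)$ errors over $O(1/\delta)$ steps gives $O(\delta)$ total deviation from $f_{c^*}$, hence from $\vec s$. So the paper flows \emph{from} the diagonal \emph{towards} the boundary, not the other way around.

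\textbf{What you get right.} Your continuum-limit observation is correct and rather pretty: writing the $(0,X,Y)_{p,k}$ signature as
\[
\Bigl(\tfrac{p}{p+k+1},\ \tfrac{(k+1)(1-r^p)}{2(p+k+1)},\ \tfrac{(k+1)(1+r^p)}{2(p+k+1)}\Bigr),\qquad r=\tfrac{k-1}{k+1},
\]
and sending $k\to\infty$ with $u=p/k$ fixed gives $\bigl(\tfrac{u}{u+1},\ \tfrac{1-e^{-2u}}{2(u+1)},\ \tfrac{1+e^{-2u}}{2(u+1)}\bigr)$, which is exactly a permutation of the boundary parametrisation $(\cosh u,\, u e^u,\, \sinh u)/((1+u)e^u)$ from the proof of Theorem~\ref{thm:gg_neg}. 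This is a direct computation, not the ``main obstacle'' you anticipate.

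\textbf{The gap.} The step you treat as routine---``fill in the interior $\gamma>f(\beta)$ by $2$-gluing a near-boundary signature against a suitably chosen $(X,Y,Y)_k$ diagonal element''---is where the argument is missing. Your own identity $\beta_3-\gamma_3=-(\beta_1-\gamma_1)(\beta_2-\gamma_2)/(1-\alpha_1\alpha_2)$ forces you to orient the diagonal gadget so that its $\beta$ and $\gamma$ slots are unequal; but even granting that, you then have a two-parameter family (boundary parameter $u$, diagonal parameter $k$, plus finitely many permutation choices) and must prove it surjects onto the two-dimensional region $S$. That surjectivity is not automatic: it requires either an explicit inverse or a Jacobian-plus-boundary argument of the sort the paper carries out (at length) in Lemma~\ref{lem:boundaries}. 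The paper avoids this entirely by the ODE-flow mechanism, which by design lands on the prescribed curve $f_{c^*}$ through $\vec s$. If you want to complete your route, the honest work is exactly here, not in the limit you flagged.
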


On the other hand, we will show $S$ is closed under $2$-glue operations.

\begin{theorem}\label{thm:gg_neg}
For $i=1,2$, let $G_i$ be a $4$-regular graph gadget with signature $(\alpha_i,\beta_i,\gamma_i) \in S$. Let $G_3$ be the $2$-glue of $G_1$ and $G_2$ with signature $(\alpha_3,\beta_3,\gamma_3)$ defined by~\eqref{eq:new1}--\eqref{eq:new3}, hence $(\alpha_3,\beta_3,\gamma_3) \in S$.
\end{theorem}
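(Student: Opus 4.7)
The plan is first to reformulate the defining inequality of $S$ in a more tractable form. After elementary algebra, the constraint $\gamma \geq f(\beta)$ (with $\alpha \geq \beta \geq \gamma$ and $\alpha+\beta+\gamma=1$) is equivalent to
$$\alpha - \gamma \leq (1-\beta)\,h(\beta), \qquad \text{where } h(\beta) := e^{-2\beta/(1-\beta)}.$$
This form separates the ``extremal gap'' $\alpha-\gamma$ on the left from a function of the middle coordinate on the right. Next, I would derive from~\eqref{eq:new1}--\eqref{eq:new3} the identities
\begin{align*}
1-\alpha_3 &= \frac{(1-\alpha_1)(1-\alpha_2)}{1-\alpha_1\alpha_2},\\
\gamma_3 - \beta_3 &= \frac{(\beta_1-\gamma_1)(\beta_2-\gamma_2)}{1-\alpha_1\alpha_2},
\end{align*}
which show that $\alpha_3$ depends only on $\alpha_1,\alpha_2$, and that the $\beta_3$--$\gamma_3$ asymmetry factors as a product of the corresponding input asymmetries.

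Next, I would exploit the permutation symmetry of $S$. Since $S$ is invariant under all six permutations of its coordinates, relabeling the half-edges of $G_1$ and $G_2$ allows us to choose which coordinate of $(\alpha_i,\beta_i,\gamma_i)$ plays the role of ``$\alpha$'' in the 2-glue formula. This yields $3\times 3 = 9$ essentially distinct configurations, further collapsed by the $G_1 \leftrightarrow G_2$ symmetry of the formulas and by the fact (visible from the second identity above) that swapping $\beta_i \leftrightarrow \gamma_i$ only swaps $\beta_3 \leftrightarrow \gamma_3$ in the output. For each remaining case I would sort the output into $(\alpha_3^*,\beta_3^*,\gamma_3^*)$ and verify
$$\alpha_3^* - \gamma_3^* \leq (1-\beta_3^*)\,h(\beta_3^*).$$
A monotonicity or Lagrange-multiplier argument in the four free parameters $\alpha_i, \beta_i$ should reduce the verification to inputs lying on the boundary $\gamma_i = f(\beta_i)$ of $S$, after which ``well-inside'' points follow by continuity.

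The main obstacle is the boundary-to-boundary computation. Substituting $\alpha_i - \gamma_i = (1-\beta_i)\,h(\beta_i)$ into the 2-glue formulas and simplifying, I expect the desired inequality to collapse to a statement about the function $-\log h(\beta) = 2\beta/(1-\beta)$, which is a M\"obius transformation of $\beta$. This rational nature should make the algebra tractable, and equality should hold precisely when both inputs are on the boundary of $S$. Indeed, the specific form of $f$ in~\eqref{eq:boundary} appears to be reverse-engineered so that $S$ is the largest permutation-symmetric region closed under 2-glue that still contains the achievable signatures from Theorem~\ref{thm:gg_pos}. The transcendental appearance of $h$ is the main source of difficulty, but the M\"obius structure of $-\log h$ should tame it and make the final boundary inequality a one-dimensional computation.
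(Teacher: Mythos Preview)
Your overall architecture---reduce to inputs on $\partial S$, then verify the boundary-to-boundary case---matches the paper's, and your identities for $1-\alpha_3$ and $\gamma_3-\beta_3$ are correct and useful. But the central expectation on which your boundary analysis rests is false: the $2$-glue of two points of $\partial S$ does \emph{not} in general land on $\partial S$, so there is no algebraic identity to collapse to. In the paper's hyperbolic parametrization of $\partial T$ (the cone over $S$) by $(\cosh x,\,x e^x,\,\sinh x)$, only one of the nine permutation cases (the one where the middle coordinate of each input plays the role of $\alpha$ in the glue, the paper's Case~5) produces an output of the same form, with parameter $x+y$; even there one must handle $x+y>w$ separately. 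In the remaining cases the output is \emph{strictly} inside $S$, and the paper establishes this by a long case-by-case argument: normal-vector inequalities at well-chosen points of the boundary surface, Taylor bounds on $e^z$, Sturm-sequence positivity checks of explicit polynomials, and in two cases a Jacobian/simple-curve argument (their Lemma~\ref{lem:jacob}). Your M\"obius heuristic explains why the curve $f$ has the shape it does, but it will not replace this analysis.

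Your reduction step is also underspecified. ``Monotonicity or Lagrange multipliers'' is not obviously available because the sorting $\alpha_3^*\ge\beta_3^*\ge\gamma_3^*$ changes type across the parameter region, and critical points of the constraint may occur in the interior. The paper instead argues by contradiction: assuming $\vec r\notin S$, it uses nonvanishing of the Jacobian of $(\vec r_1,\vec r_2)$ with respect to $(\vec s_1,\vec s_2)$ (respectively $(\vec t_1,\vec t_2)$) to slide $\vec r$ further away while pushing $\vec s$ and then $\vec t$ toward $\partial S$; the degenerate Jacobian cases ($\vec s_1'=1$ or $\vec s_2'=\vec s_3'$) are dispatched directly. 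You would need an argument of comparable precision here before the boundary lemma can be invoked.
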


We performed experiment on all gadgets up to $7$ vertices with random signature from $S$ for each vertex, the result was in $S$. It seems that $S$ is the biggest region we can get for the signatures of $4$-regular graph gadgets. Based on the results of our experiment, we conjecture that $S$ contains all signatures of $4$-regular graph gadgets.

\begin{conjecture}
For every $4$-regular graph gadget with signature $\vec{s}$, $\vec{s} \in S$.
\end{conjecture}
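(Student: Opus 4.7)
The plan is to prove the conjecture by induction on the number of internal vertices of the gadget, using the closure property of $S$ established in Theorem~\ref{thm:gg_neg} together with closure under additional composition operations needed to generate every $4$-regular graph gadget from the single-vertex gadget. The base case is the gadget consisting of a single internal vertex with four external half-edges, which has signature $(1/3,1/3,1/3)$; one checks that $f(1/3) = \tfrac13(1-e^{-1}) < 1/3$, so this signature lies (strictly inside) $S$.

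For the inductive step, I would first dispose of the reducible case: if the gadget $G$ admits a $2$-edge-cut separating its external half-edges into two pairs, then $G$ is the $2$-glue of two smaller gadgets, and Theorem~\ref{thm:gg_neg} immediately places its signature in $S$. The remaining case is when $G$ is $3$-edge-connected. Here I would introduce further elementary operations that together with $2$-glue suffice to build every $4$-regular graph gadget from single vertices -- natural candidates are the $1$-glue of two gadgets and the \emph{self-loop} operation that identifies two half-edges within a single gadget -- derive the analogues of \eqref{eq:new1}--\eqref{eq:new3} for each, and verify that each operation preserves membership in $S$. If every $4$-regular graph gadget can be expressed as a sequence of such operations starting from single vertices, the induction closes.

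The main obstacle is twofold. First, proving a structural theorem that the proposed operations generate all $4$-regular graph gadgets is nontrivial: a $3$-edge-connected gadget does not decompose along a $2$-edge-cut, and one has to argue that some combination of $1$-gluing, self-looping, and vertex insertion/contraction always finds an extracting move. Second, the closure calculations themselves look delicate: the boundary $\gamma=f(\beta)=\tfrac12(1-\beta)(1-e^{-2\beta/(1-\beta)})$ is transcendental, so $S$ is not a polyhedron, and showing that the image under each composition operation still lies below this curve requires more than polynomial identities. I expect the boundary to be attained only as a limit of iterated compositions of a specific extremal family (plausibly a long chain of degree-$4$ vertices resembling the gadgets used in Section~\ref{sec:app}); identifying this family explicitly and using it as a certificate for tightness would be the natural route, since the exponential factor $e^{-2\beta/(1-\beta)}$ is suggestive of a fixed-point analysis of iterating one such step.

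An alternative, perhaps cleaner, route is to reformulate the question as a Holant problem with vertex signature $(1,1,1)$ (over the three pairings at each degree-$4$ vertex) and characterize the closure of the realizable gadget signatures as the projection of a convex body of pairing-weighted distributions. In that framework, $S$ would appear as the natural realizability region, and the transcendental boundary as its supporting envelope; however, translating such a geometric characterization into a rigorous combinatorial proof of the conjecture is itself a substantial problem and is likely where any attempted proof will stall.
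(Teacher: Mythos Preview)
The statement you are attempting to prove is stated in the paper as a \emph{conjecture}, not a theorem: the paper provides no proof, only experimental evidence (exhaustive checking up to $7$ vertices) together with the partial result Theorem~\ref{thm:gg_neg} that $S$ is closed under $2$-glue. So there is no ``paper's own proof'' to compare against; you are proposing an attack on an open problem.

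Your plan has a genuine structural gap that you partly anticipate but understate. The induction you describe requires that every $4$-regular graph gadget with $4$ external half-edges decompose, via operations that stay within the world of $4$-half-edge gadgets, into strictly smaller such gadgets. The $2$-glue case (a $2$-edge-cut separating the half-edges into two pairs) is fine and is exactly what Theorem~\ref{thm:gg_neg} handles. But the operations you propose for the remaining case---$1$-glue and self-loop---do \emph{not} keep you in that world: removing a single internal vertex from a $3$-edge-connected gadget leaves a gadget with $8$ half-edges, whose ``signature'' is no longer a point in a $2$-simplex but a vector indexed by the $105$ pairings of $8$ points. To run the induction you would need a higher-dimensional analogue $S_d$ of $S$ for every even number $d$ of half-edges, together with closure of these sets under all the gluing operations that connect different $d$'s; defining $S_d$ correctly (so that it is both closed and tight) is essentially the whole problem, and nothing in the paper suggests what $S_d$ should be for $d>4$.

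Your alternative Holant/convex-geometry reformulation is closer in spirit to what would actually be needed, but as you note it does not by itself yield a proof. In short, the obstacle is not the analysis of any single operation but the absence of an inductive invariant that survives passage through gadgets with more than four dangling half-edges; this is precisely why the authors state the result as a conjecture.
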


In the rest of this section, we will prove Theorem~\ref{thm:gg_pos} and Theorem~\ref{thm:gg_neg}.

\subsubsection{Proof of Theorem~\ref{thm:gg_pos}}

The simplest $4$-regular graph gadget $SGG$ is one vertex with $4$ half-edges, and the signature of $SGG$ is $(1/3,1/3,1/3)$. We will first show in the following lemma, that starting from $SGG$, we can construct a gadget with signature close enough to $(\alpha,\alpha,1-2\alpha)$, for $\alpha \in (0,1/3)$. This achieves all points on the segments $((1,0,0),(1/3,1/3,1/3))$, $((0,1,0),(1/3,1/3,1/3))$, and $((0,0,1),(1/3,1/3,1/3))$, shown as dotted lines in Figure~\ref{fig:boundary}.

\begin{lemma}\label{lem:gg_pos1}
For every $r,\varepsilon \in \R$, $0 \leq r \leq 1$ and $\varepsilon > 0$, there is a gadget $G$ (a $4$-regular graph with $4$ half-edges) of size $n$ and with signature $(\alpha,\alpha,1-2\alpha)$ such that $|r-\alpha/(1-2\alpha)| \leq \varepsilon$ and $n \leq \lceil 2/\varepsilon \rceil$.
\end{lemma}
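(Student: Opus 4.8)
The plan is to build $G$ from the one-vertex gadget $SGG$ by repeatedly $2$-gluing a single vertex, analyzed as a two-map iterated function system. For a type-symmetric gadget with sorted signature $(1-2\alpha,\alpha,\alpha)$ write $\rho=\alpha/(1-2\alpha)\in[0,1]$ for its ratio. Plugging $SGG=(1/3,1/3,1/3)$ into the composition rules~\eqref{eq:new1}--\eqref{eq:new3} (using that $SGG$ is fully symmetric, so either half-edge orientation of the other factor is available) shows that $2$-gluing a vertex onto a type-symmetric gadget again produces a type-symmetric gadget, and that its ratio transforms in one of two ways depending on which orientation we use: placing the large entry in position $0$ gives $M_1(\rho):=\rho/(1+\rho)$ (this is just the reciprocal-additive ``series'' glue, $1/\rho\mapsto 1/\rho+1$), and placing a small entry in position $0$ gives $M_2(\rho):=(1+\rho)/(1+3\rho)$. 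Each move costs one vertex, and the output is again of the shape $(\text{large},\text{small},\text{small})$, so moves of either type can be freely composed. Hence from $SGG$ every word $w\in\{M_1,M_2\}^{\ell}$ is realized by an $(\ell+1)$-vertex type-symmetric graph gadget whose ratio is $M_w(1):=M_{i_1}(M_{i_2}(\cdots M_{i_\ell}(1)\cdots))$.

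Next I would observe that $M_1,M_2$ are fractional-linear maps with $M_1([0,1])=[0,\tfrac12]$ and $M_2([0,1])=[\tfrac12,1]$, so $M_1([0,1])\cup M_2([0,1])=[0,1]$; since each $M_i$ is monotone, $M_w([0,1])$ is an interval, and inducting on $\ell$ gives $\bigcup_{|w|=\ell}M_w([0,1])=[0,1]$. Thus for any target $r$ and any $\ell$ there is a word $w$ of length $\ell$ with $r\in M_w([0,1])$, and since $1\in[0,1]$ as well, $|M_w(1)-r|\le |M_w([0,1])|$. Writing $A_w$ for the product of the nonnegative matrices $A_1$ (rows $(1,0),(1,1)$) and $A_2$ (rows $(1,1),(3,1)$) representing $w$, and letting $(c,d)$ be its bottom row, one has $|M_w([0,1])|=|\det A_w|/\bigl(d(c+d)\bigr)=2^{m(w)}/\bigl(d(c+d)\bigr)$, where $m(w)$ is the number of $M_2$'s in $w$.

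The main remaining (and purely elementary) step is to prove $|M_w([0,1])|\le 2/(\ell+2)$, i.e.\ $d(c+d)/2^{m(w)}\ge(\ell+2)/2$. I would induct on $\ell$, tracking the bottom row under right-multiplication by $A_1$ ($(c,d)\mapsto(c+d,d)$) and $A_2$ ($(c,d)\mapsto(c+3d,c+d)$), and establishing along the way the two auxiliary facts that $c\ge d$ for every nonempty $w$, and $d^{2}\ge 2^{m(w)-1}$ for every nonempty $w$ (an $A_2$-step at least doubles $d$ while raising $m$ by one, using $c\ge d$). A one-line computation then shows each move increases $d(c+d)/2^{m}$ by exactly $d^{2}/2^{m}$ (for $A_1$) or $(c+d)^{2}/2^{m}$ (for $A_2$), both $\ge\tfrac12$ by the auxiliary bound; since the empty word gives value $1$, after $\ell$ moves the value is $\ge 1+\ell/2=(\ell+2)/2$, as desired. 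To finish, given $r$ and $\varepsilon$ pick $\ell$ with $\ell+1=\lceil 2/\varepsilon\rceil$ and the corresponding $w$; the resulting gadget has $n=\lceil 2/\varepsilon\rceil$ vertices, type-symmetric signature $(\alpha,\alpha,1-2\alpha)$, and $|\alpha/(1-2\alpha)-r|=|M_w(1)-r|\le 2/(\ell+2)\le\varepsilon$; the degenerate regime $\varepsilon\ge 2$ (take $SGG$ itself) and the endpoints $r=0,1$ are covered by the same bound. The place to be careful is the orientation bookkeeping of the first paragraph — verifying that after each $2$-glue the signature really returns to the shape $(\text{large},\text{small},\text{small})$ so that the next move of either type is legal — but this falls straight out of the explicit formulas.
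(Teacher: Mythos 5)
Your proof is correct, and it reaches the same two fractional-linear maps $M_1(\rho)=\rho/(1+\rho)$ and $M_2(\rho)=(1+\rho)/(1+3\rho)$ that the paper's Lemma~\ref{lem:gg_pos1} proof extracts from $2$-gluing one copy of $SGG$ onto a type-symmetric gadget (the paper's~\eqref{eq:mapt1},~\eqref{eq:mapt2}). From that point, though, your route is genuinely different. The paper's proof builds explicit sets $L_i,U_i,W_i$ of achievable ratios level by level, and proves by induction four facts about minima, maxima, and consecutive gaps (gap $\le 1/i$ in $L_i$, gap $\le 2/(i+2)$ in $U_i$). You instead work in the iterated-function-system picture: since $M_1([0,1])\cup M_2([0,1])=[0,1]$, the length-$\ell$ cylinders $M_w([0,1])$ tile $[0,1]$; the Möbius determinant identity $|M_w([0,1])|=|\det A_w|/(d(c+d))=2^{m(w)}/(d(c+d))$ turns the gap bound into a bound on a single potential $F(w)=d(c+d)/2^{m(w)}$; and the matrix recursion for the bottom row shows each letter increments $F$ by $d^2/2^m$ or $(c+d)^2/2^m$, each $\ge 1/2$ via the auxiliary facts $c\ge d$ and $d^2\ge 2^{m-1}$, which you correctly verify propagate. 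This gives $F\ge(\ell+2)/2$, hence $|M_w(1)-r|\le 2/(\ell+2)\le\varepsilon$ for $\ell+1=\lceil 2/\varepsilon\rceil$, and $n=\ell+1$ vertices as required. I checked the matrix algebra ($A_1=\bigl(\begin{smallmatrix}1&0\\1&1\end{smallmatrix}\bigr)$, $A_2=\bigl(\begin{smallmatrix}1&1\\3&1\end{smallmatrix}\bigr)$, $\det A_2=-2$, row updates $(c,d)\mapsto(c+d,d)$ and $(c,d)\mapsto(c+3d,c+d)$), the increment identities, and the preservation of the type-symmetric shape under gluing with $SGG$; all are right. Your argument buys a cleaner structural invariant (a single scalar potential with a uniform $\ge 1/2$ increment) in place of the paper's four-part induction on two families of sets, at the cost of introducing the matrix/determinant machinery; the quantitative conclusion is the same.
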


\begin{proof}
Let $G_1$ be a gadget with signature $(\alpha_1,\alpha_1,1-2\alpha_1)$. By~\eqref{eq:new1}--\eqref{eq:new3}, the $2$-glue of $G_1$ and $SGG$ is
\begin{equation*}
(\alpha,\beta,\gamma)=\left(\frac{1+\alpha_1}{3-\alpha_1},\frac{1-\alpha_1}{3-\alpha_1},\frac{1-\alpha_1}{3-\alpha_1}\right).
\end{equation*}
Let $q_1 = \alpha_1/(1-2\alpha_1)$, $q=\beta/\alpha=(1-\alpha_1)/(1+\alpha_1)$. Hence we have $q=(1+q_1)/(1+3q_1)$. In this way, we define a mapping
\begin{equation}\label{eq:mapt1}
q_1 \mapsto (1+q_1)/(1+3q_1).
\end{equation}

Let $G_2$ be a gadget with signature $(1-2\alpha_2,\alpha_2,\alpha_2)$. . By~\eqref{eq:new1}--\eqref{eq:new3}, the $2$-glue of $G_2$ and $SGG$ is
\begin{equation*}
(\alpha,\beta,\gamma)=\left(\frac{1-\alpha_2}{1+\alpha_2},\frac{\alpha_2}{1+\alpha_2},\frac{\alpha_2}{1+\alpha_2}\right).
\end{equation*}
Let $q_2 = \alpha_2/(1-2\alpha_2)$, $q=\beta/\alpha=\alpha_2/(1-\alpha_2)$. Hence we have $q=q_2/(1+q_2)$. In this way, we define another mapping
\begin{equation}\label{eq:mapt2}
q_2 \mapsto q_2/(1+q_2).
\end{equation}

To prove the statement of the lemma, it is sufficient to prove that starting from $1$ and by applying mappings~\eqref{eq:mapt1} and~\eqref{eq:mapt2}, we can achieve $q$ which is close enough to $r$.

\begin{claim}
Let $L_1=U_1=W_1=\emptyset$. Define for every $i > 1$
$$L_{i+1} = \{ q/(1+q) \,|\, q \in W_{i}\cup\{1\} \},$$
$$U_{i+1} = \{ (1+q)/(1+3q) \,|\, q \in W_{i}\cup\{1\} \},$$
$$W_{i+1} = L_{i+1} \cup U_{i+1}.$$
For $i \geq 3$, the following properties hold:
\begin{enumerate}
\item the minimum element in $L_i$ is $1/i$, the maximum element in $L_i$ is $1/2$;
\item let $q_1,q_2$ be consecutive elements in $L_i$, (that is, there is no $q_3 \in L_i$
satisfying $q_1 < q_3 < q_2$,) then $q_2-q_1 \leq 1/i$;
\item the maximum element in $U_i$ is $i/(i+2)$, the minimum element in $U_i$ is $1/2$;
\item let $q_1,q_2$ be consecutive elements in $U_i$, then $q_2-q_1 \leq 2/(i+2)$.
\end{enumerate}
\end{claim}

\begin{proof}[Proof of Claim]
We prove the claim by induction on $i$. For the base case of $i=3$, $L_3=\{1/3,1/2\}$ and $U_3=\{1/2,3/5\}$, the statement is true.

We next consider the case of $i = k>3$. Note that the mapping $q \mapsto q/(1+q)$ is monotonically increasing. By the induction hypothesis (the first and the third property),
\begin{equation}\label{eq:hypo1}
\begin{split}
\mbox{the minimum element in $W_{k-1}$ is $1/(k-1)$,}\\
\mbox{and the maximum element in $W_{k-1}$ is $(k-1)/(k+1)$.}
\end{split}
\end{equation}
Hence, the minimum element in $L_k$ is $1/k$ and the maximum element is $1/2$ (which proves the first property).

We next prove the second property, that is, consecutive elements $q'_1,q'_2 \in L_k$ satisfy $q'_2-q'_1 \leq 1/k$. There are two cases depending on whether $q'_1,q'_2$ are both mapped from $L_{k-1}$ or from $U_{k-1}$. (The case that one is mapped from $L_{k-1}$ and the other from $U_{k-1}$ can happen only if one of $q'_1,q'_2$ is mapped from $1/2$. This is because the mapping $q \mapsto q/(1+q)$ is monotone and $1/2$ is the only element in both $L_{k-1}$ and $U_{k-1}$ by induction hypothesis.)
\begin{description}
\item[Case 1: both from $L_{k-1}$.] Let $q_1,q_2$ be consecutive elements in $L_{k-1}$ such that $q'_i = q_i/(1+q_i)$, for $i=1,2$. Then by the induction hypothesis (the first and the second property), we have
\begin{equation}\label{eq:hypo2}
\begin{split}
q_1,q_2 \geq 1/(k-1),\\
q_2-q_1 \leq 1/(k-1).
\end{split}
\end{equation}
Then we have
\begin{eqnarray*}
q'_2 - q'_1 & = & \frac{q_2-q_1}{(1+q_1)(1+q_2)} \leq \frac{1}{k-1} \cdot \frac{1}{(1+1/(k-1))(1+1/(k-1))} \leq \frac{1}{k},
\end{eqnarray*}
where the second inequality follows from~\eqref{eq:hypo2}.
\item[Case 2: both from $U_{k-1}$.] Let $p_1,p_2$ be consecutive elements in $U_{k-1}$ such that  $q'_i = p_i/(1+p_i)$, for $i=1,2$. Then by the induction hypothesis (the first, the third and the fourth property), we have
\begin{equation}\label{eq:hypo3}
\begin{split}
p_1,p_2 \geq 1/2 \geq 1/(k-1),\\
p_2-p_1 \leq 2/(k+1).
\end{split}
\end{equation}
Then we have
\begin{eqnarray*}
q'_2 - q'_1 & = & \frac{p_2-p_1}{(1+p_1)(1+p_2)} \leq \frac{2}{k+1} \cdot \frac{1}{(1+1/2)(1+1/2)} <  \frac{1}{k},
\end{eqnarray*}
where the second inequality follows from~\eqref{eq:hypo3}.
\end{description}

Note that the mapping $q \mapsto (1+q)/(1+3q)$ is monotonically decreasing. By~\eqref{eq:hypo1}, we have
the maximum element in $U_k$ is $k/(k+2)$ and the minimum element is $1/2$, (which proves the third property).

We next prove the fourth property, that is, consecutive elements $q'_1,q'_2 \in U_k$ satisfy $q'_2-q'_1 \leq 2/(k+2)$. There are two cases depending on whether $q'_1,q'_2$ are both mapped from $L_{k-1}$ or from $U_{k-1}$.
\begin{description}
\item[Case 1: both from $L_{k-1}$.] Let $q_1,q_2$ be consecutive elements in $L_{k-1}$ such that $q'_i = (1+q_i)/(1+3q_i)$, for $i=1,2$. Then we have
\begin{eqnarray*}
q'_2 - q'_1 & = & \frac{2(q_2-q_1)}{(1+3q_1)(1+3q_2)} \leq \frac{2}{k-1} \cdot \frac{1}{(1+3/(k-1))(1+3/(k-1))} \leq \frac{2}{k+2},
\end{eqnarray*}
where the second inequality follows from~\eqref{eq:hypo2}.
\item[Case 2: both from $U_{k-1}$.]
Let $p_1,p_2$ be any two distinct elements in $U_{k-1}$ such that $q'_i = (1+p_i)/(1+3p_i)$, for $i=1,2$, then we have
\begin{eqnarray*}
q'_2 - q'_1 & = & \frac{2(p_2-p_1)}{(1+3p_1)(1+3p_2)} \leq  \frac{2}{k+1} \cdot \frac{2}{(1+3/(k-1))(1+3/2)} <  \frac{2}{k+2},
\end{eqnarray*}
where the second inequality follows from~\eqref{eq:hypo3}.
\end{description}

\end{proof}

The lemma then follows from the claim. This is because $W_{i} \cup \{1\}$ is the set of $q$ we can achieve after we applying a sequence of $i-1$ mappings from $1$ using~\eqref{eq:mapt1} and~\eqref{eq:mapt2}, and the difference between two consecutive elements in $W_{i} \cup \{1\}$ is at most $2/(i+2)$.

\end{proof}

\begin{proof}[Proof of Theorem~\ref{thm:gg_pos}]
Given any signature $\vec{s} \in S$ for which we want to construct a gadget, w.l.o.g., we assume that $\vec{s}_3 \geq \vec{s}_1 \geq \vec{s}_2$.
Let $G$ be a gadget with signature $(\delta/2,1-\delta,\delta/2)$, for some $\delta \in (0,1)$ to be fixed later. Let $G_0$ be a gadget with signature $(\alpha_0,\alpha_0,1-2\alpha_0)$, where $\alpha_0$ depends on $\vec{s}$ and $\delta$, and will be fixed later. We define a sequence of gadgets $G_1,\ldots,G_t$ such that $G_i$ is a $2$-glue of $G_{i-1}$ and $G$, for $1 \leq i \leq t$. We will choose $t$ such that $\alpha_{t-1} < \vec{s}_1 \leq \alpha_t$ where $(\alpha_{i},\beta_{i},\gamma_{i})$ is the signature of $G_i$, for $1 \leq i \leq t$. Our goal is to show that by properly choosing $\delta$, $(\alpha_t,\beta_t,\gamma_t)$ is close to $\vec{s}$.

Now we show how to choose the initial $G_0$. For $c \in [0,1]$, we define a function $f_c(x)$ as follows.
\begin{equation*}
f_c(x) = -\frac12(x-1)(1+(2c-1)\exp(2x/(x-1))).
\end{equation*}
We have
\begin{equation}\label{eq:ggp8}
\frac{{\mathrm d}}{{\mathrm d}x}f_c(x) = \frac{-3f_c(x)+1-x+f_c(x)x}{(x-1)^2}.
\end{equation}
Note that when $x \in [0,1)$, $f_c(x)$ is concave for every $c \in [0,1/2]$. This is because the second derivative of $f_c(x)$ is
$$
2(1-2c)\exp(2x/(x-1))/(x-1)^3,
$$
which is non-positive when $x \in [0,1)$ and $c \in [0,1/2]$. Also note that for every $x \in [0,1)$ and $c > c'$, 
\begin{equation}\label{eq:01}
f_{c}(x) \geq f_{c'}(x).
\end{equation}

Since $\vec{s} \in S$ and $\vec{s}_3 \geq \vec{s}_1 \geq \vec{s}_2$, by the definition of $S$, we have $\vec{s}_2 \geq f(\vec{s}_1)$ where $f(x)$ is defined in~\eqref{eq:boundary}. Note that $f(x)=f_0(x)$. Thus we have
\begin{equation}\label{eq:ggp5}
\vec{s}_2 \geq f_0(\vec{s}_1).
\end{equation}
Since $\vec{s}_2 \leq \vec{s}_3 = 1-\vec{s}_1-\vec{s}_2$, then we have
\begin{equation}\label{eq:ggp6}
\vec{s}_2 \leq (1-\vec{s}_1)/2 = f_{1/2}(\vec{s}_1).
\end{equation}
By \eqref{eq:ggp5}, \eqref{eq:ggp6} and continuity, there is $c^* \in [0,1/2]$ such that $\vec{s}_2 = f_{c^*}(\vec{s}_1)$. 

Now we show that $f_{c^*}(x)=x$ has at least one solution in $[0,1/3]$. This follows from the facts that $f_{c^*}(0)=c^* \geq 0$, $f_{c^*}(1/3) \leq f_{1/2}(1/3) = 1/3$, and $f_{c^*}(x)$ is continuous. We let $\alpha$ be a solution of $f_{c^*}(x)=x$, that is
\begin{equation}\label{eq:ggp9}
f_{c^*}(\alpha)=\alpha.
\end{equation}
By Lemma~\ref{lem:gg_pos1}, there is a gadget with signature $(\alpha_0,\alpha_0,1-2\alpha_0)$ such that $0 \leq \alpha_0-\alpha \leq \delta$. This will be our $G_0$.

We use the following method to estimate the signature of $G_i$.
For $0 \leq i \leq t-1$, $G_{i+1}$ has signature
\begin{equation*}
(\alpha_{i+1},\beta_{i+1},\gamma_{i+1})=\Big(\frac{2\alpha_i-\alpha_i\delta+\beta_i\delta+\gamma_i\delta}{2-\alpha_i\delta}, \frac{2\beta_i-2\beta_i\delta+\gamma_i\delta}{2-\alpha_i\delta}, \frac{\beta_i\delta+2\gamma_i-2\gamma_i\delta}{2-\alpha_i\delta}\Big).
\end{equation*}
Note that $\alpha_{i+1} \geq \alpha_i$, for $0 \leq i \leq t-1$.

By taking the derivatives of $(\alpha_{i+1},\beta_{i+1},\gamma_{i+1})$ w.r.t. $\delta$, we have
\begin{eqnarray}
\frac{\partial \alpha_{i+1}}{\partial \delta}\Big|_{\delta=0} &=& -\alpha_i+1/2+\alpha_i^2/2 = (\alpha_i-1)^2/2,\label{eq:ggp1}\\
\frac{\partial \beta_{i+1}}{\partial \delta}\Big|_{\delta=0} &=& -3\beta_i/2+1/2-\alpha_i/2+\alpha_i\beta_i/2,\label{eq:ggp2}\\
\frac{\partial \gamma_{i+1}}{\partial \delta}\Big|_{\delta=0} &=& 3\beta_i/2-1+\alpha_i+(1-\alpha_i-\beta_i)\alpha_i/2.\nonumber
\end{eqnarray}
Thus, by Taylor expansion, we have
\begin{equation*}
(\alpha_{i+1},\beta_{i+1},\gamma_{i+1})=(\alpha_i,\beta_i,\gamma_i) + \Big(\frac{\partial \alpha_{i+1}}{\partial \delta}\Big|_{\delta=0},\frac{\partial \beta_{i+1}}{\partial \delta}\Big|_{\delta=0},\frac{\partial \gamma_{i+1}}{\partial \delta}\Big|_{\delta=0}\Big)\delta + O(\delta^2).
\end{equation*}
Note that
\begin{equation}\label{eq:ggp10}
\alpha_{i+1} - \alpha_i = \Theta(\delta),
\end{equation}
by~\eqref{eq:ggp1} and the fact that $0 \leq \alpha_{t-1} < \vec{s}_1 \leq 1/2$.
If $\beta_i=f_{c_i}(\alpha_i)$ for some $c_i \in [0,1]$, then by~\eqref{eq:ggp1}, \eqref{eq:ggp2} and~\eqref{eq:ggp8}, we have
\begin{equation}\label{eq:ggp4}
|\beta_{i+1}-f_{c_i}(\alpha_{i+1})| \leq O(\delta^2).
\end{equation}

For the sequence of $(\alpha_i,\beta_i,\gamma_i)_{i=0}^t$, we will show that each $(\alpha_i,\beta_i)$ is on the curve $f_{c_i}$ for some $c_i \in [0,1/2]$ and each $f_{c_i}$ will not deviate from $f_{c^*}$ by too much. In this way, we can upper bound the distance between $(\alpha_t,\beta_t,\gamma_t)$ and~$\vec{s}$.

First we show, by induction, that for small enough $\delta$, for every $0 \leq i \leq t$, there is some $c_i \in [0,1/2]$ such that $\beta_i = f_{c_i}(\alpha_i)$. We consider the case of $i=0$. By~\eqref{eq:ggp8} and~\eqref{eq:ggp9}, we have the derivative of $f_{c^*}(x)$ at $x=\alpha$ is less than $1$. Since $\alpha_0 \geq \alpha$, by~\eqref{eq:01} and the concavity of $f_{c^*}(x)$, we have $\alpha_0 = f_{c_0}(\alpha_0)$ for some $c_0 \geq c^*$. Moreover, since $\alpha_0 \leq \alpha+\delta$, we have $c_0 \leq 1/2$ for some small enough $\delta$.

Now we assume that there is $c_k \in [0,1/2]$ such that $f_{c_k}(\alpha_k)=\beta_k$. Note that by~\eqref{eq:ggp1}, $\alpha_{k+1} \geq \alpha_k$ for small enough $\delta$. Since $f_{c_k}(x)$ is concave, by~\eqref{eq:ggp1}, \eqref{eq:ggp2}, \eqref{eq:ggp8} and~\eqref{eq:01}, there is $c_{k+1} \geq c_k$ such that $f_{c_{k+1}}(\alpha_{k+1})=\beta_{k+1}$. Moreover, note that if $\beta_\ell =f_{1/2}(\alpha_\ell)$ then $\beta_{\ell+1} =f_{1/2}(\alpha_{\ell+1})$ for every $0 \leq \ell \leq t-1$. Hence for small enough $\delta$, we have $c_{k+1} \leq 1/2$. Thus, there is $c_i \in [0,1/2]$ such that $\beta_i = f_{c_i}(\alpha_i)$ for all $0 \leq i \leq t$.

We next claim that for every $c > c'$, $f_c(x)-f_{c'}(x)$ is monotonically decreasing. This is because
$$
\frac{{\mathrm d}}{{\mathrm d}x}(f_c(x)-f_{c'}(x)) = -\frac{(x-3)(c-c')}{x-1}\exp(2x/(x-1)),
$$
which is negative when $x \in [0,1)$.

Now we prove $|\beta_t-\vec{s}_2| \leq O(\delta)$. By~\eqref{eq:ggp10}, and the fact that $\vec{s}_1 \leq 1/2$, we have
\begin{equation}\label{eq:ggp7}
t=O(1/\delta).
\end{equation}
Hence, by~\eqref{eq:ggp4},~\eqref{eq:ggp7} and the above claims, we have
\begin{eqnarray*}
|\beta_t - \vec{s}_2| &\leq& |f_{c_0}(\alpha_t)-\vec{s}_2| + \sum_{i=0}^{t-1} |f_{c_{i+1}}(\alpha_t)-f_{c_i}(\alpha_t)|\\
&\leq& |f_{c_0}(\alpha_t)-f_{c^*}(\alpha_t)| + |f_{c^*}(\alpha_t)-\vec{s}_2| + \sum_{i=0}^{t-1} |f_{c_{i+1}}(\alpha_{i+1})-f_{c_i}(\alpha_{i+1})|\\
&\leq& |f_{c_0}(\alpha_0)-f_{c^*}(\alpha_0)| + |f_{c^*}(\alpha_t)-\vec{s}_2| + \sum_{i=0}^{t-1} |\beta_{i+1}-f_{c_i}(\alpha_{i+1})|\\
&\leq& O(\delta).
\end{eqnarray*}

Since $|\alpha_t-\vec{s}_1| \leq O(\delta)$, and $|\gamma_t-\vec{s}_3| \leq |\alpha_t-\vec{s}_1|+|\beta_t-\vec{s}_2|$, we can choose $\delta = O(\varepsilon)$ to make
$$
|\alpha_t-\vec{s}_1|+|\beta_t-\vec{s}_2|+|\gamma_t-\vec{s}_3| \leq \varepsilon.
$$
This completes the proof.

\end{proof}

\subsubsection{Proof of Theorem~\ref{thm:gg_neg}}
Let $u$ be the solution of
$$
\frac{1}{2}(1-x)(1+\exp(2x/(x-1))) = x,
$$
on the interval $[0,1]$. Since the left hand side is decreasing in $x$, we have that the solution is unique. We have $u \leq 39/100$.

From~\eqref{eq:boundary}, it follows that the boundary $\partial S$ of $S$ contains the signatures
$$
P\left(\frac{1}{2}(1-y)(1+\exp(2y/(y-1))),y,\frac{1}{2}(1-y)(1-\exp(2y/(y-1)))\right)^{\mathrm T},
$$
where $P$ is a $3 \times 3$ permutation matrix, and $y \in [0,u]$. For $y=x/(x+1)$, we obtain that $\partial S$
contains the signatures
$$
P\left(\frac{\cosh(x)}{(x+1){\mathrm e}^x},\frac{x{\mathrm e}^x}{(x+1){\mathrm e}^x},\frac{\sinh(x)}{(x+1){\mathrm e}^x}\right)^{\mathrm T},
$$
where $x \in [0,u/(1-u)]$. Let $w=u/(1-u)$, we have $1/2 \leq w\leq 64/100$.

We next show that the $2$-glue of two signatures in $\partial S$ results in a signature in $S$.
In the proof, we may use the following bounds for ${\mathrm e}^z$:
\begin{gather}
1+z \leq {\mathrm e}^z, \quad \forall z\in\R;\label{eq:exp1}\\
1+z+ \frac{z^2}{2}\leq{\mathrm e}^z\leq 1+z+z^2, \quad \forall z \in [0,2w];\label{eq:exp2}\\
1+z+\frac{z^2}{2}+\frac{z^3}{6}\leq{\mathrm e}^z\leq 1+z+\frac{z^2}{2}+\frac{z^3}{3}, \quad \forall z \in [0,2w];\label{eq:exp3-1}\\
1+z+\frac{z^2}{2}+\frac{z^3}{6}\leq{\mathrm e}^z\leq 1+z+\frac{z^2}{2}+\frac{2z^3}{5}, \quad \forall z \in [0,4w];\label{eq:exp3-2}\\
1+z+\frac{z^2}{2}+\frac{z^3}{6}+\frac{z^4}{24}\leq{\mathrm e}^z\leq 1+z+\frac{z^2}{2}+\frac{z^3}{6}+\frac{z^4}{12}, \quad \forall z \in [0,3w];\label{eq:exp4}\\
\begin{split}
1+z+\frac{z^2}{2}+\frac{z^3}{6}+\frac{z^4}{24}+\frac{z^5}{120}\leq{\mathrm e}^z\leq1+z+\frac{z^2}{2}+\frac{z^3}{6}+\frac{z^4}{24}+\frac{z^5}{60}, \quad \forall z \in [0,4w].\label{eq:exp5}
\end{split}
\end{gather}

We also need the following two lemmas.

\begin{lemma}\label{lem:normal}
Let
$$
\vec{n}=\frac{\mu}{2} P(-{\mathrm e}^{2x}+1+2x, -2, {\mathrm e}^{2x}+2x+1)^{\mathrm{T}},
$$
be the normal of the surface
$$
\alpha P(\cosh(t),t{\mathrm e}^t,\sinh(t))^{\mathrm{T}}
$$
at point $\alpha=\mu$ and $t=x$, where $\mu \geq 0$, $P$ is a $3 \times 3$ permutation matrix, and $x \in [0,w]$.
For every $x,y \in [0,w]$, we have
$$
\vec{n}^{\mathrm{T}} P (\cosh(y),y{\mathrm e}^y,\sinh(y))^{\rm T} \leq 0.
$$
\end{lemma}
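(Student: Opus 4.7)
The plan is to reduce the claim to a scalar inequality in two variables and then exploit an easy calculus maximization. Since $P$ is a permutation matrix, it is orthogonal, so $P^{\mathrm T}P=I$. Writing $\vec{u}=(-{\mathrm e}^{2x}+1+2x,-2,{\mathrm e}^{2x}+2x+1)^{\mathrm T}$ and $\vec{v}(y)=(\cosh y,y{\mathrm e}^y,\sinh y)^{\mathrm T}$, we obtain
\[
\vec{n}^{\mathrm T}P\vec{v}(y)=\frac{\mu}{2}\,\vec{u}^{\mathrm T}P^{\mathrm T}P\vec{v}(y)=\frac{\mu}{2}\,\vec{u}^{\mathrm T}\vec{v}(y),
\]
so the permutation $P$ drops out and it suffices to show $\vec{u}^{\mathrm T}\vec{v}(y)\leq 0$ for all $x,y\in[0,w]$ (indeed for all real $y$).

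Next I would expand $\cosh y$ and $\sinh y$ via $(e^y\pm e^{-y})/2$ and collect terms. The mixed $e^{2x}e^y$ contributions cancel exactly (one from $-e^{2x}\cosh y$, one from $e^{2x}\sinh y$), which is the crucial simplification, and after grouping what remains is
\[
\vec{u}^{\mathrm T}\vec{v}(y)=(1+2x-2y)\,{\mathrm e}^{y}-{\mathrm e}^{2x-y}.
\]
Multiplying by ${\mathrm e}^{y}\geq 0$, the desired inequality $\vec{u}^{\mathrm T}\vec{v}(y)\leq 0$ becomes
\[
(1+2x-2y)\,{\mathrm e}^{2y}\leq {\mathrm e}^{2x}.
\]

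The final step is a one-line calculus argument: fix $x$ and let $g(y):=(1+2x-2y){\mathrm e}^{2y}$. Then $g'(y)=4(x-y){\mathrm e}^{2y}$, so $g$ attains its unique maximum at $y=x$, where $g(x)={\mathrm e}^{2x}$. Hence $g(y)\leq {\mathrm e}^{2x}$ for every real $y$, with equality precisely at $y=x$. This immediately yields $\vec{u}^{\mathrm T}\vec{v}(y)\leq 0$, and therefore $\vec{n}^{\mathrm T}P\vec{v}(y)\leq 0$, as required. There is no real obstacle here: the statement is essentially a geometric restatement of the fact that the boundary curve lies entirely on one side of the tangent cone to the same curve at every point, and the work is confined to spotting the cancellation in the expansion and then optimizing the one-variable function $g$.
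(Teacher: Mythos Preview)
Your proof is correct and essentially the same as the paper's. After the identical reduction to $(1+2x-2y)e^{y}\leq e^{2x-y}$, the paper simply invokes the standard inequality $1+z\leq e^{z}$ with $z=2x-2y$ (multiplying through by $e^{y}$), whereas you phrase the same fact as the calculus maximization of $g(y)=(1+2x-2y)e^{2y}$; these are equivalent one-line arguments.
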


\begin{proof}
Using~\eqref{eq:exp1} we obtain
$$
{\mathrm e}^{2x-y} \geq {\mathrm e}^y (1+2x-2y)
$$
and hence
\begin{equation*}
(-{\mathrm e}^{2x}+1+2x,-2,{\mathrm e}^{2x}+1+2x) (\cosh(y),y{\mathrm e}^y,\sinh(y))^{\rm T}=-\mathrm{e}^{2x-y} + {\mathrm e}^y (1 + 2x - 2 y)\leq 0.
\end{equation*}

\end{proof}

\begin{lemma}\label{lem:jacob}
Let $D$ be simply connected and compact subset of $\R^2$. Let
$g:D \to \R^2$ be a continuously differentiable map. Assume that
\begin{itemize}
\item the image of the boundary (that is, $g(\partial D)$) is a simple
curve, and
\item the Jacobian determinant of $g$ does not vanish in the interior of $D$.
\end{itemize}
Then $g(D)$ is contained inside of $g(\partial D)$.
\end{lemma}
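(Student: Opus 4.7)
The plan is to combine two ingredients: by the inverse function theorem, the non-vanishing Jacobian makes $g$ an open map on $\mathrm{int}(D)$; and by the Jordan curve theorem, the simple closed curve $g(\partial D)$ splits $\R^2$ into a bounded ``inside'' component $I$ and an unbounded ``outside'' component $U$. (Here $g(\partial D)$ is a closed curve because $\partial D$ is a topological circle, using that $D$ is a simply connected compact planar region with nonempty interior; the simplicity hypothesis then makes it a Jordan curve.) The goal reduces to showing $g(D)\cap U=\emptyset$, which places $g(D)\subseteq\overline{I}$, i.e., inside $g(\partial D)$.

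To this end, I would prove that $g(D)\cap U$ is both open and closed in $U$. For openness, since $g|_{\mathrm{int}(D)}$ is a local diffeomorphism, $g(\mathrm{int}(D))$ is open in $\R^2$; because $U$ is disjoint from $g(\partial D)$, we have $g(D)\cap U=g(\mathrm{int}(D))\cap U$, which is open. For closedness, if $y_n=g(x_n)\in g(D)\cap U$ tends to $y\in U$, then compactness of $D$ yields a subsequential limit $x_{n_k}\to x\in D$ with $g(x)=y$; since $y\notin g(\partial D)$, this forces $x\in\mathrm{int}(D)$, so $y\in g(D)\cap U$.

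Since $U$ is connected, the clopen subset $g(D)\cap U$ is either empty or all of $U$. But $g(D)$ is compact (hence bounded) while $U$ is unbounded, so the second alternative is impossible; we conclude $g(D)\cap U=\emptyset$, as desired. The main subtlety is the topological setup: verifying that $g(\partial D)$ genuinely satisfies the hypotheses of the Jordan curve theorem. Once this is granted, the argument itself is a clean ``clopen subset of a connected set'' argument that only uses the inverse function theorem, continuity, and compactness—no degree theory is actually needed.
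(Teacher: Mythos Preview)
Your proof is correct and essentially matches the paper's approach: both invoke the Jordan curve theorem and the openness of $g$ on $\mathrm{int}(D)$ (via the non-vanishing Jacobian) together with compactness of $g(D)$ to rule out points of $g(D)$ in the unbounded component $U$. The paper phrases the last step as a contradiction obtained by following a curve from $g(x)$ to infinity until it first meets $\partial g(D)$ at a point that must be a regular value with interior preimage, whereas you package the same idea more cleanly as a clopen-subset-of-a-connected-set argument; the underlying content is identical.
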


\begin{proof}
It follows from the Jordan curve theorem that the inside and the
outside of $g(\partial D)$ are well-defined.
For the sake of contradiction, suppose that $x$ from the interior of
$D$ gets mapped outside of $g(\partial D)$. Let $C$ be a smooth simple
curve from $g(x)$ to infinity such that $C$ does not intersect
$g(\partial D)$. The image $g(D)$ is a compact
set and hence $C$ intersects $\partial g(D)$. Thus there exists a
point $y$ such that $y\in \partial g(D)$
and $y\not\in g(\partial D)$. Let $z$ be the preimage of $y$ in the
interior of $D$. By our assumption the
Jacobian determinant of $g$ does not vanish at $z$ and hence $y$ is
regular. Since $z$ is in the interior of $D$
we have that a small neighborhood of $z$ is in $D$ and hence (using
regularity of $y$) a small neighborhood
of $y=g(z)$ is in $g(D)$, a contradiction with $y\in\partial g(D)$.

\end{proof}

\begin{lemma}\label{lem:boundaries}
Let $G_1,G_2$ be two gadgets with signatures $\vec{s}_1,\vec{s}_2 \in \partial S$, respectively, then the signature of the $2$-glue of $G_1$ and $G_2$ is in $S$.
\end{lemma}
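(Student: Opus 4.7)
\medskip

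\noindent\textbf{Proof proposal for Lemma~\ref{lem:boundaries}.} The plan is to parametrize $\partial S$ and reduce the lemma to a finite list of two-parameter computations, each of which is handled by combining Lemma~\ref{lem:jacob} with the normal-vector bound of Lemma~\ref{lem:normal}. Using the parametrization stated just before the lemma, write
\[
\vec{s}_i = \frac{1}{(x_i+1)\mathrm{e}^{x_i}} P_i \bigl(\cosh(x_i),\, x_i \mathrm{e}^{x_i},\, \sinh(x_i)\bigr)^{\mathrm T},\qquad x_i\in[0,w],
\]
for $i=1,2$, where $P_1,P_2$ are $3\times 3$ permutation matrices. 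Substituting into the 2-glue formulas~\eqref{eq:new1}--\eqref{eq:new3} gives, for each fixed pair $(P_1,P_2)$, a continuously differentiable map
\[
g_{P_1,P_2}\colon [0,w]^2\to \{(\alpha,\beta,\gamma):\alpha+\beta+\gamma=1\},\qquad (x_1,x_2)\mapsto (\alpha_3,\beta_3,\gamma_3);
\]
after dropping the redundant coordinate we regard $g_{P_1,P_2}$ as a smooth map $D\to\R^2$ with $D=[0,w]^2$.

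Next I would cut down the $6\times 6=36$ permutation pairs to a manageable list. The 2-glue operation is symmetric under swapping $G_1\leftrightarrow G_2$ together with the label involution $(0,1,2,3)\mapsto(2,3,0,1)$, and $S$ is invariant under any permutation of coordinates (this is built into its definition). Exploiting these symmetries reduces the problem to a handful of essentially different cases. For each remaining case I would compute the image explicitly on the four edges of $\partial D$. On the edges $x_1=0$ or $x_2=0$ one of the 2-glued gadgets collapses to a signature of the form $(1/2,1/2,0)$ (or a permutation thereof), so the 2-glue degenerates to a reparametrization of $\partial S$ and the image is a subarc of $\partial S$; on the edges $x_1=w$ or $x_2=w$ a direct computation places the image curve in $S$, where the hard inequalities are verified using the power-series bounds~\eqref{eq:exp1}--\eqref{eq:exp5} for $\mathrm{e}^z$. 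The pieces join up to a simple closed curve $g_{P_1,P_2}(\partial D)\subset S$.

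With the boundary handled, I would apply Lemma~\ref{lem:jacob}: it remains to check that $\det \mathrm{D}g_{P_1,P_2}$ does not vanish on the interior of $D$, which gives $g_{P_1,P_2}(D)$ inside the closed region bounded by $g_{P_1,P_2}(\partial D)$, hence in $S$. To actually rule out interior critical points I would compute $\partial_{x_1}g$ and $\partial_{x_2}g$ explicitly in terms of $\cosh,\sinh$ and the polynomial factors appearing in~\eqref{eq:new1}--\eqref{eq:new3}; the needed positivity reduces to inequalities of the same flavor as those in Lemma~\ref{lem:normal} and can be discharged using~\eqref{eq:exp1}--\eqref{eq:exp5} on the range $[0,4w]$ (note $w\le 64/100$ so all the stated Taylor bounds are in force). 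Lemma~\ref{lem:normal} is the key geometric input here: it says that at any boundary point the tangent plane to the surface parametrizing $\partial S$ separates the surface from itself, which translates into the statement that the two arcs of $g_{P_1,P_2}(\partial D)$ coming from $x_i\in\{0\}$ stay on the $S$-side of the corresponding tangent lines. This is what prevents the image from poking through $\partial S$.

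The main obstacle will be the bookkeeping in Step~2: although symmetry prunes the list of permutation pairs substantially, several inequivalent cases remain, and in each case one must check positivity of a rational expression in $\cosh(x_i),\sinh(x_i),x_i\mathrm{e}^{x_i}$ over $[0,w]^2$. I expect these to follow by substituting the sharp polynomial envelopes~\eqref{eq:exp2}--\eqref{eq:exp5} and reducing to polynomial inequalities in $x_1,x_2$ whose positivity can be read off the coefficients after clearing denominators; once that is done, Lemma~\ref{lem:jacob} finishes each case and the union of the images covers all 2-glues of boundary signatures.
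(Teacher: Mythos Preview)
Your overall architecture matches the paper's: parametrize $\partial S$ by $x\mapsto(\cosh x,\,x\mathrm{e}^x,\,\sinh x)$ up to permutation, cut down the permutation pairs by symmetry, and verify the remaining cases with the exponential envelopes~\eqref{eq:exp1}--\eqref{eq:exp5}. The paper reduces to nine cases (after assuming w.l.o.g.\ $x_2\ge x_3$ and $y_2\ge y_3$), three of which are equivalent to others by swapping $G_1\leftrightarrow G_2$.

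However, your plan to apply Lemma~\ref{lem:jacob} uniformly has a genuine gap. In the case where both $\vec s_1,\vec s_2$ have their middle coordinate largest (the paper's Case~5: $(x_1,x_2,x_3)=(x\mathrm{e}^x,\cosh x,\sinh x)$ and similarly for $y$), the 2-glue collapses to
\[
(z_1,z_2,z_3)=\bigl((x+y)\mathrm{e}^{x+y},\ \cosh(x+y),\ \sinh(x+y)\bigr),
\]
so the map $g$ factors through $x+y$. Its Jacobian is identically zero on $(0,w)^2$, and the image of $\partial D$ is a single arc traced multiply rather than a simple closed curve---neither hypothesis of Lemma~\ref{lem:jacob} holds. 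The paper does not use the Jacobian lemma here; instead it argues directly with the normal vector of Lemma~\ref{lem:normal}: one checks $\vec n\cdot(z_1,z_2,z_3)>0$ at a suitably chosen tangent point, which together with $\vec n\cdot(\cosh t,t\mathrm{e}^t,\sinh t)\le 0$ for all $t\in[0,w]$ forces $(z_1,z_2,z_3)\in T$. In fact the paper \emph{mixes} the two techniques across the nine cases---the normal-vector argument for Cases~1, 3, 5 and part of~9, the Jacobian lemma for Cases~2 and~6---and the dependence is essential: the $x=w$ and $y=w$ edges in the Jacobian cases are one-parameter instances of other cases already handled by the normal-vector method. Case~9 needs yet a further argument exploiting that the 2-glue matrix is nonsingular. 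So the case analysis is not the uniform Jacobian sweep you describe; several cases require the tangent-plane separation directly, and you should expect that.

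A smaller slip: at $x_i=0$ the parametrized signature is a permutation of $(1,0,0)$, not of $(1/2,1/2,0)$. Depending on which permutation, the edge $x_i=0$ either collapses the 2-glue to the single point $(1,0,0)$ or acts as a coordinate swap on the other factor; either way it lies on $\partial S$, but not for the reason you gave.
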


\begin{proof}
Let $T$ be the cone containing the vectors $\mu \vec{s}$, where $\mu$ is a non-negative real number and $\vec{s} \in S$. The boundary $\partial T$ of $T$ contains the vectors
$$
\mu P (\cosh(x),x{\mathrm e}^x,\sinh(x))^{\mathrm T},
$$
where $P$ is a $3 \times 3$ permutation matrix, and $x \in [0,w]$.

We will prove that given any $(x_1,x_2,x_3),(y_1,y_2,y_3)\in \partial T$, let
\begin{equation}\label{eq:xyz}
(z_1,z_2,z_3)=(x_1,x_2,x_3)\left(%
\begin{array}{ccc}
  y_2+y_3 & 0 & 0 \\
  y_1 & y_2 & y_3 \\
  y_1 & y_3 & y_2 \\
\end{array}%
\right)
=(y_1,y_2,y_3)\left(%
\begin{array}{ccc}
  x_2+x_3 & 0 & 0 \\
  x_1 & x_2 & x_3 \\
  x_1 & x_3 & x_2 \\
\end{array}%
\right)
,
\end{equation}
then $(z_1,z_2,z_3)\in T$. This result implies the lemma because equations~\eqref{eq:new1}--\eqref{eq:new3} are equivalent to the transformation in~\eqref{eq:xyz}.

We can assume $$(x_1,x_2,x_3)=P_1(\cosh(x),x{\mathrm e}^x,\sinh(x))^{\mathrm T}$$ and $$(y_1,y_2,y_3)=P_2(\cosh(y),y{\mathrm e}^y,\sinh(y))^{\mathrm T},$$
for some $x,y \in [0,w]$, where $P_1$ and $P_2$ are two $3 \times 3$ permutation matrices. This is because the transformation~\eqref{eq:xyz} is linear. Moreover, by symmetry, we w.l.o.g. assume that $x_2\geq x_3$ and $y_2\geq y_3$, (switching $x_2$ and $x_3$ (or $y_2$ and $y_3$) permutes $z_2$ and $z_3$).

Note that when $x=0$ or $y=0$, the lemma is trivially true.

\begin{figure}[h]
\begin{center}
\includegraphics[scale=0.3]{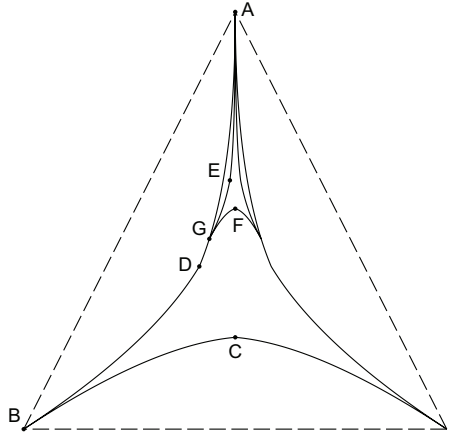}
\caption{Cases $1$, $4$ and $7$. Point $G$ represents $(y_1,y_2,y_3)$. In case $1$, $(x_1,x_2,x_3)$ on curve $AD$ is mapped to $(z_1,z_2,z_3)$ on curve $AE$; in case $4$, $(x_1,x_2,x_3)$ on curve $BD$ is mapped to $(z_1,z_2,z_3)$ on curve $GE$; and in case $7$, $(x_2,x_3,x_1)$ on curve $BC$ is mapped to $(z_1,z_2,z_3)$ on curve $GF$}\label{fig:c147}
\end{center}
\end{figure}

We prove the lemma by considering $9$ cases, depending on $P_1$ and $P_2$. Case $1$, $4$ and $7$ are shown in Figure~\ref{fig:c147}, in which $y_1 \geq y_2 \geq y_3$ and $(y_1,y_2,y_3)$ is represented by point $G$ on curve $AD$. In case $1$, $x_1 \geq x_2 \geq x_3$, $(x_1,x_2,x_3)$ is on curve $AD$, and $(z_1,z_2,z_3)$ is on curve $AE$; in case $4$, $x_2 \geq x_1 \geq x_3$, $(x_1,x_2,x_3)$ is on curve $BD$, and $(z_1,z_2,z_3)$ is on curve $GE$; and in case $7$, $x_1 \geq x_2 \geq x_3$, $(x_2,x_3,x_1)$ is on curve $BC$, and $(z_1,z_2,z_3)$ is on curve $GF$.

\begin{figure}[h]
\begin{center}
\includegraphics[scale=0.3]{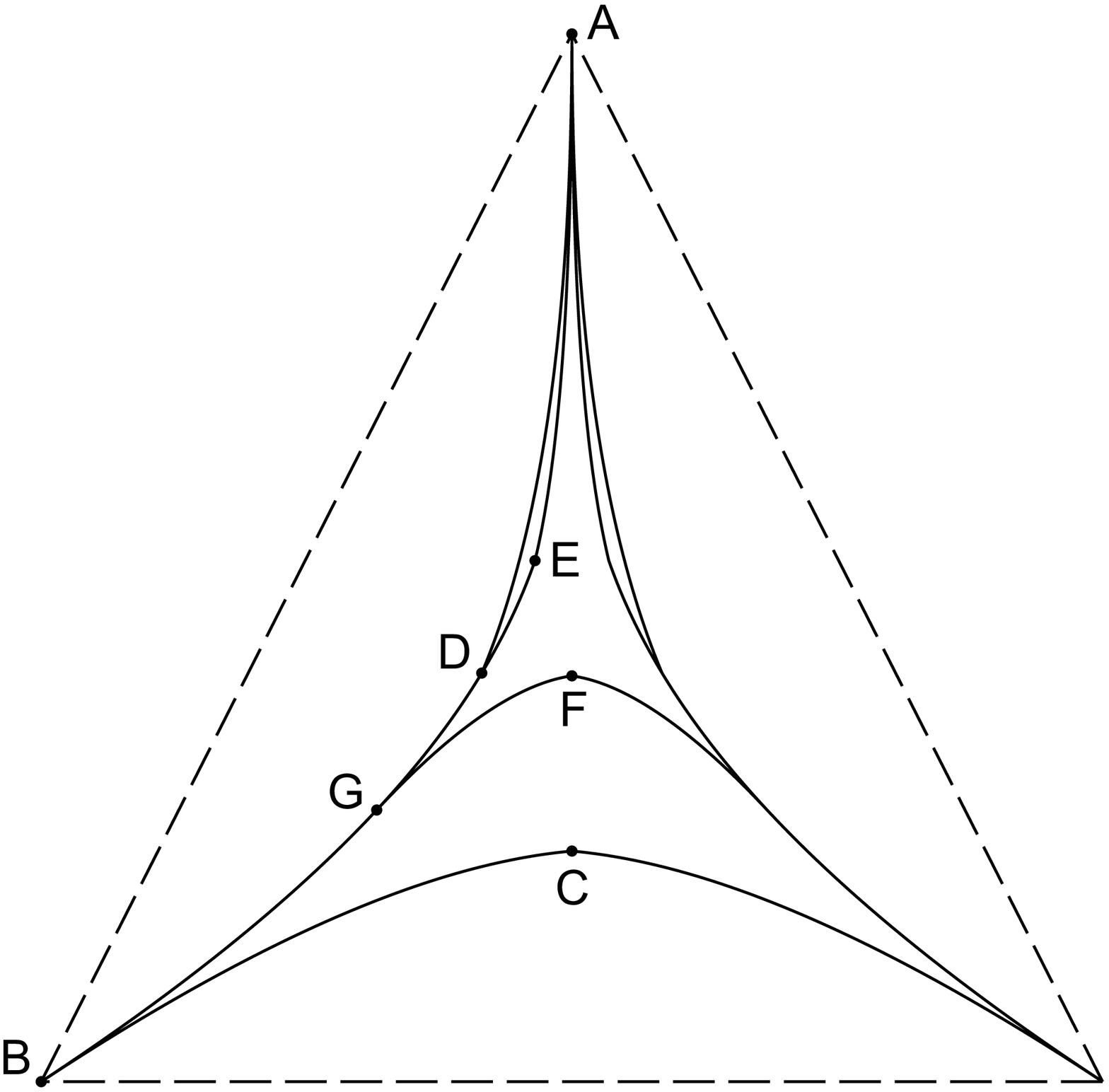}
\caption{Cases $2$, $5$ and $8$. Point $G$ represents $(y_1,y_2,y_3)$. In case $2$, $(x_1,x_2,x_3)$ on curve $AD$ is mapped to $(z_1,z_2,z_3)$ on curve $AE$; in case $5$, $(x_1,x_2,x_3)$ on curve $BD$ is mapped to $(z_1,z_2,z_3)$ on curve $GE$; and in case $8$, $(x_2,x_3,x_1)$ on curve $BC$ is mapped to $(z_1,z_2,z_3)$ on curve $GF$}\label{fig:c258}
\end{center}
\end{figure}

Case $2$, $5$ and $8$ are shown in Figure~\ref{fig:c258}, in which $y_2 \geq y_1 \geq y_3$ and $(y_1,y_2,y_3)$ is represented by point $G$ on curve $BD$. In case $2$, $x_1 \geq x_2 \geq x_3$, $(x_1,x_2,x_3)$ is on curve $AD$, and $(z_1,z_2,z_3)$ is on curve $AE$; in case $5$, $x_2 \geq x_1 \geq x_3$, $(x_1,x_2,x_3)$ is on curve $BD$, and $(z_1,z_2,z_3)$ is on curve $GE$; and in case $8$, $x_1 \geq x_2 \geq x_3$, $(x_2,x_3,x_1)$ is on curve $BC$, and $(z_1,z_2,z_3)$ is on curve $GF$.

\begin{figure}[h]
\begin{center}
\includegraphics[scale=0.3]{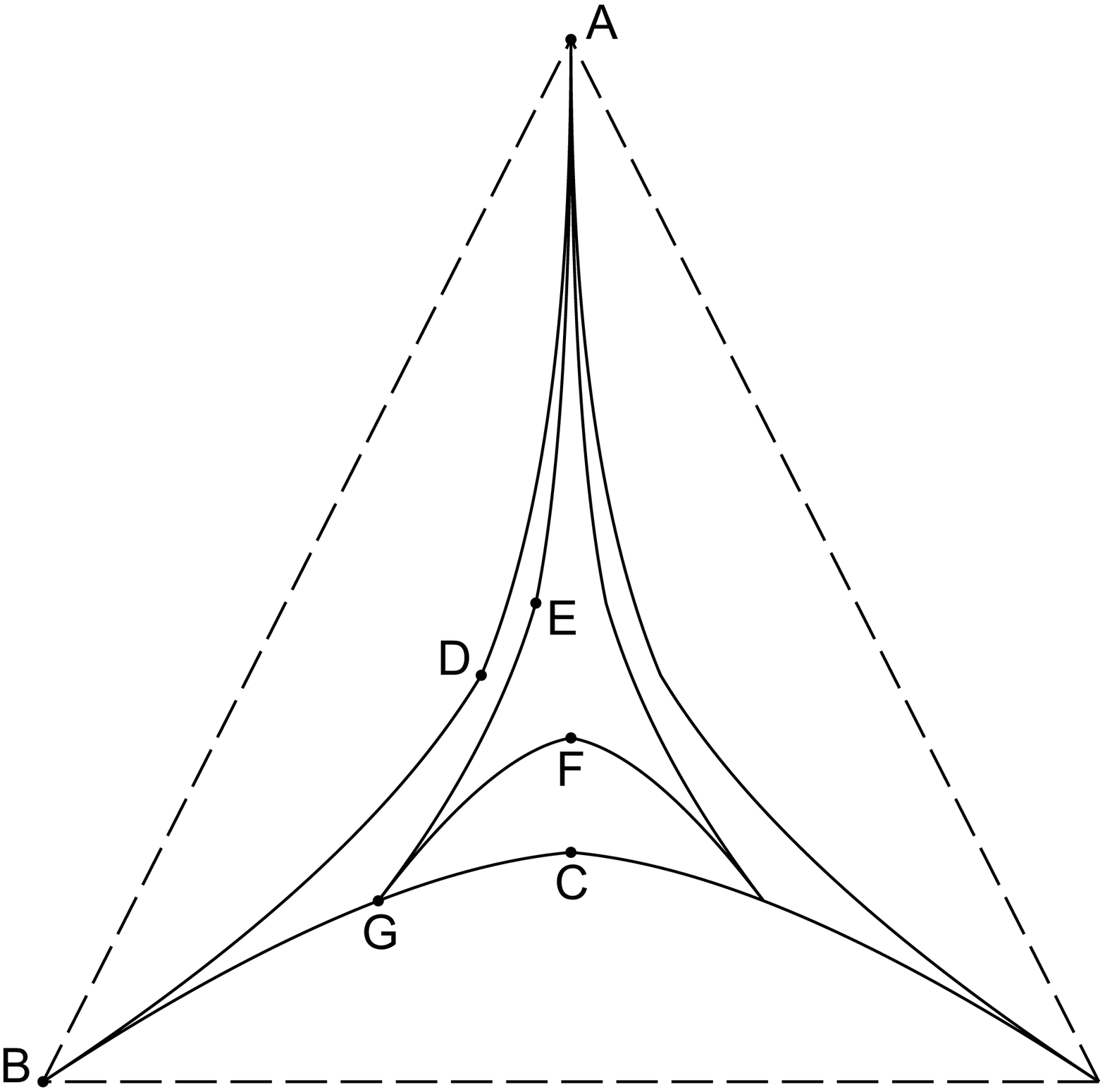}
\caption{Cases $3$, $6$ and $9$. Point $G$ represents $(y_1,y_2,y_3)$. In case $3$, $(x_1,x_2,x_3)$ on curve $AD$ is mapped to $(z_1,z_2,z_3)$ on curve $AE$; in case $6$, $(x_1,x_2,x_3)$ on curve $BD$ is mapped to $(z_1,z_2,z_3)$ on curve $GE$; and in case $9$, $(x_2,x_3,x_1)$ on curve $BC$ is mapped to $(z_1,z_2,z_3)$ on curve $GF$}\label{fig:c369}
\end{center}
\end{figure}

Case $3$, $6$ and $9$ are shown in Figure~\ref{fig:c369}, in which $y_2 \geq y_3 \geq y_1$ and $(y_1,y_2,y_3)$ is represented by point $G$ on curve $BC$. In case $3$, $x_1 \geq x_2 \geq x_3$, $(x_1,x_2,x_3)$ is on curve $AD$, and $(z_1,z_2,z_3)$ is on curve $AE$; in case $6$, $x_2 \geq x_1 \geq x_3$, $(x_1,x_2,x_3)$ is on curve $BD$, and $(z_1,z_2,z_3)$ is on curve $GE$; and in case $9$, $x_1 \geq x_2 \geq x_3$, $(x_2,x_3,x_1)$ is on curve $BC$, and $(z_1,z_2,z_3)$ is on curve $GF$.

Note that in Figure~\ref{fig:c147}, Figure~\ref{fig:c258} and Figure~\ref{fig:c369}, all vectors are projected to the plane $\alpha+\beta+\gamma=1$ and $\alpha,\beta,\gamma \geq 0$.

\begin{description}
\item[Case 1: $x_1\geq x_2\geq x_3$, $y_1\geq y_2\geq y_3$.] We have
\begin{eqnarray*}
 z_1 & = &\sinh(x+y)+y{\mathrm e}^y\cosh(x)+x{\mathrm e}^x\cosh(y),\\
 z_2 & = & xy {\mathrm e}^{x+y} + \sinh(x)\sinh(y),\\
 z_3 & = & x{\mathrm e}^x\sinh(y) + y{\mathrm e}^y\sinh(x).
\end{eqnarray*}
We also have $z_1\geq z_2\geq z_3$, this is because 
$$
z_1-z_2=\sinh(x)\mathrm{e}^{-y}+\cosh(x)\sinh(y)+y\mathrm{e}^y(\cosh(x)-x\mathrm{e}^x)+x\mathrm{e}^x\cosh(y) \geq 0,
$$
and
$$
z_2-z_3=(x\mathrm{e}^x-\sinh(x))(y\mathrm{e}^y-\sinh(y)) \geq 0.
$$

Note that this means $(z_1,z_2,z_3)$ can only intersect $\alpha(\cosh(t),t\mathrm{e}^t,\sinh(t))$, for $\alpha \geq 0$ and $t \in [0,w]$. W.l.o.g., we assume $x\leq y$.

The normal of the surface
$$
\alpha(\cosh(t),t{\mathrm e}^t,\sinh(t)) \quad \alpha \geq 0, t \in [0,w]
$$
at $\alpha=\mu$ and $t=x/2$ is $\vec{n}_1=\frac{\mu}{2} (-{\mathrm e}^{x}+1+x, -2, {\mathrm e}^{x}+x+1)$.
By Lemma~\ref{lem:normal}, we have for every $t \in [0,w]$, $\vec{n}_1 (\cosh(t),t{\mathrm e}^t,\sinh(t)) \leq 0$. To show $(z_1,z_2,z_3) \in T$ it is sufficient to show that $\vec{n}_1 (z_1,z_2,z_3)^{\mathrm T} > 0$ for every $x,y \in (0,w]$.

\begin{multline}\label{ett}
\vec{n}_1(z_1,z_2,z_3)^{\mathrm T} = \\
-\frac{\mu{\mathrm e}^{-x - y}}{4}\Big(
2+x-{\mathrm e}^{2y}
+{\mathrm e}^{3x}({\mathrm e}^{2y}+2x)+2{\mathrm e}^{2x}{\mathrm e}^{2y}xy+2{\mathrm e}^{x}{\mathrm e}^{2y}y\\
-{\mathrm e}^{2x}(2{\mathrm e}^{2y}x^2+3{\mathrm e}^{2y}x+2{\mathrm e}^{2y}y+1)-{\mathrm e}^{x}
\Big).
\end{multline}
We upper bound the expression in the parenthesis on the right-hand
side of~\eqref{ett} by
\begin{equation}\label{ett2}
\begin{split}
\frac{27}{2}x^5+\frac{199}{24}x^4+\frac{15}{2}x^3+\frac{7}{2}x^2\\
-\Big(\frac{4}{3}x^6+\frac{14}{3}x^5+\frac{5}{4}x^4+
\frac{11}{2}x^3+\frac{7}{2}x^2\Big){\mathrm e}^{2y}\\
+\Big(\frac{8}{3}x^5+\frac{3}{2}x^4+\frac{5}{3}x^3+x^2\Big)y \mathrm{e}^{2y},
\end{split}
\end{equation}
which is obtained by using~\eqref{eq:exp4}
on the exponentials involving $x$. We further upper bound~\eqref{ett2} by
\begin{equation}\label{ett4}
\begin{split}
-\frac{x^2}{24}(64x^4y^2-256x^3y^3+64x^4y+96x^3y^2-144x^2y^3+32x^4\\
+160x^3y-12x^2y^2-160xy^3-212x^3+24x^2y
+184xy^2\\-96y^3-169x^2+224yx+120y^2-48x+144y),
\end{split}
\end{equation}
which is obtained by using~\eqref{eq:exp2} on ${\mathrm e}^{2y}$.

Now we show that the expression in the parenthesis of~\eqref{ett4}
is positive. Let $c=x/y$. Note that $c\in (0,1]$. If $c\in [L,U]$ then we
can lower bound the expression in the parenthesis of~\eqref{ett4} by
\begin{equation}\label{ett5}
\begin{split}
64y^6L^4-256y^6U^3+64y^5L^4+96y^5L^3+32y^4L^4-144y^5U^2\\+160y^4L^3
-12y^4U^2-212y^3U^3-160y^4U+24y^3L^2+184y^3L\\
-169y^2U^2-96y^3+224y^2L+120y^2-48yU+144y.
\end{split}
\end{equation}
For $L=0$, $U=5/7$ the lower bound on the expression in the parenthesis of~\eqref{ett4}
becomes
\begin{equation}\label{ett6}
-\frac{32000}{343}y^6-\frac{3600}{49}y^5-\frac{5900}{49}y^4-\frac{59428}{343}y^3+\frac{1655}{49}y^2+\frac{768}{7}y.
\end{equation}
For $L=5/7$, $U=1$ the lower bound on the expression in the parenthesis of~\eqref{ett4}
becomes
\begin{equation}\label{ett7}
-\frac{574656/2401}y^6-\frac{221744/2401}y^5-\frac{252972}{2401}y^4-\frac{8052}{49}y^3+111y^2+96y.
\end{equation}
The polynomials~\eqref{ett6} and~\eqref{ett7} are positive for $y \in(0,64/100)$ (as is
easily checked using Sturm sequences).
\item[Case 2: $x_1\geq x_2\geq x_3$, $y_2\geq y_1\geq y_3$.]
We have
\begin{eqnarray*}
z_1 &=& \cosh(x)\cosh(y)+\cosh(x)\sinh(y)+xy{\mathrm e}^{x+y}+\sinh(x)y{\mathrm e}^y,\\
z_2 &=& x{\mathrm e}^x\cosh(y)+\sinh(x)\sinh(y),\\
z_3 &=& x{\mathrm e}^x\sinh(y)+\sinh(x)\cosh(y).
\end{eqnarray*}
By normalizing $z_1,z_2,z_3$ to $z_1+z_2+z_3=1$, we have
\begin{eqnarray*}
z_1 &=& \frac{-2({\mathrm e}^{x}+{\mathrm e}^{-x}+2xy{\mathrm e}^{x}+y{\mathrm e}^{x}-y{\mathrm e}^{-x})}{-4{\mathrm e}^{x}-2y{\mathrm e}^{x}-4x{\mathrm e}^{x}-4xy{\mathrm e}^{x}+2y{\mathrm e}^{-x}},\\
z_2 &=& \frac{-2x{\mathrm e}^{x}-2x{\mathrm e}^{x-2y}-{\mathrm e}^{x}+{\mathrm e}^{x-2y}+{\mathrm e}^{-x}-{\mathrm e}^{-x-2y}}{-4{\mathrm e}^{x}-2y{\mathrm e}^{x}-4x{\mathrm e}^{x}-4xy{\mathrm e}^{x}+2y{\mathrm e}^{-x}},\\
z_3 &=& \frac{-2x{\mathrm e}^{x}+2x{\mathrm e}^{x-2y}-{\mathrm e}^{x}-{\mathrm e}^{x-2y}+{\mathrm e}^{-x}+{\mathrm e}^{-x-2y}}{-4{\mathrm e}^{x}-2y{\mathrm e}^{x}-4x{\mathrm e}^{x}-4xy{\mathrm e}^{x}+2y{\mathrm e}^{-x}}.
\end{eqnarray*}

We are going to use Lemma~\ref{lem:jacob} to prove that $(z_{1},z_{2},z_{3}) \in T$. We first argue that the image of the boundary (on the boundary at least one of $x=0,x=w,y=0$, and $y=w$ is satisfied) is simple and is in $T$.

The image of $x=0$ is the point $z_{1}=1,z_{2}=0,z_3=0$, which is in $T$. 

The image of $y=0$ is the curve $C_{1}$:
\begin{equation*}
z_{1} = \frac{\cosh(x)}{(1+x){\mathrm e}^{x}}, \quad
z_{2} = \frac{x{\mathrm e}^{x}}{(1+x){\mathrm e}^{x}}, \quad
z_3 = \frac{\sinh(x)}{(1+x){\mathrm e}^{x}}, \quad \mbox{ for } x \in [0,w].
\end{equation*}
Note that $C_{1}  \subseteq \partial S$.

The image of $x=w$ is the curve $C_{2}$:
\begin{eqnarray}
z_{1} &=& \frac{(w+y){\mathrm e}^{w+y}}{(1+y+w){\mathrm e}^{y+w}},\label{eq:dc2_1}\\
z_{2} &=& \frac{\cosh(y+w)}{(1+y+w){\mathrm e}^{y+w}},\nonumber\\
z_3 &=& \frac{\sinh(y+w)}{(1+y+w){\mathrm e}^{y+w}},\nonumber
\end{eqnarray}
for $y \in [0,w]$.
We defer the proof that $C_{2} \subseteq T$ and the only intersection of $C_{1}$ and $C_{2}$ is the end point of both $C_{1}$ and $C_{2}$ to case 5. 

The image of $y=w$ is the curve $C_{3}$:
\begin{eqnarray}
z_{1} &=& \frac{\cosh(x)+w(x{\mathrm e}^{x}+\sinh(x))}{(1+x){\mathrm e}^{x}+w(x{\mathrm e}^{x}+\sinh(x))},\label{eq:dc3_1}\\
z_{2} &=& \frac{\sinh(x)+w(x{\mathrm e}^{x}-\sinh(x))}{(1+x){\mathrm e}^{x}+w(x{\mathrm e}^{x}+\sinh(x))},\nonumber\\
z_3 &=& \frac{x{\mathrm e}^x-w(x{\mathrm e}^{x}-\sinh(x))}{(1+x){\mathrm e}^{x}+w(x{\mathrm e}^{x}+\sinh(x))},\nonumber
\end{eqnarray}
for $x \in [0,w]$. In case 1, we have shown that $C_{3} \subseteq T$ and the only intersection of $C_{3}$ and $C_{1}$ is the end point (which is the point $z_{1}=1,z_{2}=0,z_3=0$) of both $C_{3}$ and $C_{1}$. Note that in~\eqref{eq:dc2_1}, $z_{1}$ is monotonically increasing and $z_{1} \leq 2w/(2w+1)$. Also note that in~\eqref{eq:dc3_1}, $z_1$ is monotonically decreasing and $z_1 \geq 2w/(2w+1)$. Hence, $C_2$ and $C_3$ are simple and have only one intersection which is the end point of both curves. We established that the image of the boundary (which is concatenation of $C_1$, $C_2$, $C_3$) is a simple curve.

We next claim that the Jacobian determinant does not vanish when $x,y \in (0,w)$. Converting from barycentric coordinates we obtain $(Z_1,Z_2)$:
\begin{eqnarray*}
Z_1 &=& \frac{(2x{\mathrm e}^{2x}-{\mathrm e}^{2x}+1){\mathrm e}^{-x-2y}\sqrt{3}}{-4{\mathrm e}^{x}-2y{\mathrm e}^{x}-4x{\mathrm e}^{x}-4xy{\mathrm e}^{x}+2y{\mathrm e}^{-x}},\\
Z_2 &=& \frac{-{\mathrm e}^{x}-3{\mathrm e}^{-x}-4xy{\mathrm e}^{x}-2y{\mathrm e}^{x}+2y{\mathrm e}^{-x}+2x{\mathrm e}^{x}}{-4{\mathrm e}^{x}-2y{\mathrm e}^{x}-4x{\mathrm e}^{x}-4xy{\mathrm e}^{x}+2y{\mathrm e}^{-x}}.
\end{eqnarray*}

\begin{equation*}
\begin{split}
{\rm det}\left(%
\begin{array}{cc}
  \partial Z_1/\partial x & \partial Z_2/\partial x \\
  \partial Z_1/\partial y & \partial Z_2/\partial y
\end{array}%
\right)=-\frac{6\sqrt{3}{\mathrm e}^{-2y-x}(4x{\mathrm e}^{2x}+1+4x^2{\mathrm e}^{2x}-{\mathrm e}^{4x})}
{(-2{\mathrm e}^{x}-y{\mathrm e}^{x}-2x{\mathrm e}^{x}-2xy{\mathrm e}^{x}+y{\mathrm e}^{-x})^3}.
\end{split}
\end{equation*}

For $x\in (0,w)$ we have
$$
4x{\mathrm e}^{2x}+1+4x^2{\mathrm e}^{2x}-{\mathrm e}^{4x}\geq
4x^2-\frac{48}{5}x^3+\frac{40}{3}x^4+\frac{16}{3}x^5 > 0,
$$
using~\eqref{eq:exp3-2} and Sturm sequences.

By Lemma~\ref{lem:jacob}, we have that $(z_1,z_2,z_3) \in T$.

\medskip

\item[Case 3: $x_1\geq x_2\geq x_3$, $y_2\geq y_3\geq y_1$.]
We have
\begin{eqnarray*}
z_1 &=& \cosh(x+y)+y {\mathrm e}^y\cosh(x)+x {\mathrm e}^x \sinh(y),\\
z_2 &=& x {\mathrm e}^x\cosh(y)+y {\mathrm e}^y\sinh(x),\\
z_3 &=& x y {\mathrm e}^{x+y} + \sinh(x)\cosh(y).
\end{eqnarray*}
We also have $z_1 \geq z_2 \geq z_3$, this is because
$$
z_1-z_2=(\cosh(x)-x\mathrm{e}^x)\cosh(y)+y\mathrm{e}^{y-x}+(\sinh(x)+x\mathrm{e}^x)\sinh(y) \geq 0,
$$
and
$$
z_2-z_3=(x\mathrm{e}^x-\sinh(x))(\cosh(y)-y\mathrm{e}^y) \geq 0.
$$
Note that this means $(z_1,z_2,z_3)$ can only intersect $\alpha(\cosh(t),t\mathrm{e}^t,\sinh(t))$, for $\alpha \geq 0$ and $t \in [0,w]$.

The normal of the surface
$$
\alpha(\cosh(t),t{\mathrm e}^t,\sinh(t)) \quad \alpha \geq 0, t \in [0,w]
$$
at $\alpha=\mu$ and $t=x$ is $\vec{n}_2=\frac{\mu}{2} (-{\mathrm e}^{2x}+1+2x, -2, {\mathrm e}^{2x}+2x+1)$. By Lemma~\ref{lem:normal}, we have for every $t \in [0,w]$, $\vec{n}_2 (\cosh(t),t{\mathrm e}^t,\sinh(t)) \leq 0$. To show $(z_1,z_2,z_3) \in T$ it is sufficient to show that $\vec{n}_2 (z_1,z_2,z_3)^{\mathrm T} > 0$ for every $x,y \in (0,w]$.

There are two cases depending on whether $x \geq y$ or $x < y$. If $x \geq y$,
\begin{multline}\label{eq:c21}
\vec{n}_2(z_1,z_2,z_3)^{\mathrm T} = \\
-\frac{\mu{\mathrm e}^{-x - y}}{8}\Big(
-2x-1+4x{\mathrm e}^{2x}+2{\mathrm e}^{2x}-2x{\mathrm e}^{4x}-{\mathrm e}^{4x}+4x^2{\mathrm e}^{2x}\\
+{\mathrm e}^{2y}(1+2x+2y{\mathrm e}^{4x}+2x{\mathrm e}^{4x}+{\mathrm e}^{4x}+4y{\mathrm e}^{2x})\\
-{\mathrm e}^{2y}(2{\mathrm e}^{2x}+4x{\mathrm e}^{2x}+4xy{\mathrm
e}^{4x}+8x^2y{\mathrm e}^{2x}+8xy{\mathrm e}^{2x}+4xy+4x^2{\mathrm
e}^{2x}+6y) \Big).
\end{multline}
We can
upper bound the expression in the parenthesis on the right-hand
side of~\eqref{eq:c21} by
\begin{equation}\label{eq:c22}
\begin{split}
-\frac{2}{3}y\Big(
{\mathrm e}^{4x}(6xy+8xy^3+4xy^2)\\
-{\mathrm e}^{4x}(6+10y^2+9y+8y^3)\\
+{\mathrm e}^{2x}(32x^2y^2+16xy^3+32xy^2+36x^2y+24x^2+36xy+16x^2y^3+24x)\\
-{\mathrm e}^{2x}(16y^3+6y+8y^2)\\
+6+14y^2+12y^3+6xy+15y+4xy^2+8xy^3\Big),
\end{split}
\end{equation}
which is obtained by using~\eqref{eq:exp3-1} on ${\mathrm e}^{2y}$.
To show that~\eqref{eq:c22} is negative it is enough to show
that the expression in the parenthesis of~\eqref{eq:c22} is
positive.
We lower bound the expression in the parenthesis on the
right-hand side of~\eqref{eq:c22} by
\begin{equation}\label{eq:c23}
\begin{split}
-4y^2-12y^3+24x^2+32x^3+16x^4+88x^3y+84x^4y-104x^5y/5\\
+64x^3y^3/3+128x^4y^3/3-416x^5y^3/15+128x^3y^2/3+112x^4y^2/3\\
-1024x^5y^2/15+64x^6y^2+424x^6y/5+416x^6y^3/5-16xy^2-32xy^3\\
+16x^2y^2+48x^2y-16x^2y^3-272x^5/5+112x^6/5+32x^7/5+128x^7y^2/15\\
+48x^7y/5+64x^7y^3/15,
\end{split}
\end{equation}
which is obtained by using~\eqref{eq:exp5} on exponentials involving $x$.

Let $c = y/x$. Note that $c\in (0,1]$. If $c\in [L, U]$ then we
can lower bound~\eqref{eq:c23} by
\begin{equation}\label{eq:c24}
\begin{split}
\frac{1}{15}(
360x^2+480x^3+240x^4+64x^{10}L^3+640x^5L^2+1320x^4L+1272x^7L\\
+640x^7L^3+240x^4L^2+128x^9L^2+720x^3L+560x^6L^2+960x^8L^2\\
+1260x^5L+320x^6L^3+144x^8L+1248x^9L^3-60U^2x^2-180U^3x^3\\
-416x^8U^3-1024x^7U^2-480x^4U^3-240x^3U^2-312x^6U\\
-240x^5U^3-816x^5+336x^6+96x^7 ).
\end{split}
\end{equation}
For $L = 0, U = 4/5$ the lower bound on the expression in the
parenthesis of~\eqref{eq:c24} becomes
\begin{equation}\label{eq:c25}
-\frac{26624}{125}x^8-\frac{13984}{25}x^7+\frac{432}{5}x^6-\frac{23472}{25}x^5-\frac{144}{25}x^4+\frac{5856}{25}x^3+\frac{1608}{5}x^2.
\end{equation}
For $L = 4/5, U = 1$ the lower bound on the expression in the
parenthesis of~\eqref{eq:c24} becomes
\begin{equation}\label{eq:c26}
\begin{split}
\frac{4096}{125}x^{10}+ \frac{90112}{125}x^9+ \frac{1568}{5}x^8+
\frac{10432}{25}x^7+\frac{13656}{25}x^6\\
+\frac{1808}{5}x^5+
\frac{4848}{5}x^4+ 636x^3+ 300x^2.
\end{split}
\end{equation}
The polynomials~\eqref{eq:c25} and~\eqref{eq:c26} are positive
on $(0, 64/100)$ (as is easily checked using Sturm sequences).

If $x \leq y$, then
\begin{multline}\label{eq:c27}
\vec{n}_2(z_1,z_2,z_3)^{\mathrm T} = \\
-\frac{\mu{\mathrm e}^{-x - y}}{8}\Big(
-2x+{\mathrm e}^{2y}-1+2x{\mathrm e}^{2y}-6y{\mathrm e}^{2y}-4xy{\mathrm e}^{2y}\\
+{\mathrm e}^{2x}(4x+2+4x^2+4y{\mathrm e}^{2y})\\
-{\mathrm e}^{2x}(2{\mathrm e}^{2y}+4x{\mathrm e}^{2y}+8x^2y{\mathrm e}^{2y}+8xy{\mathrm e}^{2y}+4x^2{\mathrm e}^{2y})\\
+{\mathrm e}^{4x}(2y{\mathrm e}^{2y}+2x{\mathrm e}^{2y}+{\mathrm e}^{2y})\\
-{\mathrm e}^{4x}(1+4xy{\mathrm e}^{2y}+2x) \Big).
\end{multline}

We can
upper bound the expression in the parenthesis on the right-hand
side of~\eqref{eq:c27} by
\begin{equation}\label{eq:c28}
\begin{split}
-\frac{4}{15}x^2\Big(
{\mathrm e}^{2y}(60y+140xy+170x^2y+8x^5y+84x^3y+156x^4y+4x^5\\
-30x-114x^4-65x^2-112x^3)\\
+30x+65x^2+78x^3+46x^4-8x^5 \Big),
\end{split}
\end{equation}
which is obtained by using~\eqref{eq:exp5} on exponentials involving $x$.
To show that~\eqref{eq:c28} is negative it is enough to show
that the expression in the parenthesis of~\eqref{eq:c28} is
positive.

The polynomial
\begin{equation*}
\begin{split}
60y+140xy+170x^2y+8x^5y+84x^3y+156x^4y+4x^5\\
-30x-114x^4-65x^2-112x^3
\end{split}
\end{equation*}
is lower bounded by
\begin{equation}\label{eq:c29}
\begin{split}
60x+140x^2+170x^3+8x^6+84x^4+156x^5+4x^5\\
-30x-114x^4-65x^2-112x^3,
\end{split}
\end{equation}
since $x \leq y$. \eqref{eq:c29} is positive for $x \in
(0,64/100]$ using Sturm sequences. The polynomial $$30x+65x^2+78x^3+46x^4-8x^5$$ is
positive for $x \in (0,64/100]$ using Sturm sequences. Hence the expression in the
parenthesis of ~\eqref{eq:c28} is positive.
\item[Case 4: $x_2\geq x_1\geq x_3$, $y_1\geq y_2\geq y_3$.] This case is equivalent to case 2.
\item[Case 5: $x_2\geq x_1\geq x_3$, $y_2\geq y_1\geq y_3$.]
We have
\begin{eqnarray*}
 z_1 & = & z{\mathrm e}^{x+y},\\
 z_2 & = & \cosh(x+y),\\
 z_3 & = & \sinh(x+y).
\end{eqnarray*}
If $x+y \leq w$, then we are done, since $z_2 \geq z_1 \geq z_3$ and $(z_1,z_2,z_3) \in \partial T$.
We next assume that $x+y = w+b$ for some $b \in (0,w]$. Hence, $z_1 \geq z_2 \geq z_3$. Note that this means that $(z_1,z_2,z_3)$ can only intersect $\alpha(\cosh(t),t\mathrm{e}^t,\sinh(t))$, for $\alpha \geq 0$ and $t \in [0,w]$.

The normal of the surface
$$
\alpha(\cosh(t),t{\mathrm e}^t,\sinh(t)) \quad \alpha \geq 0, t \in [0,w]
$$
at $\alpha=\mu$ and $t=a$ is $\vec{n}_3=\frac{\mu}{2} (-{\mathrm e}^{2a}+1+2a, -2, {\mathrm e}^{2a}+2a+1)$. By Lemma~\ref{lem:normal}, we have for every $t \in [0,w]$, $\vec{n}_3 (\cosh(t),t{\mathrm e}^t,\sinh(t)) \leq 0$. To show that $(z_1,z_2,z_3)$ does not intersect $\alpha(\cosh(t),t{\mathrm e}^t,\sinh(t))$ we will consider two cases (depending on the value of $b$) and use a different $\vec{n}_3$ in each case. For $b \in (0,1/10]$ we use $\vec{n}_3$ with $a=w$ and for $b \in [1/10,w]$ we use $a=1/2$.

We have
\begin{multline}\label{eq:c51}
\vec{n}_3(z_1,z_2,z_3)^{\mathrm T} = \\
\frac{\mu{\mathrm e}^{-w - b}}{4(2w-1)}\Big({\mathrm e}^{2b}(2w+{\mathrm e}^{2a}+2b+2a+4ab+4aw)\\
-{\mathrm e}^{2b}(2b{\mathrm e}^{2a}+1+2w{\mathrm e}^{2a})\\
-4aw+2a+{\mathrm e}^{2a}-6w+3-2w{\mathrm e}^{2a}
\Big),
\end{multline}
where we use the equation ${\mathrm e}^{2w}=1/(2w-1)$.

We lower bound the expression in the parenthesis on the right-hand side
of~\eqref{eq:c51} by
\begin{equation}\label{eq:c52}
\begin{split}
\frac{1}{3}\Big(
b^5(8a-8{\mathrm e}^{2a}+4)\\
+b^4(4+4w-6{\mathrm e}^{2a}+8aw-8w{\mathrm e}^{2a}+20a)\\
+b^3(32a+8+8w-8w{\mathrm e}^{2a}+16aw-8{\mathrm e}^{2a})\\
+b^2(6+36a+24aw+12w-6{\mathrm e}^{2a}-12w{\mathrm e}^{2a})\\
+b(-12w{\mathrm e}^{2a}+24a+24aw+12w)\\
+6+6{\mathrm e}^{2a}+12a-12w{\mathrm e}^{2a}-12w
\Big),
\end{split}
\end{equation}
which is obtained by using~\eqref{eq:exp4}
on ${\mathrm e}^{2b}$.

If $a=w$, then by applying ${\mathrm e}^{2w}=1/(2w-1)$, \eqref{eq:c52} is equal to
\begin{equation}\label{eq:c53}
\begin{split}
\frac{2b}{3(2w-1)}\Big(
b^4(8w^2-6)\\
+b^3(20w^2-5+8w^3-12w)\\
+b^2(32w^2+16w^3-8-16w)\\
+b(-6+24w^3+36w^2-24w)\\
-24w+24w^2+24w^3
\Big).
\end{split}
\end{equation}

By lower bounding the polynomials involving $w$ in~\eqref{eq:c53}, we further lower bound the expression in the parenthesis of~\eqref{eq:c53} by
\begin{equation*}
\frac{1}{100}\Big(
-274b^4-241b^3-98b^2-37b+73
\Big),
\end{equation*}
which is positive when $b \in (0,1/10]$, using Sturm sequences.

If $a=1/2$, then \eqref{eq:c52} is equal to
\begin{equation}\label{eq:c55}
\begin{split}
\frac{2}{3}\Big(
b^5(-4{\mathrm e}+4)\\
+b^4(-4{\mathrm e}w+4w+7-3{\mathrm e})\\
+b^3(12+8w-4{\mathrm e}w-4{\mathrm e})\\
+b^2(12w-6{\mathrm e}w+12-3{\mathrm e})\\
+b(6+12w-6{\mathrm e}w)\\
6+3{\mathrm e}-6w-6{\mathrm e}w
\Big).
\end{split}
\end{equation}

By lower bounding the polynomial involving $w$ and ${\mathrm e}$, we further lower bound the expression in the parenthesis of~\eqref{eq:c55} by
\begin{equation*}
\frac{1}{100}\Big(
-688b^5-555b^4-71b^3+109b^2+324b-11
\Big),
\end{equation*}
which is positive when $b \in [1/10,64/100]$, using Sturm sequences.
\item[Case 6: $x_2\geq x_1\geq x_3$, $y_2\geq y_3\geq y_1$.]
We have
\begin{eqnarray*}
z_{1} &=& x{\mathrm e}^{x}\cosh(y)+xy{\mathrm e}^{x+y}+\cosh(x)\sinh(y)+\sinh(x)\sinh(y),\\
z_{2} &=& \cosh(x)\cosh(y)+\sinh(x)y{\mathrm e}^{y},\\
z_{3} &=& \cosh(x)y{\mathrm e}^{y}+\sinh(x)\cosh(y).
\end{eqnarray*}
By normalizing $z_1,z_2,z_3$ to $z_1+z_2+z_3=1$, we have
\begin{eqnarray*}
z_{1} &=& \frac{2x{\mathrm e}^{y}+2x{\mathrm e}^{-y}+4xy{\mathrm e}^{y}+2{\mathrm e}^{y}-2{\mathrm e}^{-y}}{4{\mathrm e}^{y}+4y{\mathrm e}^{y}+2x{\mathrm e}^{y}+4xy{\mathrm e}^{y}+2x{\mathrm e}^{-y}},\\
z_{2} &=& \frac{{\mathrm e}^{y}+{\mathrm e}^{-y}+{\mathrm e}^{y-2x}+{\mathrm e}^{-2x-y}+2y{\mathrm e}^{y}-2y{\mathrm e}^{y-2x}}{4{\mathrm e}^{y}+4y{\mathrm e}^{y}+2x{\mathrm e}^{y}+4xy{\mathrm e}^{y}+2x{\mathrm e}^{-y}},\\
z_{3} &=& \frac{2y{\mathrm e}^{y}+2y{\mathrm e}^{y-2x}+{\mathrm e}^{y}+{\mathrm e}^{-y}-{\mathrm e}^{y-2x}-{\mathrm e}^{-2x-y}}{4{\mathrm e}^{y}+4y{\mathrm e}^{y}+2x{\mathrm e}^{y}+4xy{\mathrm e}^{y}+2x{\mathrm e}^{-y}}.
\end{eqnarray*}

We are going to use Lemma~\ref{lem:jacob} to prove that $(z_{1},z_{2},z_{3}) \in T$. We first argue that the image of the boundary (on the boundary at least one of $x=0,x=w,y=0$, and $y=w$ is satisfied) is simple and is in $T$.

The image of $x=0$ is the curve $C_4$:
\begin{equation*}
z_1 = \frac{\sinh(y)}{(y+1){\mathrm e}^y}, \quad
z_2 = \frac{\cosh(y)}{(y+1){\mathrm e}^y}, \quad
z_3 = \frac{y{\mathrm e}^y}{(y+1){\mathrm e}^y}, \quad \mbox{ for } y \in [0,w].
\end{equation*}
Note that $C_4 \subseteq \partial T$. 

The image of $y=0$ is the curve $C_5$:
\begin{equation*}
z_1 = \frac{x{\mathrm e}^x}{(x+1){\mathrm e}^x}, \quad
z_2 = \frac{\cosh(x)}{(x+1){\mathrm e}^x}, \quad
z_3 = \frac{\sinh(x)}{(x+1){\mathrm e}^x}, \quad \mbox{ for } x \in [0,w].
\end{equation*}
Note that $C_5 \subseteq \partial T$. The only intersection of $C_4$ and $C_5$ is point $(0,1,0)$ which is the end point of both $C_4$ and $C_5$. 

The image of $x=w$ is the curve $C_6$:
\begin{eqnarray}
z_1 &=& \frac{\sinh(y)+w(\cosh(y)+y{\mathrm e}^y)}{(1+y){\mathrm e}^y + w(\cosh(y)+y{\mathrm e}^y)},\label{eq:dc6_1}\\
z_2 &=& \frac{y{\mathrm e}^y+w(\cosh(y)-y{\mathrm e}^y)}{(1+y){\mathrm e}^y + w(\cosh(y)+y{\mathrm e}^y)},\label{eq:dc6_2}\\
z_3 &=& \frac{\cosh(y)-w(\cosh(y)-y{\mathrm e}^y)}{(1+y){\mathrm e}^y + w(\cosh(y)+y{\mathrm e}^y)},\label{eq:dc6_3}
\end{eqnarray}
for $y \in [0,w]$. The fact that $C_6 \subseteq T$ and that $C_6$ and $C_4$ are disjoint follows known from case 3. Note that in $C_5$, $z_1 \leq w/(1+w)$. Also note that in~\eqref{eq:dc6_1}, $z_1$ is monotonically increasing and $z_1 \geq w/(1+w)$. Hence, $C_6$ is simple and the only intersection of $C_5$ and $C_6$ is the end point of both $C_5$ and $C_6$. 

The image of $y=w$ is the curve $C_7$:
\begin{equation*}
z_1 = \frac{2wx+1-w}{2wx+1+w}, \quad
z_2 = \frac{w}{2wx+1+w}, \quad
z_3 = \frac{w}{2wx+1+w},
\end{equation*}
for $x \in [0,w]$. We have $C_7 \subseteq T$ since $z_2=z_3$ and $z_1 \geq (1-w)/(1+w)$. The curves $C_7$ and $C_5$ are disjoint. From~\eqref{eq:dc6_2} and~\eqref{eq:dc6_3}, we know that in $C_6$, $z_2 > z_3$ for $y \in [0,w)$. Thus the only intersection of $C_6$ and $C_7$ is the end point of both $C_6$ and $C_7$. We established that the image of the boundary (which is concatenation of $C_4$, $C_5$, $C_6$, and $C_7$) is a simple curve.

We next claim that the Jacobian determinant does not vanish when $x,y \in (0,w)$. Converting from barycentric coordinates we obtain $(Z_{1},Z_{2})$:
\begin{eqnarray*}
Z_{1} &=& \frac{- \sqrt{3}{\mathrm e}^{-2x+y} - \sqrt{3}{\mathrm e}^{-2x-y} +2\sqrt{3} y {\mathrm e}^{-2 x+y} }{2(2 {\mathrm e}^y+2 y
{\mathrm e}^y+x {\mathrm e}^y +2 x y{\mathrm e}^y +x {\mathrm
e}^{-y})},\\
Z_{2} &=& \frac{2 x {\mathrm e}^y +2 x {\mathrm e}^{-y}+4 x y {\mathrm
e}^y +{\mathrm e}^y-3 {\mathrm e}^{-y}-2 y {\mathrm e}^y}{2(2 {\mathrm e}^y+2 y
{\mathrm e}^y+x {\mathrm e}^y +2 x y{\mathrm e}^y +x {\mathrm
e}^{-y})}.
\end{eqnarray*}

\begin{equation*}
\begin{split}
{\rm det}\left(%
\begin{array}{cc}
  \partial Z_1/\partial x & \partial Z_2/\partial x \\
  \partial Z_1/\partial y & \partial Z_2/\partial y
\end{array}%
\right)=
\frac{-6\sqrt{3}{\mathrm e}^{-2x-y} (4y^2 {\mathrm e}^{2y}+4y{\mathrm e}^{2y}+{\mathrm e}^{4y}-1)}
{(2{\mathrm e}^{y}+2y{\mathrm e}^{y}+x{\mathrm e}^{y}+2xy{\mathrm e}^{y}+x{\mathrm e}^{-y})^3}.
\end{split}
\end{equation*}

For $y\in(0,w)$, we have
$$
4y^2 {\mathrm e}^{2y}+4y{\mathrm e}^{2y}+{\mathrm e}^{4y}-1 > 0,
$$
since $\mathrm{e}^{4y} > 1$ for $y > 0$.

By Lemma~\ref{lem:jacob}, we have that $(z_1,z_2,z_3) \in T$.
\item[Case 7: $x_2\geq x_3\geq x_1$, $y_1\geq y_2\geq y_3$.] This case is equivalent to case 3.
\item[Case 8: $x_2\geq x_3\geq x_1$, $y_2\geq y_1\geq y_3$.] This case is equivalent to case 6.
\item[Case 9: $x_2\geq x_3\geq x_1$, $y_2\geq y_3\geq y_1$.]
We have
\begin{eqnarray*}
 z_1 & = &\sinh(x+y)+y{\mathrm e}^y\sinh(x)+x{\mathrm e}^x\sinh(y),\\
 z_2 & = & xy {\mathrm e}^{x+y} + \cosh(x)\cosh(y),\\
 z_3 & = & x{\mathrm e}^x\cosh(y) + y{\mathrm e}^y\cosh(x).
\end{eqnarray*}
We also have $z_2\geq z_3$, this is because
$$
z_2-z_3=(\cosh(x)-x\mathrm{e}^x)(\cosh(y)-y\mathrm{e}^y) \geq 0.
$$

We first show that $(z_1,z_2,z_3)$ does not extend beyond the boundary defined by $\alpha(\sinh(t),\cosh(t),t{\mathrm e}^t)$ for $\alpha \geq 0, t \in [0,w]$. W.l.o.g., we assume that $x\leq y$. The normal of the surface
$$
\alpha(\sinh(t),\cosh(t),t{\mathrm e}^t) \quad \alpha \geq 0, t \in [0,w]
$$
at $\alpha=\mu$ and $t=y$ is $\vec{n}_4=\frac{\mu}{2} (2y+{\mathrm e}^{2y}+1, -{\mathrm e}^{2y}+1+2y, -2)$. By Lemma~\ref{lem:normal}, we have for every $t \in [0,w]$, $\vec{n}_4 (\sinh(t),\cosh(t),t{\mathrm e}^t) \leq 0$. To show $(z_1,z_2,z_3) \in T$ it is sufficient to show that $\vec{n}_4 (z_1,z_2,z_3)^{\mathrm T} > 0$ for every $x,y \in (0,w]$.

\begin{multline}\label{dtt}
\vec{n}_4(z_1,z_2,z_3)^{\mathrm T} = \\
-\frac{\mu{\mathrm e}^{-x-y}}{8}\Big(
{\mathrm e}^{2x}(
4{\mathrm e}^{4y}xy
+4{\mathrm e}^{2y}x
+4xy
+6x
)\\
-{\mathrm e}^{2x}(
{\mathrm e}^{4y}
+2{\mathrm e}^{4y}y
+2{\mathrm e}^{4y}x
+4{\mathrm e}^{2y}y
+4{\mathrm e}^{2y}y^2
+8{\mathrm e}^{2y}xy^2
+8{\mathrm e}^{2y}xy
+2{\mathrm e}^{2y}
+2y
+1
)\\
+2{\mathrm e}^{4y}y
+{\mathrm e}^{4y}
+4{\mathrm e}^{2y}y^2
+4{\mathrm e}^{2y}y
+2{\mathrm e}^{2y}
+2y
+1
\Big).
\end{multline}
We can upper bound
the expression in the parenthesis on the right-hand side of~\eqref{dtt}
by
\begin{equation}\label{dtt2}
\begin{split}
-2x\Big(
{\mathrm e}^{4y}(-8x^2y+2x^2-2xy+3x+2)\\
+{\mathrm e}^{2y}(8x^2y^2+8x^2y+12xy^2-8x^2+12xy+8y^2-2x+8y)\\
-8x^2y-12x^2-2xy-5x-2\Big),
\end{split}
\end{equation}
which is obtained using~\eqref{eq:exp2}
on ${\mathrm e}^{2x}$.

To show that~\eqref{dtt2} is negative it is enough to show that the expression in
the parenthesis of~\eqref{dtt2} is positive.

The polynomial $$-8x^2y+2x^2-2xy+3x+2$$
is positive for $x,y\in (0,w]$, this can be seen by plugging-in $y=64/100$ (since  $y$'s
occur only with negative coefficients) and using Sturm sequences. The polynomial
$$8x^2y^2+8x^2y+12xy^2-8x^2+12xy+8y^2-2x+8y$$ is positive for $x,y\in (0,w]$, $x\leq y$
since $8y\geq 2x$ and $8y^2\geq 8x^2$. Hence we can lower bound the expression
in the parenthesis of~\eqref{dtt2} by
\begin{equation}\label{dtt3}
\begin{split}
16x^2y^3-8x^2y^2+24xy^3-16x^2y+28xy^2+16y^3\\
-18x^2+16xy+24y^2-4x+16y,
\end{split}
\end{equation}
where we used~\eqref{eq:exp1} to lower bound $\mathrm{e}^{2y}$ and $\mathrm{e}^{4y}$.

We can bound~\eqref{dtt3} from below by
\begin{equation*}
\begin{split}
0 y^3-8y^4+0y^3-16y^3+0y^2+16y^3-18y^2+0y+24y^2-4y+16y\\
 = -8y^4+6y^2+12y,
\end{split}
\end{equation*}
which is positive for $y\in (0,w]$, using Sturm sequences.

Next we show that $(z_1,z_2,z_3) \in T$. If $y=w$, then $y_2=y_3=w\mathrm{e}^w$ and $y_1=\sinh(w)$. Hence, $z_2=z_3$ and
\begin{equation*}
\frac{z_2}{z_1+z_2+z_3}=\frac{w\mathrm{e}^w(x\mathrm{e}^x+\cosh(x))}{(1+x)\mathrm{e}^x(1+w)\mathrm{e}^w-\sinh(x)\sinh(w)} \leq \frac{w}{1+w} = \frac{y_2}{y_1+y_2+y_3}.
\end{equation*}
Then we have $(z_1,z_2,z_3) \in T$.

Fix $y \in (0,w)$. Define a mapping 
\begin{equation}\label{eq:m00}
\vec{s} \mapsto \frac{1}{y_1+y_2+y_3-\vec{s}_1 y_1}\vec{s}\left(%
\begin{array}{ccc}
  y_2+y_3 & 0 & 0 \\
  y_1 & y_2 & y_3 \\
  y_1 & y_3 & y_2 \\
\end{array}%
\right),
\end{equation}
where $\vec{s} \in S$. Note that the matrix in~\eqref{eq:m00} is non-sigular (and hence the mapping is injective; to obtain the preimage of a vector we multiply by the inverse of the matrix on the right and normalize the entries to sum to one). Thus, we have that the image (under the mapping~\eqref{eq:m00}) of a simple curve is a simple curve.

Let $R$ contains the vectors $\vec{r} \in S$ such that $\vec{r}_2 \geq \vec{r}_3$. The boundary of $R$ is a simple curve which is the concatenation of the following four curves:
\begin{eqnarray*}
C_8:&&\left(\frac{\cosh(a)}{(1+a)\mathrm{e}^a},\frac{a\mathrm{e}^a}{(1+a)\mathrm{e}^a},\frac{\sinh(a)}{(1+a)\mathrm{e}^a}\right), \quad a \in [0,w],\\
C_9:&&\left(\frac{a\mathrm{e}^a}{(1+a)\mathrm{e}^a},\frac{\cosh(a)}{(1+a)\mathrm{e}^a},\frac{\sinh(a)}{(1+a)\mathrm{e}^a}\right), \quad a \in [0,w],\\
C_{10}:&&\left(\frac{\sinh(a)}{(1+a)\mathrm{e}^a},\frac{\cosh(a)}{(1+a)\mathrm{e}^a},\frac{a\mathrm{e}^a}{(1+a)\mathrm{e}^a}\right), \quad a \in [0,w],\\
C_{11}:&&(1-2a,a,a), \quad a \in [0,w/(1+w)].
\end{eqnarray*}
By case 3 and case 6, we have the image of $C_8$ and $C_9$ is in $R$ and is a simple curve connecting $(1,0,0)$ and 
$$\frac{1}{y_1+y_2+y_3}(y_1,y_2,y_3) \in C_{10}.$$ Hence, the image of $C_8$ and $C_9$ divides $R$ into two regions $R_1$ and $R_2$. We assume w.l.o.g. that $C_{11} \subseteq R_2$. The image of $C_{11}$ is the segment
$$
(1-2a,a,a), \quad a \in \left[0,\frac{(y_2+y_3)w}{2wy_1+(y_2+y_3)(w+1)}\right],
$$
which is a portion of $C_{11}$ and thus is in $R_2$.
Hence, the image of $C_{10}$ is a curve connecting
$$\frac{1}{y_1+y_2+y_3}(y_1,y_2,y_3)$$ and a vector on $C_{11}$. Since the image of $C_{10}$ does not extend beyond $C_{10}$, the image of $C_{10}$ is in $R_2$. Hence, we showed that the image of $C_{10}$ is in $S$.
\end{description}

\end{proof}

\begin{proof}[Proof of Theorem~\ref{thm:gg_neg}]
For the sake of contradiction, we suppose there are two vectors $\vec{s},\vec{t} \in S$ such that after the $2$-glue, the new vector $\vec{r}$ is not in $S$.

By using $\vec{s}_3=1-\vec{s}_1-\vec{s}_2$ and $\vec{t}_3=1-\vec{t}_1-\vec{t}_2$, \eqref{eq:new1}--\eqref{eq:new2} can be simplified as
\begin{eqnarray}
\vec{r}_1 &=& \frac{\vec{s}_1+\vec{t}_1-2\vec{s}_1\vec{t}_1}{1-\vec{s}_1\vec{t}_1},\label{eq:tn1}\\
\vec{r}_2 &=& \frac{\vec{s}_2+\vec{t}_2-\vec{t}_1\vec{s}_2-\vec{s}_1\vec{t}_2-2\vec{s}_2\vec{t}_2}{1-\vec{s}_1\vec{t}_1}.\label{eq:tn2}
\end{eqnarray}
Note that $\vec{r}_1,\vec{r}_2$ can be viewed as functions of $\vec{s}_1,\vec{s}_2$, and the Jacobian determinant is
\begin{equation}\label{eq:det11}
{\rm det}\left(%
\begin{array}{cc}
  \partial \vec{r}_1/\partial \vec{s}_1 & \partial \vec{r}_2/\partial \vec{s}_1 \\
  \partial \vec{r}_1/\partial \vec{s}_2 & \partial \vec{r}_2/\partial \vec{s}_2
\end{array}%
\right)=\frac{(\vec{t}_1-1)^2(\vec{t}_1+2\vec{t}_2-1)}{(1-\vec{s}_1\vec{t}_1)^3}.
\end{equation}
Note that when $\vec{t}_1 \neq 1$ and $\vec{t}_1+2\vec{t}_2 \neq 1$, \eqref{eq:det11} is non-zero for all $\vec{s}_1,\vec{s}_2 \in (0,1)$. (If $\vec{r}_1$, $\vec{r}_2$ are viewed as functions of $\vec{t}_1$, $\vec{t}_2$, one obtains the same expression for the Jacobian determinant (with the roles of $\vec{s}_1$, $\vec{s}_2$ and $\vec{t}_1$, $\vec{t}_2$ switched).)

W.l.o.g., we assume that $\vec{r}_1 < \vec{r}_2$ and $\vec{r}_1 < \vec{r}_3$. Starting from $\vec{r}$, we move $\vec{r}$ in the direction $\Delta \vec{r}=(-1,0,0)$. Note that this keeps $\vec{r} \not \in S$. We can always move $\vec{r}$ (in the direction $\Delta \vec{r}$) by moving $\vec{s}$ until $\vec{s}$ hits the boundary $\partial S$ or the Jacobian determinant (of $\vec{r}_1$, $\vec{r}_2$ viewed as functions of $\vec{s}_1$, $\vec{s}_2$) is zero. Then we move $\vec{r}$ (again in the direction $\Delta \vec{r}$) by moving $\vec{t}$ until $\vec{t}$ hits the boundary $\partial S$ or the Jacobian determinant (of $\vec{r}_1$, $\vec{r}_2$ viewed as functions of $\vec{t}_1$, $\vec{t}_2$) is zero. In this way, we find vectors $\vec{s}'$ and $\vec{t}'$ such that $\vec{r}' \not\in S$ and either both $\vec{s}'$ and $\vec{t}'$ are on $\partial S$ or the Jacobian determinant of $\vec{s}'$ or $\vec{t}'$ is zero. We next show that this can not happen.

The case that both $\vec{s}'$ and $\vec{t}'$ are on $\partial S$ cannot happen because of Lemma~\ref{lem:boundaries}.

Now we assume that the Jacobian determinant of $\vec{s}'$ or $\vec{t}'$ vanishes. W.l.o.g., we assume that $\vec{s}'_1=1$ or $\vec{s}'_1+2\vec{s}'_2=1$.
If $\vec{s}'_1 = 1$ then $\vec{r}'=(1,0,0) \in S$ which contradicts the assumption that $\vec{r}' \not\in S$. If $\vec{s}'_1+2\vec{s}'_2=1$, then we have $\vec{s}'_2=\vec{s}'_3$. By~\eqref{eq:new1}--\eqref{eq:new3}, we have $\vec{r}'_2=\vec{r}'_3$ and $\vec{r}'_1 \geq \vec{s}'_1$, which implies that $\vec{r}' \in S$, a contradiction.

This completes the proof.

\end{proof}

\bibliographystyle{plain}
\bibliography{bibfile}

\end{document}